\theoremstyle{plain}
\newtheorem{theorem}{Theorem}[section]
\newtheorem{proposition}[theorem]{Proposition}
\newtheorem{lemma}[theorem]{Lemma}
\theoremstyle{definition}
\newtheorem{definition}[theorem]{Definition}
\newtheorem{remark}[theorem]{Remark}
\newtheorem{example}[theorem]{Example}
\newtheorem{assumption}[theorem]{Assumption}
\theoremstyle{remark}
\numberwithin{equation}{section}
\newcommand{\RR}{\mathbb{R}}
\newcommand{\NN}{\mathbb{N}}
\newcommand{\diag}{\mathrm{diag}}
\newcommand{\ol}{\overline}
\newcommand{\cA}{\mathcal{A}}
\definecolor{pipurple}{RGB}{128,0,128}
\definecolor{vargreen}{RGB}{0,150,0}
\definecolor{varyellow}{RGB}{180,120,0}
\newcommand{\ones}{\mathbf{1}}
\newcommand{\diff}{\mathrm{d}}
\newcommand{\dd}{\,\mathrm{d}}
\renewcommand{\epsilon}{\varepsilon}
\newcommand{\id}{\mathrm{id}}
\begin{document}

\title{Liquidity Provision with Adverse Selection and Inventory Costs}

\author{
Martin Herdegen\thanks{University of Warwick, Department of Statistics, \url{m.herdegen@warwick.ac.uk}.}
\and
Johannes Muhle-Karbe\thanks{Imperial College London, Department of Mathematics, email \url{j.muhle-karbe@imperial.ac.uk}.  Research supported by the CFM-Imperial Institute of Quantitative Finance.}
\and
Florian Stebegg\thanks{Columbia University, Department of Statistics, email \url{florian.stebegg@columbia.edu}.}
}

\date{July 26, 2021}

\maketitle

\begin{abstract}
We study one-shot Nash competition between an arbitrary number of identical dealers that compete for the order flow of a client. The client trades either because of proprietary information, exposure to idiosyncratic risk, or a mix of both trading motives. When quoting their price schedules, the dealers do not know the client's type but only its distribution, and in turn choose their price quotes to mitigate between adverse selection and inventory costs. Under essentially minimal conditions, we show that a unique symmetric Nash equilibrium exists and can be characterized by the solution of a nonlinear ODE.
\end{abstract}

\bigskip
\noindent\textbf{Mathematics Subject Classification: (2010)} 91A15, 91B26, 91B54, 49J55

\bigskip
\noindent\textbf{JEL Classification:}  C61, C72, C78, G14

\bigskip
\noindent\textbf{Keywords:} liquidity provision, Nash competition, adverse selection, inventory costs

\section{Introduction}

Trades in financial markets are typically executed either to profit from superior information or for idiosyncratic ``liquidity reasons'', e.g., to offload certain risks by hedging. 
 As succinctly summarized by~\cite{treynor.71}, ``the essence of market making, viewed as a business, is that in order for the market maker to survive and prosper, his gains from liquidity-motivated transactors must exceed his losses to information motivated transactions.'' Designated market makers have mostly been replaced by either centralized limit-order books or OTC markets comprised of discretionary dealers. Nevertheless, this basic tradeoff between exposure to adverse selection through counterparties with superior information and rents earned by servicing liquidity trades remains at the heart of liquidity provision.  

The other main risk that liquidity providers are exposed to is ``inventory risk'', incurred from ``uncertainty about the return on their inventory but also from the uncertainty about when future transactions will occur (which affects how long they must bear return uncertainty)'' \citep{ho.stoll.81}. 

This paper studies several strategic liquidity providers (``dealers''), who compete for the business of a liquidity-taking ``client'' by quoting price schedules, which indicate at what prices the dealers stand ready to fill orders of different sizes. As in the seminal works of~\cite*{biais.al.00,biais.al.13,back.baruch.13}, we assume that clients trade either because they have private information about the payoff of the risky asset, for liquidity reasons (because they are exposed to some idiosyncratic risk), or due to a mix of both trading motives. 

When quoting their price schedules, the dealers do not know the clients' type, but can only mitigate their adverse selection risk based on its distribution. In addition, as in \cite*{bielagk.al.19}, the dealers also take into account inventory risk through a quadratic cost on their post-trade positions.\footnote{The case of zero inventory costs studied in \cite{biais.al.00,biais.al.13,back.baruch.13} is natural in markets with very short holding times such as equities or currencies; inventory costs become increasingly important in markets with longer holding times such as (exotic) options. \cite{bielagk.al.19} incorporate convex holding costs into a model with a single dealer and a dark pool; we add such inventory costs to a Nash competition between several strategic dealers as in~\cite{biais.al.00}.}

In this setting, we show that there is a unique symmetric Nash equilibrium, where the dealers' optimal price schedules are characterized by a nonlinear ODE. The corresponding equilibrium prices naturally exhibit a bid-ask spread between the best buying and selling prices; as a result, only clients with sufficiently strong trading motives end up engaging with the dealers. Furthermore, we show that convexity of equilibrium price schedules for several competing dealers is endogenous, and does not have to be assumed a priori as in \cite{glosten.89,biais.al.00,biais.al.13,back.baruch.13}. While markets organized as limit-order books do not allow dealers to offer discounts for large quantities, justifying this assumption, we are therefore able to show that such discounts are also not optimal in other dealer-based markets. In contrast, quantity discounts can be optimal for markets dominated by a monopolist dealer.

We establish existence and uniqueness of a symmetric Nash equilibrium under very weak sufficient conditions on the distribution of clients. To wit, we consider general log-concave distributions for the client-type supported on the whole real line. In concrete examples with Gaussian or Exponential distributions, the characteristic ODE can be analyzed directly and optimality can in turn be verified by a direct verification argument~\citep{back.baruch.13}. We show that a unique solution of the ODE exists in general, despite the lack of natural boundary conditions, complementing results of~\cite{biais.al.00,biais.al.13} for type distributions with compact support. Our proof of existence is constructive in that it puts forward explicit upper and lower solutions. Using these as boundary conditions, standard ODE solvers for uniformly Lipschitz equations directly yield very tight upper and lower bounds of the optimal price schedule on any finite interval.

As pointed out by~\cite{back.baruch.13,biais.al.13}, a well-behaved solution of the characteristic ODE (derived from the dealers' first-order conditions) is generally \emph{not} sufficient to produce a Nash equilibrium. Instead, for risk-neutral dealers, sufficiently strong adverse selection by informed clients is required to guarantee that the solution of the ODE indeed leads to an equilibrium. We show that the dealers' inventory costs (relative to the clients') serve as a partial substitute for this, in that the solution of the ODE leads to an equilibrium \emph{if} a combination of adverse selection and inventory costs is sufficiently large. Sharp conditions are expressed in terms of the solution of the characteristic ODE; sufficient conditions in terms of primitives of the model can be derived from explicit upper and lower solutions. In the special case of Gaussian client-type distributions and risk-neutral dealers, this recovers the adverse-selection bounds of~\cite{back.baruch.13}.

From a mathematical perspective, the main challenges are that we do not work on compact type intervals as in \cite{biais.al.00,biais.al.13}, and that we do not make any convexity assumptions, neither for the clients' problem as in \cite{back.baruch.13} nor the dealer's problem as in \cite{biais.al.00}.  Also, we cannot rely on special properties of the Gaussian distribution as in \cite{back.baruch.13}.
	
For the client's problem, the lack of compactness implies that existence and continuity properties of the optimizer do not follow from Berge's theorem. Since -- unlike the extant literature -- we do not assume convexity a priori (but prove it for the oligopolistic case), establishing sufficient and necessary conditions for price schedules to be admissible is rather delicate. As a remedy, we combine the precise structure of the aggregated goal function (which is linear in the type variable of the client) with the a priori $C^2$-regularity of the goal function in the optimisation variable, and the necessary and sufficient first-order conditions of interior optimizers to gradually derive more and more structure for the optimizer as a function of the type variable.

For the dealers' problem, the lack of compactness implies that the candidate ODE in the oligopolistic case has no boundary conditions. To obtain existence and to apply numerical schemes, we therefore need to construct explicit upper and lower solutions with appropriate properties. Combining these with a taylor-made Gr\"onwall estimate in turn also allows to prove uniqueness.  The lack of convexity for the dealer's goal functional implies that to verify optimality (both in the monopolistic and the oligopolistic case), we cannot rely on second-order conditions but rather have to work with first-order conditions which are more general but delicate to establish, in particular, since the optimal price schedules exhibit a nontrivial bid-ask spread.

This article is organized as follows. In Section~\ref{s:client}, we study the liquidity-taking clients' problem, taking the price schedules quoted by the dealers as given. With the clients' optimal demand function at hand, we then turn to the dealers' problem in Section~\ref{s:nash}. As a benchmark, we first study in Section~\ref{ss:mon} the (most tractable) case of a single monopolistic dealer. Then, in Section~\ref{ss:oli}, we analyze the symmetric Nash competition between an arbitrary number of identical dealers. For better readability, all proofs are collected in the appendices.

\section{The Client's Problem}\label{s:client}

We consider $K \geq 1$ symmetric dealers who compete for the orders of a client in a Stackelberg-type game: the dealers ``lead'' by quoting price schedules that describe at what prices they are willing to trade various quantities. The client ``follows'' by choosing her optimal trade sizes.

As is customary for Stackelberg equilibria, we start by analyzing the optimal response function of the ``follower''. In the present context, this means that we first focus on the client and study her optimal trade sizes given the price schedules quoted by the dealers. This analysis needs to be carried out for heterogenous price schedules, even though we focus on identical dealers who quote the same price schedules in equilibrium. The reason is the Nash competition between the dealers that we will analyze in Section~\ref{s:nash}: for a Nash equilibrium, one needs to verify that unilateral deviations are suboptimal. Accordingly, the response function of the client is required for heterogeneous off-equilibrium price schedules, even if the equilibrium itself is symmetric.

For the client's problem, the prices quoted by the dealers are fixed. To wit, dealer $k \in \{1,\ldots,K\}$ quotes a \emph{price schedule} $P_k: \RR \to \RR$, i.e., a continuous function that satisfies $P_k(0)=0$ and is twice continuously differentiable on $\RR \setminus \{0\}$.\footnote{This means that prices are a smooth function of trade sizes except for a potential ``bid-ask spread'' between buying and selling prices. The equilibrium we construct in Section~\ref{s:nash} will be of this form.} This means that $n$ shares of the risky asset can be bought from dealer $k$ for $P_k(n)$ units of cash.  The client has inventory costs $\gamma_c>0$, an initial position $M=m$ in the risky asset, and receives a \emph{signal} $S=s$ about its payoff $V$:
$$V=S+\varepsilon.$$
Here, the mean-zero error term $\varepsilon$, the signal $S$, and the client's initial position $M$ are independent. The client observes the realizations $s$ and $m$ of her signal $S$ and inventory $M$; in contrast, the dealers only know the distributions of these random variables. Setting 
$$\mathbf{P}=(P_1,\ldots,P_K),$$
the client then chooses her trades 
$$\mathbf{n}=(n_1,\ldots,n_K)$$ 
to maximize her expected profits penalized for quadratic inventory costs $\gamma_c>0$ (if the error term $\varepsilon$ is normally distributed as in~\cite{biais.al.00}, then one could equivalently assume that the client maximizes her expected exponential utility):
\begin{equation}\label{eq:client0}
\bar{J}_c^{\bf P}(s,m;\mathbf{n}):=s\left(m + \sum_{k =1}^K n_k\right) - \sum_{k =1}^K P_k(n_k)- \frac{\gamma_c}{2} \left(m+ \sum_{k =1}^K n_k\right)^2.
\end{equation}
Here, the first term is the expected payoff of the client's post-trade position, conditional on her information set $M=m$ and $S=s$. The second term collects the cash payments to the dealers, and the last term describes the inventory cost of the client's post-trade position. 

We make the following standing assumption on the distribution of type variables;  the standard example are normal distributions as in, e.g., \cite{glosten.89}, but many other common distributions such as two-sided exponential or gamma distributions also fall into this framework. In the appendix, we prove our results under weaker but less intuitive conditions; these also allow to cover other distributions, e.g., of Pareto type.\footnote{Distributions with compact support are studied in \cite{biais.al.00} or \cite{cetin.21}, for example; discrete types are analyzed by \cite{attar.al.19}.}

\begin{assumption}\label{ass:type}
The distributions of the type variables $M$ and $S$ have positive densities $f_m$ and $f_S$  on $\RR$ that are \emph{log-concave}, i.e., $\log(f_M)$ and $\log(f_S)$ are concave functions
\end{assumption}

While the client's problem apparently depends on the two type variables $M=m$ and $S=s$, their impact on the client's problem can in fact be summarized by a single ``effective'' type as already observed by~\cite{glosten.89}:
$$
Y:=S-\gamma_c M.
$$
Indeed, for each realization $Y=y=s-\gamma_c m$, maximizing~\eqref{eq:client0} over $\mathbf{n}=(n_1,\ldots,n_K)$ is equivalent to maximizing the ``normalized'' goal functional
\begin{align}
J_c^{\bf P}(y; {\bf n}) &:= \bar{J}^{\bf P}_c(s,m;\mathbf{n})-sm-\frac{\gamma_c}{2}m^2  =y\sum_{k =1}^K n_k-\sum_{k =1}^K P_k(n_k) - \frac{\gamma_c}{2} \left(\sum_{k =1}^K n_k\right)^2.\label{eq:client}
\end{align}
We denote the density of the  effective type variable $Y$ by $f$. Assumption~\ref{ass:type} implies that $f$ is log-concave and continuously differentiable with bounded derivative; see Proposition \ref{prop:log concave:convolution}(a). The corresponding cumulative distribution function and survival function are denoted by
$$
F(x) := \int_{-\infty}^x f(y) \dd y \quad \mbox{and} \quad \ol F(x) = 1 - F(x) = \int_x^\infty f(y) \dd y, \quad x \in \RR.
$$ 

\begin{remark}
The client's optimization only depends on her effective type $Y=S-\gamma_c M$. We nevertheless impose log-concavity on both $S$ and $M$, which implies log-concavity of $Y$, rather than assuming this directly as in \cite{biais.al.00}. This provides an intuitive condition in terms of the primitives of the model -- as pointed out by~\cite{miravete:02}, only assuming log-concavity of either $S$ or $M$ does not guarantee log-concavity of $Y$. Moreover, log-concavity of both signals and inventories already ensures that most other regularity conditions required for the subsequent analysis hold automatically. For example, by Proposition~\ref{prop:g:mon}, log-concavity of the primitives already implies the bounds on the sensitivity of the expected asset payoff conditional on the client's effective type assumed directly in~\cite[p.~807]{biais.al.00}.
\end{remark}

As a minimal requirement for unique symmetric Nash equilibria (with pure strategies), we impose that dealers only quote price schedules $P_k$ for which a symmetric quote $P_1,\ldots,P_K=P$ gives rise to a unique solution of the clients' problem~\eqref{eq:client} for all realizations $y$ of the type $Y$:

\begin{definition}\label{def:adm}
A price schedule $P$ is \emph{admissible (for $K$ dealers)} if -- for every type $y \in \mathbb{R}$ and given that all dealers quote the price schedule $P$ -- the client's goal function 
$$J_c^P(y;\mathbf{n}):=J^{(P,\ldots,P)}(y; \mathbf{n})$$ 
has a unique maximizer ${\bf n}^P(y) = (n^P(y), \ldots, n^P(y))$.\footnote{Note that if $K \geq 2$ and a unique maximizer ${\bf n}^P = (n_1^P(y), \ldots, n_K^P(y))$ exists, then it follows from symmetry of the function $J^P_c(y;n_1, \ldots, n_K)$ in $n_1, \ldots, n_K$ that $n_1^P(y)= \ldots = n_K^P(y)$.}
\end{definition}

The following lemma collects the properties of the client's optimal trade sizes $n^P$ for identical, admissible price schedules quoted by all dealers. 

\begin{lemma}
	\label{lem:adm schedule}
Let $P: \RR \to \RR$ be an admissible price schedule for $K \geq 1$ dealers and denote by 
$$I^P := \left\{y \in \RR: n^P(y) \neq 0\right\}$$
the set of types for which trading is optimal. Then:
\begin{enumerate}
\item The function $y \mapsto n^P(y)$ is continuous;
\item There are $-\infty < a^P \leq b^P <\infty $ such that $I^P = (-\infty, a^P) \cup (b^P, +\infty)$;
\item The function $n^P$ is increasing on $I^P$ with $n^P(I^P) = \RR \setminus \{0\}$ and satisfies
\begin{align*}
\lim_{y \to -\infty} n^P(y) = -\infty,  \quad \lim_{y \uparrow a^P} n^P(y) = 0 =  \lim_{y \downarrow b^P} n^P(y), \quad \text{and} \quad \lim_{y \to +\infty} n^P(y) = \infty.
\end{align*}
\end{enumerate}
\end{lemma}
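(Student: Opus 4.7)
The plan is to deduce the properties of $n^P$ by exploiting that $J_c^P(y;\mathbf{n})$ is affine in $y$ for fixed $\mathbf{n}$, combined with uniqueness of the maximizer guaranteed by admissibility. Introduce the value function $V(y) := \max_{\mathbf{n}\in\RR^K} J_c^P(y;\mathbf{n})$. As a pointwise supremum of affine functions of $y$, $V$ is convex on $\RR$ and hence continuous.

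\textbf{Monotonicity.} For $y_1 < y_2$ set $n_i := n^P(y_i)$. Writing the optimality inequality $V(y_i) = J_c^P(y_i; n_i\ones) \geq J_c^P(y_i; n_{3-i}\ones)$ for $i=1,2$ and adding them, the $P$- and inventory-cost terms cancel, leaving $(y_2 - y_1)K(n_2 - n_1) \geq 0$. Hence $n^P$ is nondecreasing on $\RR$.

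\textbf{Continuity.} By monotonicity, the one-sided limits $n^P(y_0\pm)$ exist at every $y_0$. Taking a sequence $y_k \downarrow y_0$ and passing to the limit in the identity $V(y_k) = J_c^P(y_k; n^P(y_k)\ones)$ --- using continuity of $V$ (convex on $\RR$) and of $J_c^P$ in both variables --- yields $V(y_0) = J_c^P(y_0; n^P(y_0+)\ones)$, so $n^P(y_0+)\ones$ is another maximizer at $y_0$; uniqueness from admissibility forces $n^P(y_0+) = n^P(y_0)$, and left-continuity follows symmetrically. This gives part~(a).

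\textbf{Strict monotonicity and asymptotics.} If $n^* := n^P(y) \neq 0$, then $n^*\ones$ lies in the open set where $J_c^P(y;\cdot)$ is $C^2$, so the first-order condition $\partial_{n_k}J_c^P(y;n^*\ones) = 0$ reads $y = P'(n^*) + K\gamma_c n^*$. Two distinct $y_1 < y_2$ in $I^P$ with $n^P(y_1) = n^P(y_2) = n^* \neq 0$ would both satisfy this equation, a contradiction. For the asymptotics, any fixed $n_0$ gives $V(y) \geq yKn_0 - KP(n_0) - \gamma_c K^2 n_0^2/2$, while $V(y) = yKn^P(y) - KP(n^P(y)) - \gamma_c K^2 n^P(y)^2/2$. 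If $n^\infty := \lim_{y\to\infty} n^P(y) < \infty$, choosing $n_0 > n^\infty$ and subtracting, the right-hand side stays bounded (since $n^P(y)$ ranges in a bounded set on which $P$ is continuous) while the left-hand side $yK(n_0 - n^P(y)) \geq yK(n_0 - n^\infty) \to \infty$; contradiction. Hence $n^P(y)\to+\infty$, and symmetrically $n^P(y)\to-\infty$ as $y\to-\infty$. Continuity, monotonicity, and the intermediate value theorem then make $\{n^P = 0\}$ a nonempty closed bounded interval $[a^P,b^P]$ with $-\infty < a^P \leq b^P < \infty$, so $I^P = (-\infty,a^P)\cup(b^P,\infty)$; the stated boundary limits follow from continuity of $n^P$ with $n^P(a^P) = n^P(b^P) = 0$ and the $\pm\infty$ asymptotics. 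The most delicate step is the continuity argument, which sidesteps the absence of compactness of the optimizer set by pairing continuity of the convex value function with the admissibility hypothesis that its maximizer is unique.
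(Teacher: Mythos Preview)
Your proof is correct and takes a genuinely different route from the paper's.

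The paper works directly with the scalar reduction $H^P(y;n) = yn - P(n) - \tfrac{K\gamma_c}{2}n^2$. For continuity it runs a sequential argument: given $y_{(i)}\to y_{(0)}$, it rules out accumulation points at $\pm\infty$ and then at any finite point other than $n^P(y_{(0)})$ by uniqueness. For monotonicity on $I^P$ it uses the first-order condition locally (showing $n^P$ is locally increasing near each point of $I^P$) and then invokes an auxiliary calculus lemma to upgrade this to global monotonicity. The structure of $I^P$ is derived by decomposing it into the two half-lines $I^P_\pm$ where positive or negative trades beat zero.

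Your approach is more economical. The revealed-preference inequality you add to get $(y_2-y_1)K(n_2-n_1)\ge 0$ gives nondecreasing on \emph{all} of $\RR$ at once, not just on $I^P$; this is the standard supermodularity argument and it sidesteps the paper's local-to-global step entirely. Your continuity proof via convexity of the value function $V$ is also cleaner than the paper's accumulation-point bookkeeping, and using the FOC only to rule out equal nonzero values is the minimal use needed for strict monotonicity on $I^P$. Having established continuity, global monotonicity, and the $\pm\infty$ asymptotics, the interval structure $\{n^P=0\}=[a^P,b^P]$ falls out immediately; the paper instead derives the half-line structure first and extracts $a^P\le b^P$ only at the end of part~(c).

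One small omission: you do not explicitly record that $n^P(I^P)=\RR\setminus\{0\}$, but this is immediate from continuity, strict monotonicity on each component of $I^P$, and the limits $n^P\to -\infty,0,0,+\infty$ at the four endpoints.
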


Properties (b) and (c) in Lemma~\ref{lem:adm schedule} show that when facing identical admissible price schedules, client types are divided into three categories. To wit, clients optimally sell to the dealers if their type is sufficiently small $y$ (i.e., they either receive signals indicating sufficiently unfavorable payoffs, or hold large initial inventories that need to be reduced to limit  inventory costs). Conversely, clients with large type $y$ purchase risky shares since they expect favorable payoffs or because they start with a negative inventory that they wish to reduce. Clients with an intermediate type $y \in [a^P,b^P]$ in turn prefer not to trade with the dealers. Moreover, Property (c) shows that there is no limit on what the client will buy or sell if their type is sufficiently small or large. Property (a) asserts that the dependence of the trade size on the type is smooth. 

With competition between two or more dealers, admissibility in the sense of Definition~\ref{def:adm} is equivalent to convexity of the price schedules, which is often assumed a priori~\citep{biais.al.00,biais.al.13,back.baruch.13}. In contrast, admissible price schedules for a monopolistic dealer need not be strictly convex and can in fact be strictly concave (compare~\cite[Section~4.2]{biais.al.00}):

\begin{lemma}\label{lem:schedules:hom}
	\begin{enumerate}
	\item A price schedule $P$ is admissible for $K \geq 2$ dealers if and only if it is strictly convex.
	\item	A price schedule $P$ is admissible for $K=1$ dealer if and only if $n \mapsto \frac{\gamma_c}{2} n^2 + P(n)$ is strictly convex on $\RR$ and satisfies $\lim_{n \to \pm\infty} \left(\gamma_c n +P'(n)\right) = \pm\infty$.
	\end{enumerate}
\end{lemma}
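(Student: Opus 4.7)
\textbf{Proof plan for Lemma~\ref{lem:schedules:hom}.} My plan is to treat (a) and (b) separately, and in each case verify both implications. The common theme is that the linear and quadratic parts of $J_c^P$ depend on $\mathbf{n}$ only through $\sum_k n_k$, so admissibility is governed entirely by the convexity properties of $P$.

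For the sufficiency direction of (a), assume $P$ is strictly convex. For any $\mathbf{n}$ with mean $\bar n = \frac{1}{K}\sum_k n_k$, strict Jensen's inequality gives $\sum_k P(n_k) \geq K P(\bar n)$, with equality iff all $n_k = \bar n$. Since the linear and quadratic terms of $J_c^P$ only see $\sum_k n_k$, the symmetric vector $(\bar n, \ldots, \bar n)$ strictly dominates any non-symmetric choice. The problem reduces to maximizing $g(n) := K y n - K P(n) - \frac{\gamma_c K^2}{2} n^2$, which is strictly concave and coercive (convex $P$ is bounded below by an affine function), so it admits a unique maximizer. For the necessity direction of (a), I would use Lemma~\ref{lem:adm schedule}(c): for every $n^* \neq 0$ there exists $y$ such that $(n^*, \ldots, n^*)$ is the unique maximizer of $J_c^P(y; \cdot)$. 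Testing against the sum-preserving perturbation $(n^* + t, n^* - t, n^*, \ldots, n^*)$ yields the strict midpoint-convexity inequality
\begin{equation*}
P(n^* + t) + P(n^* - t) > 2 P(n^*) \qquad \text{for all } n^* \neq 0 \text{ and } t \neq 0.
\end{equation*}
Letting $n^* \to 0$ and using continuity of $P$ gives the non-strict version at $n^* = 0$, so $P$ is midpoint convex on $\mathbb{R}$, hence convex. If $P$ had an affine piece on an interval $[a, b]$ with $a < b$, I could choose $n^* \in (a, b) \setminus \{0\}$ and $t > 0$ small enough that $[n^* - t, n^* + t] \subset [a, b]$, contradicting the strict inequality; hence $P$ has no affine pieces and is strictly convex.

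For part (b), rewriting $J_c^P(y; n) = y n - \phi(n)$ with $\phi(n) := \frac{\gamma_c}{2} n^2 + P(n)$ reduces admissibility to existence and uniqueness of a maximizer of $y n - \phi(n)$ for every $y$. In the sufficiency direction, strict convexity of $\phi$ together with $\phi'(\pm\infty) = \pm\infty$ yields superlinearity of $\phi$ (via the convexity bound $\phi(n) \geq \phi(n_0) + \phi'(n_0)(n - n_0)$ combined with the slope blowup), so $y n - \phi(n) \to -\infty$ at $\pm\infty$, and strict concavity delivers a unique maximizer. Conversely, assuming $P$ admissible, Lemma~\ref{lem:adm schedule} provides that $n^P$ is continuous and strictly increasing from $I^P = (-\infty, a^P) \cup (b^P, \infty)$ onto $\mathbb{R} \setminus \{0\}$; the interior first-order condition $y = \phi'(n^P(y))$ identifies $\phi'$ as the inverse of $n^P$ on $\mathbb{R} \setminus \{0\}$. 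This makes $\phi'$ continuous and strictly increasing on each of $(-\infty, 0)$ and $(0, \infty)$, yielding strict convexity of $\phi$ on each side; the inequality $\phi'(0^-) = a^P \leq b^P = \phi'(0^+)$ (obtained by letting $n^P(y) \to 0^\pm$) patches the two pieces into a globally strictly monotone subdifferential and hence into global strict convexity of $\phi$. Finally, $n^P(y) \to \pm \infty$ as $y \to \pm \infty$ translates directly into $\gamma_c n + P'(n) \to \pm \infty$ as $n \to \pm \infty$.

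The main technical obstacle lies in the necessity direction of (a): extending the pointwise strict midpoint-convexity condition, which is established only at $n^* \neq 0$, to full strict convexity of $P$ across the potential bid-ask kink at $0$. The two-step argument (convexity via continuity, then ruling out affine pieces by choosing test points away from $0$) is genuinely needed, because part (b) shows that some non-convexity near $0$ is in fact permitted in the monopolistic case and must be excluded here only through the presence of a second dealer.
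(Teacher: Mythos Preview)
Your proof is correct, and for part (b) it coincides with the paper's argument essentially verbatim (FOC identifies $\gamma_c n + P'(n)$ with $(n^P)^{-1}(n)$ via Lemma~\ref{lem:adm schedule}, giving both strict monotonicity and the limits). For part (a), however, you take a genuinely different route. For sufficiency, you argue directly via Jensen's inequality that symmetric allocations dominate and then reduce to a one-dimensional strictly concave coercive problem; the paper instead defers this direction entirely to the more general Theorem~\ref{thm:schedules het} on heterogeneous schedules. For necessity, you test the sum-preserving perturbation $(n^*+t,n^*-t,n^*,\ldots,n^*)$ to obtain strict midpoint convexity at every $n^*\neq 0$, then upgrade to global strict convexity via continuity and the absence of affine pieces; the paper instead invokes the second-order necessary condition at the interior maximizer, computes the eigenvalues of the Hessian $-\gamma_c\ones\ones^\top - P''(n)\,\mathrm{Id}$ to obtain $P''(n)\geq 0$, and only then rules out affine pieces by a perturbation argument similar in spirit to yours. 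Your approach is more elementary and self-contained (it uses only function values, not the $C^2$ regularity), while the paper's approach reuses machinery that is needed elsewhere anyway.
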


\begin{remark}
\begin{enumerate}
\item[(a)] By Lemma~\ref{lem:schedules:hom}, admissible price schedules for $K \geq 2$ dealers are also admissible for a monopolist. By contrast, admissibility for a monopolist does not necessarily imply admissibility for two or more dealers.
\item[(b)] For a monopolist dealer ($K = 1$), admissible price schedule can be concave, and more concave if the client is more inventory averse.
\end{enumerate}
\end{remark}

Admissibility of a price schedule $P$ in the sense of Definition~\ref{def:adm} requires that the client's problem has a unique solution for all types $Y=y$ given that all dealers post $P$. This is a natural minimal requirement for symmetric equilibria, but not sufficient to analyse Nash competition between the dealers. Indeed, if one of the dealers unilaterally deviates from a symmetric Nash equilibrium, then the client will face asymmetric price schedules. Whence, her optimization problem is no longer guaranteed to have a (unique) solution even if the price schedules in the equilibrium are admissible. A natural way out is to focus on deviations for which the asymmetric price schedules remain ``compatible'', in that the client's problem still has a solution at least for some type:

\begin{definition}
	Admissible price schedules $P_1, \ldots, P_K$ for $K \geq 2$ dealers are \emph{compatible} if the client's (normalized) problem $\max_{\mathbf{n}} J^\mathbf{P}(y;\mathbf{n})$ has a solution for some type $y \in \RR$. (Uniqueness then follows automatically since the strict convexity of the admissible price implies that the client's goal functional is strictly concave, cf.~the proof of Theorem~\ref{thm:schedules het}.)\end{definition}
	
Note that compatibility of the price schedules (and whence existence for some client types) can fail even if the price schedules are integrals of increasing marginal prices as in \cite{biais.al.00}, cf.~Example~\ref{ex:counter}.
	
For not necessarily identical but compatible price schedules, \emph{all} clients' problems are still well posed. The next result provides a simple criterion for compatibility in terms of limiting marginal prices and collects the properties of the corresponding optimal trade sizes:

\begin{theorem}
\label{thm:schedules het}
Let $P_1, \ldots, P_K$ be admissible price schedules for $K \geq 2$ dealers and set
\begin{equation}\label{eq:lr}
	\ell_{k} := \lim_{n \to -\infty} P'_k(n) \quad \mbox{and} \quad r_{k} := \lim_{n \to \infty} P'_k(n) \quad \mbox{for $k \in \{1, \ldots, K\}$.}
\end{equation}
Then $P_1, \ldots, P_K$ are compatible if and only if 
\begin{equation}\label{eq:ol:lr}
\max_{k \in \{1, \ldots K\}} \ell_k =: \ol \ell < \ol r := \min_{k \in \{1, \ldots K\}} r_k.
\end{equation}
 Moreover, in this case, setting ${\bf P} = (P_1, \ldots, P_k)$, we have the following properties:
\begin{enumerate}
\item For \emph{every} type $y \in \mathbb{R}$, the client's (normalized) problem $\max_{\mathbf{n}} J^\mathbf{P}(y;\mathbf{n})$ has a \emph{unique} solution ${\bf n}^{\bf P}(y) = (n_1^{\bf P}(y), \ldots, n_K^{\bf P}(y))$;
\item For each $k \in \{1, \ldots, K\}$, the function $n_k^{\bf P}: \RR \to \RR$ is continuous, nondecreasing, and increasing on $I^{\bf P}_k := \{y \in \RR: n_k^{\bf P}(y) \neq 0\}$;
\item For each $k \in \{1, \ldots, K\}$, $n_k^{\bf P}(I^{\bf P}_k) = \big((P'_k)^{-1}(\ol \ell),  (P'_k)^{-1}(\ol r)\big) \setminus \{0\}$,
where $(P'_k)^{-1}(m) := 0$ if $m \in [P'_k(0-), P'_k(0+)]$.
\end{enumerate}
\end{theorem}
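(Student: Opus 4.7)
The plan is to reduce the vector problem $\max_{\bf n} J_c^{\bf P}(y; {\bf n})$ to a scalar one parameterized by the common marginal price $\mu := y - \gamma_c \sum_k n_k$, and to exploit strict concavity of $J_c^{\bf P}$ (which follows from Lemma~\ref{lem:schedules:hom}(a), since each $P_k$ is strictly convex) so that the subgradient first-order condition is both necessary and sufficient for optimality. Concretely, the map $\mu \mapsto (P_k')^{-1}(\mu)$ defined in the statement is continuous and nondecreasing on $(\ell_k, r_k)$, vanishes on the bid--ask gap $[P_k'(0-), P_k'(0+)]$, and is strictly increasing elsewhere. For $\mu \in (\ol\ell, \ol r) \subset \bigcap_k (\ell_k, r_k)$, set $Q(\mu) := \sum_k (P_k')^{-1}(\mu)$; this $Q$ is continuous and nondecreasing, and since $\ell_{k^-} = \ol\ell$ for some $k^-$ and $r_{k^+} = \ol r$ for some $k^+$, $Q(\mu) \to -\infty$ as $\mu \downarrow \ol\ell$ and $Q(\mu) \to +\infty$ as $\mu \uparrow \ol r$.

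For sufficiency of~\eqref{eq:ol:lr} and item~(a), assume $\ol\ell < \ol r$. The map $\Phi(\mu) := \mu + \gamma_c Q(\mu)$ is continuous and strictly increasing on $(\ol\ell, \ol r)$ (as the sum of the identity and a nondecreasing function), with limits $\pm\infty$ at the endpoints. Hence for every $y \in \RR$ there is a unique $\mu^*(y) \in (\ol\ell, \ol r)$ solving $\Phi(\mu^*(y)) = y$, and setting $n_k^{\bf P}(y) := (P_k')^{-1}(\mu^*(y))$ yields $y - \gamma_c \sum_k n_k^{\bf P}(y) = \mu^*(y) \in \partial P_k(n_k^{\bf P}(y))$ for every $k$. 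This is exactly the subgradient first-order condition for $J_c^{\bf P}(y;\,\cdot\,)$ at ${\bf n}^{\bf P}(y)$, which by strict concavity is necessary and sufficient, so ${\bf n}^{\bf P}(y)$ is the unique maximizer. Conversely, if $\ol\ell \geq \ol r$, then at any putative maximizer ${\bf n}^*$ Fermat's rule combined with $\partial P_k(n_k^*) \subset (\ell_k, r_k)$ would force the common value $y - \gamma_c \sum_k n_k^*$ to lie in $\bigcap_k (\ell_k, r_k) = (\ol\ell, \ol r) = \emptyset$, a contradiction. Hence no maximizer exists for any $y$, so the schedules fail to be compatible.

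For~(b) and~(c), the inverse $y \mapsto \mu^*(y)$ of the strictly increasing $\Phi$ is continuous and strictly increasing from $\RR$ onto $(\ol\ell, \ol r)$, so $n_k^{\bf P} = (P_k')^{-1} \circ \mu^*$ is continuous and nondecreasing, with $n_k^{\bf P}(y) = 0$ exactly when $\mu^*(y) \in [P_k'(0-), P_k'(0+)]$; outside this set both factors are strictly increasing, giving the claimed strict monotonicity on $I_k^{\bf P}$. The range formula follows from $n_k^{\bf P}(I_k^{\bf P}) = (P_k')^{-1}\bigl((\ol\ell, \ol r) \setminus [P_k'(0-), P_k'(0+)]\bigr)$ by a short case split on the positions of $\ol\ell$ and $\ol r$ relative to the bid--ask gap, together with the convention $(P_k')^{-1}(m) = 0$ for $m \in [P_k'(0-), P_k'(0+)]$.

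The main obstacle I anticipate is handling the non-smoothness of $J_c^{\bf P}$ at the hyperplanes $\{n_k = 0\}$ rigorously; the clean device is to view $\partial P_k$ as the full subdifferential of the convex function $P_k$ so that Fermat's rule applies in subdifferential form, both when constructing a point where the subgradient contains zero (sufficiency) and when ruling out a common value in an empty intersection (necessity). The case split for the range in~(c) across the various positions of $\ol\ell$ and $\ol r$ relative to $[P_k'(0-), P_k'(0+)]$ also requires some care.
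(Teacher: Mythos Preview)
Your proposal is correct and takes essentially the same approach as the paper: both reduce the vector optimization to a scalar one via the common marginal price, define the strictly increasing map $\Phi(\mu)=\mu+\gamma_c\sum_k (P_k')^{-1}(\mu)$ on $(\ol\ell,\ol r)$ (the paper calls it $y^{\bf P}$), invert it to obtain $\mu^*(y)$, set $n_k^{\bf P}(y)=(P_k')^{-1}(\mu^*(y))$, and verify optimality via the subdifferential first-order condition for the strictly concave $J_c^{\bf P}$. The necessity argument (Fermat's rule forcing $y-\gamma_c\sum n_k^*\in\bigcap_k(\ell_k,r_k)$) is likewise the same.
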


As for the symmetric case, Theorem~\ref{thm:schedules het}(b) and (c) show that -- for each dealer -- clients with sufficiently negative type sell, clients with sufficiently positive type buy, and clients with intermediate types do not trade at all. However, unlike in the symmetric case, each of these three intervals (``buy'', ``no-trade'' and ``sell'') can now be empty for a given dealer. In particular, part (c) shows that there may be a limit on what the client will buy or sell from a given dealer irrespective of their type. Note, however, that compatibility ensures that at least one of the buy, no-trade and sell intervals is nonempty for each dealer, and there exists a at last one dealer where there is no limit on how much the client will buy if their type is large enough (if $r_k = \ol r$) and there is at last one dealer where there is no limit on how much the client will sell if their type is small enough (if $\ell_k = \ol \ell$).

\section{The Dealers' Problem}\label{s:nash}

\subsection{Monopolistic Case}\label{ss:mon}

As a benchmark, we first consider the most tractable case of a single monopolistic dealer, which has been studied by \cite{glosten.89} for risk-neutral preferences and Gaussian primitives. Generally, the monopolistic dealer chooses an admissible price schedule $P$ to maximise expected profits penalized for quadratic inventory costs:\footnote{Here, we use the convention that the integral equals minus infinity if its negative part is not integrable.}
\begin{align}
J_d(P) &:= E\left[ P(n^P(Y))- V n^P(Y)-\frac{\gamma_d}{2}n^P(Y)^2\right] \notag \\
&= \int_{-\infty}^\infty E\left[ P(n^P(Y))- V n^P(Y)-\frac{\gamma_d}{2}n^P(Y)^2\,\Big|\,Y=y\right]f(y) \dd y \notag\\
&= \int_{-\infty}^\infty \left(P(n^P(y))-E[V\,|\,Y = y] n^P(y)-\frac{\gamma_d}{2}n^P(y)^2\right)f(y)\dd y. \label{eq:mon}
\end{align}
The first term on the right-hand side of \eqref{eq:mon} are the cash payments received from the clients for selling the risky asset. The second term describes the corresponding payoff of the risky assets, and the third is a quadratic cost $\gamma_d \geq 0$ levied on the post-trade inventory. For example, in the context of exotic options, the inventory cost can be a multiple of the portfolio variance after hedging the option.

As the clients' optimal response function $n^P(\cdot)$ from Section~\ref{s:client} only depends on their effective type $Y=y$, the relevant statistic for the payoff of the risky asset is its corresponding conditional expectation $E[V\,|\,Y = y]$. For log-concave client types as in Assumption \ref{ass:type}, this conditional-mean function has the following properties:
	
	\begin{proposition}\label{prop:g:mon}
		The expected payoff of the risky asset conditional on the clients' type,
		\begin{equation}\label{eq:condexp}
		y \mapsto g(y) := E[V\,|\,Y = y]
		\end{equation}
		is continuously differentiable with $0 < g'(y) < 1$ for all $y \in \RR$.
	\end{proposition}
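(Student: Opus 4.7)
The plan is to reduce the problem to studying $\EE[S \mid Y=y]$ for the independent sum $Y = S + T$ with $T := -\gamma_c M$, and then identify $g'$ with a conditional covariance against a monotone score function supplied by log-concavity. First, since $\varepsilon$ is independent of $(S,M)$ (hence of $Y$) and has mean zero, $g(y) = \EE[V \mid Y=y] = \EE[S \mid Y=y]$. With $T := -\gamma_c M$ we have $Y = S + T$, $S \perp T$, and both admit strictly positive log-concave densities $f_S, f_T$ on $\RR$. The conditional expectation takes the explicit convolution form $g(y) = \psi(y)/f_Y(y)$ with $\psi(y) := \int_\RR s\, f_S(s) f_T(y-s) \dd s$.

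Next I would prove $g \in C^1(\RR)$. The denominator $f_Y$ is $C^1$ with bounded derivative by Proposition~\ref{prop:log concave:convolution}(a). For $\psi$, substituting $t = y-s$ gives $\psi(y) = y f_Y(y) - \int_\RR t\, f_T(t) f_S(y-t) \dd t$. The remaining integral is the convolution of $t \mapsto t f_T(t) \in L^1(\RR)$ (log-concave densities on $\RR$ have all polynomial moments) with the log-concave density $f_S$, so the same smoothing mechanism used for Proposition~\ref{prop:log concave:convolution}(a) -- relying on the monotonicity (hence local boundedness off a null set) of the score $\rho_S := (\log f_S)'$ -- yields that this convolution is $C^1$ in $y$. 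Dividing by $f_Y>0$ gives $g \in C^1(\RR)$.

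To compute $g'$, I would differentiate the identity $\psi = g f_Y$ under the integral, using $f_T'(y-s) = \rho_T(y-s) f_T(y-s)$ with $\rho_T := (\log f_T)'$, and subtract $g(y) f_Y'(y)$ to obtain
$$
g'(y) = \EE\bigl[(S - g(y))\, \rho_T(y-S) \,\big|\, Y=y\bigr] = \mathrm{Cov}\bigl(S,\, \rho_T(y-S) \,\big|\, Y=y\bigr).
$$
Concavity of $\log f_T$ makes $\rho_T$ non-increasing, so $s \mapsto \rho_T(y-s)$ is non-decreasing; hence the conditional covariance is non-negative and $g'(y) \geq 0$. Integrability of $f_T$ on the whole of $\RR$ rules out $\rho_T$ being constant (a constant score would force $f_T$ to be exponential, incompatible with integrability on $\RR$). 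Combined with the positivity of the conditional density $s \mapsto f_S(s) f_T(y-s)/f_Y(y)$ on $\RR$ and the identity $\mathrm{Cov}(X,\phi(X)) = \tfrac12 \EE[(X-X')(\phi(X)-\phi(X'))]$ for an independent copy $X'$, this upgrades to the strict bound $g'(y) > 0$. The same argument applied to $\EE[T \mid Y=y] = y - g(y)$ with the roles of $S$ and $T$ swapped gives $1 - g'(y) > 0$, i.e., $g'(y) < 1$.

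The main technical point will be rigorously justifying the differentiation under the integral in the derivative step: $\rho_T$ is only defined almost everywhere and need not be bounded, so the difference quotients of $f_T(y-s)$ must be dominated by an integrable envelope in $s$. Monotonicity of $\rho_T$, combined with the exponentially decaying tails of $f_S$ and $f_T$ inherited from log-concavity and the finiteness of $\EE[|S|]$, should supply such an envelope and reduce the matter to a standard dominated-convergence argument.
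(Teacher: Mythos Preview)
Your proposal is correct and follows essentially the same approach as the paper: both reduce to showing that $y\mapsto\EE[S\mid Y=y]$ and $y\mapsto\EE[T\mid Y=y]$ (with $T=-\gamma_c M$) are $C^1$ with positive derivatives, and both establish positivity via the monotone score $\rho_T=(\log f_T)'$ combined with the non-constancy forced by integrability of $f_T$ on all of $\RR$. The only cosmetic difference is that you phrase the key step as the conditional-covariance identity $g'(y)=\mathrm{Cov}(S,\rho_T(y-S)\mid Y=y)$, whereas the paper writes out the equivalent double-integral symmetrization explicitly; the paper's Proposition~\ref{prop:log concave:convolution} supplies exactly the differentiation-under-the-integral justification you flag as the main technical point.
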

	
The properties $0<g'$ and $g'<1$ mean that the agents' type has some positive correlation with the asset payoff, but is not perfectly informative. This is assumed in \cite[p.~807]{biais.al.00} but in fact holds automatically if the densities of both signals and inventories are log concave.	

 \begin{example}\label{ex:normal}
 Suppose the client's inventory $M$ and signal $S$ are both normally distributed with means $\mu_M$ and $\mu_S$ and variances $\sigma^2_M$ and $\sigma^2_S$, respectively.  Then $Y=S-\gamma_c M$ is also normally distributed with mean $\mu_Y = \mu_S - \gamma_c \mu_M$ and variance $\sigma^2_Y = \sigma^2_S + \gamma_c^2 \sigma^2_M$. Moreover, the conditional mean is a linear function of the client's type:
 \begin{align}
 g(y) &=E[V\,|\,Y=y] =E[S\,|\,S-\gamma_c M=y] \notag\\
 &=E[S]+\frac{\mathrm{Cov}[S,S-\gamma_c M]}{\mathrm{Var}[S-\gamma_c M]}(y-E[S-\gamma_c M]) \notag\\
  &=(1-\beta)\mu_s +\beta \gamma_c \mu_M +\beta y \notag \\
 &=(1-\beta)\mu_y  + \gamma_c \mu_M +\beta y, \quad \mbox{for } \beta=\frac{\sigma_S^2}{\sigma_S^2+\gamma_c^2 \sigma_M^2} \in (0,1). \label{eq:projection}
 \end{align}
 \end{example}	

 \begin{example}\label{ex:same}
Linear functions $g(y)$ as in~Example~\eqref{ex:normal} arise more generally if the client's signal and inventory are both generated using iid copies of the same distribution. To wit, suppose that $S=\bar{S}+S^{(1)}+\ldots+S^{(k)}$ and $\tilde{M}:=-\gamma_c M=-\gamma_c M^{(1)}-\ldots-\gamma_c M^{(l)}=\tilde{M}^{(1)}+\ldots+\tilde{M}^{(l)}$, where $\bar{S} \in \mathbb{R}$ and all random variables are iid and have the same log-concave distribution. Then, by symmetry, the conditional mean is indeed affine linear in the type:
\begin{align*}
g(y) &=E\left[S\,|\,S-\gamma_c M=y\right] \\
&=\bar{S}+k E\left[S^{(1)}\,|\,S^{(1)}+\ldots+S^{(k)}+\tilde{M}^{(1)}+\ldots+\tilde{M}^{(l)}=y\right] =\bar{S}+\frac{k}{k+l}y.
\end{align*}
\end{example}

	For log-concave distributions of the client types as in Assumption~\ref{ass:type}, the cumulative distribution functions $F$ and survival function $\bar F$ of the effective type $Y$ are log-concave as well (cf.~Proposition \ref{prop:log:concave:basics}(a)). In particular, both $F/f$ and $-\bar F/f$ are nondecreasing. Denoting by $\id$ the identity function, this implies via Proposition~\ref{prop:g:mon} that the functions $F/f + \mathrm{id} - g$ and $-\ol F/f + \mathrm{id} - g$ are increasing and continuously differentiable with positive derivatives. In particular, their inverses are well defined, increasing and continuously differentiable. They in turn determine the monopolistic dealer's optimal price schedule in closed form:

	\begin{lemma}
		\label{lem:dealer:mon}
		Suppose Assumption \ref{ass:type} is satisfied. Then the monopolistic dealer's problem~\eqref{eq:mon} has a unique solution $P_*(n)=\int_0^n P_*'(x)\dd x$, where the optimal marginal prices are given by
		\begin{equation}
			\label{eq:lem:dealer:mon}
			P'_*(n) = \begin{cases}
				\left(\frac{F}{f} + \mathrm{id} - g\right)^{-1}\Big((\gamma_d  + \gamma_c)n\Big) - \gamma_c n, &\text{ if } n < 0, \\
				\left(-\frac{\ol F}{f} + \mathrm{id} - g\right)^{-1}\Big((\gamma_d + \gamma_c)n\Big) - \gamma_c n, &\text{ if } n > 0.
			\end{cases}
		\end{equation}
	\end{lemma}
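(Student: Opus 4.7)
The plan is to recast the dealer's problem via an envelope-theorem change of variables from mechanism design. First I switch to the client's trade function $q := n^P$ as the decision variable: by Lemma~\ref{lem:schedules:hom}(b), admissible schedules correspond bijectively to continuous, nondecreasing $q : \RR \to \RR$ vanishing on some (possibly degenerate) interval $[a^P, b^P] = [P'(0-), P'(0+)]$, strictly increasing off it, with $q(\pm\infty) = \pm\infty$. The bijection is implemented by the client's first-order condition $y = \gamma_c n + P'(n)$.

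Next I eliminate $P$ via the envelope theorem. Writing $U(y) := \max_n\{yn - P(n) - (\gamma_c/2) n^2\}$ for the client's indirect utility, we have $U'(y) = q(y)$ and $P(q(y)) = y q(y) - (\gamma_c/2)q(y)^2 - U(y)$. Substituting into \eqref{eq:mon} and applying Fubini to $U(y) = \int_{y_0}^y q(s)\,ds$ for any $y_0 \in [a^P,b^P]$ (where $U(y_0) = 0$) gives $\EE[U(Y)] = \int_{y_0}^\infty q(s)\ol F(s)\,ds - \int_{-\infty}^{y_0} q(s) F(s)\,ds$. Rearranging, $J_d(P) = \int_{-\infty}^{y_0} h^-(y, q(y)) f(y)\,dy + \int_{y_0}^\infty h^+(y, q(y)) f(y)\,dy$, where $h^\pm(y,q) := \phi_\pm(y)\,q - ((\gamma_c + \gamma_d)/2)\,q^2$, $\phi_-(y) := F(y)/f(y) + y - g(y)$, and $\phi_+(y) := -\ol F(y)/f(y) + y - g(y)$.

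Each $h^\pm$ is strictly concave in $q$ with unique unconstrained maximizer $\phi_\pm(y)/(\gamma_c + \gamma_d)$. Log-concavity of $f$ makes $F/f$ and $-\ol F/f$ nondecreasing, and Proposition~\ref{prop:g:mon} gives $0 < g' < 1$, so both $\phi_\pm$ are continuous, strictly increasing, and (after verifying their tail behavior) surjective onto $\RR$. Since $\phi_+ - \phi_- = -1/f < 0$, the thresholds $a^* := \phi_-^{-1}(0)$ and $b^* := \phi_+^{-1}(0)$ satisfy $a^* < b^*$. Since any admissible $q$ satisfies $q \leq 0$ on $(-\infty, y_0]$ and $q \geq 0$ on $[y_0, \infty)$ for $y_0 \in [a^P,b^P]$, the sign-restricted pointwise maximum is attained by $q^*(y) := \phi_-(y)/(\gamma_c + \gamma_d)$ for $y \leq a^*$, zero on $[a^*, b^*]$, and $\phi_+(y)/(\gamma_c + \gamma_d)$ for $y \geq b^*$. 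This $q^*$ is itself continuous, nondecreasing and strictly increasing off $[a^*, b^*]$, hence admissible, and it therefore solves the original (constrained) problem.

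Finally I recover $P_*$ by inverting the FOC: for $n < 0$, the relation $n = \phi_-(y)/(\gamma_c + \gamma_d)$ inverts to $y = \phi_-^{-1}((\gamma_c + \gamma_d) n)$, so $P_*'(n) = y - \gamma_c n$ gives the first case of \eqref{eq:lem:dealer:mon}; the $n > 0$ case is analogous with $\phi_+$. Uniqueness of $P_*$ follows from strict concavity of $h^\pm$. The main obstacle I expect is the integrability/regularity bookkeeping: one needs to verify that $\EE[U(Y)]$ is finite and that the Fubini rearrangement is legal on the class of admissible schedules with $J_d(P) > -\infty$, which relies on the thin-tail decay from log-concavity of $f$ together with the at-most linear growth $|y - g(y)| \leq |y| + |g(0)|$ from Proposition~\ref{prop:g:mon}; a secondary subtlety is showing that the sign-restricted pointwise upper bound equals $J_d(q^*)$ uniformly across the $q$-dependent choice of $y_0$, which reduces to a short case analysis depending on how $[a^P, b^P]$ meets $[a^*, b^*]$.
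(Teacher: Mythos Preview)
Your approach is correct and genuinely different from the paper's. The paper parametrizes by the \emph{inverse} trade function $y^P(n)=P'(n)+\gamma_c n$, changes variables to integrate over $n$, and obtains
\[
J_d(P)=\int_{-\infty}^0\Big(((\gamma_d+\gamma_c)n-y^P(n))F(y^P(n))+H(y^P(n))\Big)\,\mathrm{d}n + \text{(symmetric term on $(0,\infty)$)}.
\]
It then derives the first-order condition in $y^P$ by a localized calculus-of-variations argument and verifies optimality by a mean-value-theorem comparison using the sign pattern of the $y$-derivative of the integrand. You instead parametrize by the trade function $q=n^P$ itself, eliminate $P$ via the envelope theorem, and land on a pointwise-separable objective $\int h^\pm(y,q(y))f(y)\,\mathrm{d}y$ that is \emph{strictly concave} in $q$; optimality and uniqueness then fall out of pointwise maximization with no separate verification step. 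This is the classical Myerson ``virtual valuation'' route and is cleaner here. The paper's route buys something different: its verification via first-order sign conditions (rather than concavity) is the template that extends to the oligopoly case, where the dealer's problem is \emph{not} concave and your pointwise-concavity shortcut is unavailable.

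Two small points. First, the $q$-dependent split point $y_0$ is handled most cleanly by observing that on $[a^P,b^P]$ one has $q=0$ and $h^-(y,0)=h^+(y,0)=0$, so the objective equals $\int \tilde h(y,q(y))f(y)\,\mathrm{d}y$ with $\tilde h(y,q)=h^-(y,q)\mathbf{1}_{\{q<0\}}+h^+(y,q)\mathbf{1}_{\{q>0\}}$, independent of $y_0$; this avoids the case analysis you anticipate. Second, for the formula \eqref{eq:lem:dealer:mon} to define $P'_*$ on all of $\RR\setminus\{0\}$ you need surjectivity of $\phi_\pm$ onto $(-\infty,0)$ and $(0,\infty)$ respectively; this is where your ``tail behavior'' remark bites, and it does require a short argument (e.g.\ using that $y-g(y)=E[-\gamma_c M\mid Y=y]$ and $g(y)=E[S\mid Y=y]$ sum to $y$, together with the $F/f$, $\ol F/f$ asymptotics from log-concavity).
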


\begin{remark}\label{rem:normal2}
Note that log-concavity of the client types is the only assumption needed for this result. In particular, no parameter constraints are required for Gaussian primitives, for example. If one additionally wants to guarantee that the optimal price schedule is convex (which is natural for a limit-order book, but not necessarily for a market dominated by a monopolist dealer) then additional conditions are needed. To wit, differentiation of~\eqref{eq:lem:dealer:mon} shows that the optimal price schedule for the monopolist is convex if and only if \begin{align}
	 (F/f)' - g' &\leq \frac{\gamma_d}{\gamma_c} \quad \text{on } \left(-\infty, 	\left(F/f+ \mathrm{id} - g\right)^{-1}(0)\right), 	\label{eq:rem:mon:est-}\\
(-\bar F/f)' - g' &\leq \frac{\gamma_d}{\gamma_c} \quad \text{on } \left(	\left(-\bar F/f + \mathrm{id} - g\right)^{-1}(0), \infty\right). \label{eq:rem:mon:est+}
\end{align}
For normally distributed inventories and signals as in Example \ref{ex:normal}, this can be translated into explicit parameter constraints. Suppose for simplicity as in \cite{back.baruch.13} that the dealer has no inventory costs ($\gamma_d = 0$). Then the optimal price schedule for the monopolist is convex if and only if 
\begin{equation}
\label{eq:rem:mon:beta est}
\tfrac{\gamma_c^2 \mu_M^2}{\sigma_S^2 + \gamma_c^2  \sigma_M^2} < \frac{\pi}{2} 
\quad  \text{and} \quad \beta \geq 1 + z_{\textrm{mon}}(\beta )\frac{\Phi}{\phi} \left( z_{\textrm{mon}}(\beta )\right),
\end{equation}
where $\Phi$ and $\phi$ denote the cumulative distribution function and the probability density function of the standard normal distribution, respectively, and 
\begin{equation*}
z_{\textrm{mon}}(\beta) = \tfrac{\gamma_c}{2 (1 -\beta)} \tfrac{|\mu_M|}{\sqrt{\sigma_S^2 + \gamma_c^2 \sigma_M^2}} -  \sqrt{\tfrac{\gamma^2_c \mu_M^2}{4 (1 -\beta)^2 (\sigma_S^2 + \gamma_c^2 \sigma_M^2)} + 1}.
\end{equation*}
In particular if $\mu_M = 0$, then $z_{\textrm{mon}}(\beta) = -1$ and \eqref{eq:rem:mon:beta est} specializes to the condition of~\cite{back.baruch.13}:
\begin{equation*}
	\beta \geq 1- \frac{\Phi}{\phi} (-1) \approx 0.3422.
\end{equation*}
This parameter constraint means that there is sufficient adverse selection to discourage the monopolist from offering quantity discounts for large trades. For inventories with non-zero mean, the second part of the corresponding condition~\eqref{eq:rem:mon:beta est} has the same interpretation. The first part additionally requires that client inventories are not too large on average. If they are, then large trades are likely enough to come from uninformed traders so that quantity discounts may be optimal. The crossover from convex optimal price schedules (i.e., increasing marginal prices $P'$) to locally concave ones is illustrated in Figure~\ref{fig:monopoly}. If the dealer is sufficiently inventory-averse relative to the client, then~\eqref{eq:rem:mon:est-}, \eqref{eq:rem:mon:est+} show that this effect disappears.
\end{remark}

\begin{figure}[htbp]
\begin{center}
\includegraphics[width=0.45\textwidth]{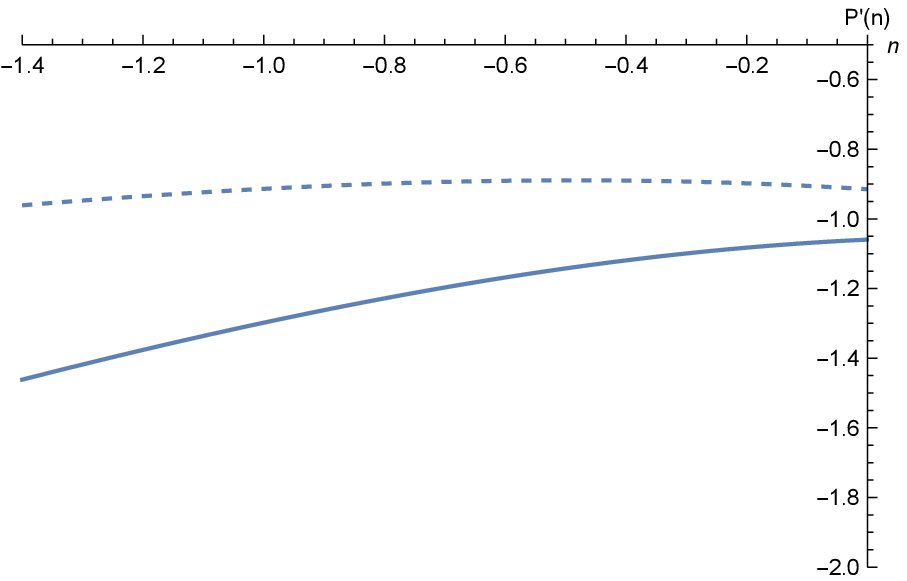}
\includegraphics[width=0.45\textwidth]{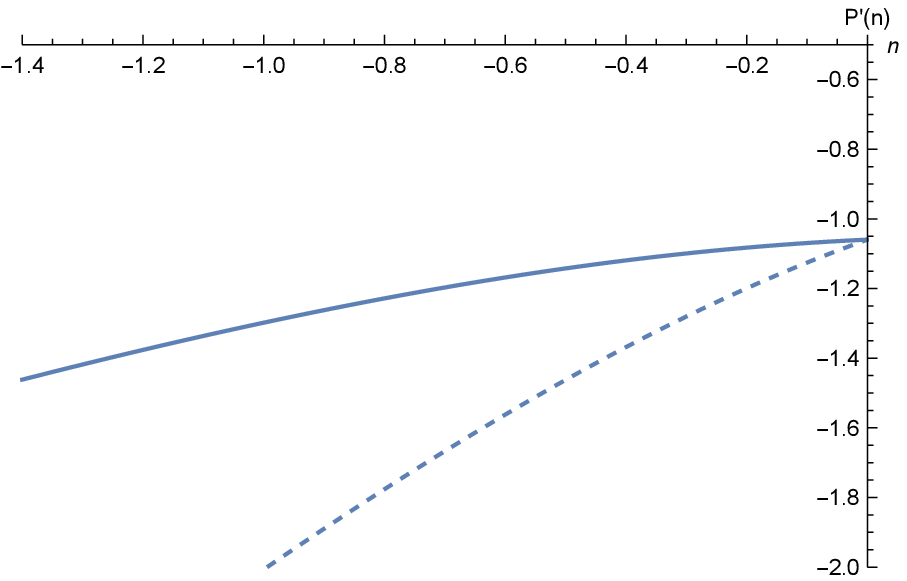}
\end{center}
\caption{Optimal marginal prices $P'(n)$ for monopolist dealers for normally distributed noise, client signals and inventories with $\beta=0.4$ (solid) and $\beta=0.25$ (dashed) and inventory costs $\gamma_c=1$, 
$\gamma_d=0$ (left panel). The right panel shows the marginal prices for $\beta=0.4$, $\gamma_c=1$ and $\gamma_d=0$ (solid), $\gamma_d=0.5$ (dashed).}\label{fig:monopoly} 
\end{figure}

The right panel in Figure~\ref{fig:monopoly} displays the dependence of the monopolist's optimal marginal prices on the corresponding inventory costs. The bid-ask spread is independent of the inventory costs here and~\eqref{eq:lem:dealer:mon} shows that this in fact holds in general. Whereas monopolist spreads remain invariant, larger inventory costs of the dealer lead to steeper marginal price curves and hence (more) convex price schedules. Differentiation of~\eqref{eq:lem:dealer:mon} shows that this also holds in general.

\subsection{Oligopolistic Case}\label{ss:oli}

We now turn to competition between several $K \geq 2$ identical, strategic dealers. To identify a symmetric Nash equilibrium, we suppose without loss of generality that the dealers $k \in \{2, \ldots, K\}$ post the same admissible price schedule $P$ and dealer $k=1$ then chooses an admissible price schedule $P_1$ such that ${\bf P} =  (P_1, P, \ldots, P)$ is compatible. The common price schedule $P$ in turn is a \emph{Nash equilibrium} if dealer $k=1$ has no incentive to deviate, in that the common price schedule is also optimal for her.

After fixing the price schedules of the other dealers, the goal functional of dealer 1 is 
$$
J_d^P(P_1) := \int_{-\infty}^\infty \left(P_1(n^{\bf P}_1(y))-g(y) n^{\bf P}_1(y)-\frac{\gamma_d}{2}n^{\bf P}_1(y)^2\right)f(y)\dd y.
$$
This is of the same form as the goal functional~\eqref{eq:mon} of the monopolistic dealer, except that the trade $n^{\bf P}_1(y)$ that the client conducts with dealer 1 now depends on the price schedules ${\bf P} =  (P_1, P, \ldots, P)$ quoted by \emph{all} dealers through the clients' optimal response function from Section~\ref{s:client}.

Our first result shows that \emph{if} a Nash equilibrium exists, then the corresponding marginal prices have to satisfy a nonlinear first-order ODE, derived from the dealers' first-order conditions for pointwise optimality. Note that unlike for the distributions with compact support considered in \cite{biais.al.00,biais.al.13}, there are no natural boundary conditions here.

\begin{lemma}
\label{lem:dealer:oli:unique}
Suppose Assumption \ref{ass:type} is satisfied. If ${\bf P_*} = (P_*, \ldots, P_*)$ is a Nash equilibrium for $K$ dealers, then the corresponding marginal prices satisfy the following ODE:
\begin{align}
P''(n) &=  \begin{cases}
\dfrac{(K-1) \gamma_c \left(\gamma_d n - P'(n) + g(P'(n) + \gamma_c K n)\right)}{ \frac{F}{f} \left(P'(n) + \gamma_c K n \right)- (\gamma_d n - P'(n) + g(P'(n) + \gamma_c K n))}, & n \in (-\infty, 0), \label{eq:lem:dealer:oli:unique:ODE}
 \\
\dfrac{(K-1) \gamma_c \left(\gamma_d n - P'(n) + g(P'(n) + \gamma_c K n)\right)}{ -\frac{\ol F}{f} \left(P'(n) + \gamma_c K n \right)- (\gamma_d n - P'(n) + g(P'(n) + \gamma_c K n))}, & n \in (0, \infty). 
\end{cases}
 \end{align}
 (In particular, it is part of the assertion that the denominators never vanish.)
\end{lemma}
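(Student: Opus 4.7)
My plan is to derive the ODE as a first-order necessary condition for dealer~1's problem via an envelope-theorem reformulation followed by pointwise optimisation at each client type. The key idea is that the envelope identity eliminates dealer~1's unknown payments $P_1(n_1(y))$ in favour of the allocation $n_1(y)$ and the other-dealer allocation $n_2(y)$, converting a delicate variational problem over $P_1$ into an ordinary pointwise maximisation in which $P_1'''$ never appears.

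Concretely, fix an admissible and compatible deviation $\mathbf{P} = (P_1, P_*, \ldots, P_*)$ and let $n_1(y)$, $n_2(y)$, $N(y) := n_1(y) + (K-1)n_2(y)$ and $U(y)$ denote the client's trades, total position, and indirect utility given by Theorem~\ref{thm:schedules het}. The envelope theorem gives $U'(y) = N(y)$, whence
$$P_1(n_1(y)) = y\,N(y) - (K-1)\,P_*(n_2(y)) - \tfrac{\gamma_c}{2}\,N(y)^2 - U(y).$$
Substituting into $J_d^{P_*}(P_1)$, splitting at the no-trade interval $[a^{P_*}, b^{P_*}]$ (where $U \equiv 0$), and using Fubini to trade $\int U f$ for $\int N \ol F$ on the buy region $(b^{P_*}, \infty)$ (respectively $-\int N F$ on the sell region $(-\infty, a^{P_*})$), the integrand on the buy region becomes
$$L(y, n_1, n_2) = y N - (K-1) P_*(n_2) - \tfrac{\gamma_c}{2} N^2 - g(y) n_1 - \tfrac{\gamma_d}{2} n_1^2 - \tfrac{\ol F(y)}{f(y)} N,$$
with $n_2$ linked to $n_1$ through the client's FOC $P_*'(n_2) = y - \gamma_c N$. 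Since $L$ does not involve $n_1'(y)$, the Nash condition forces the pointwise FOC $\partial L/\partial n_1|_y = 0$ at every $y$. Implicit differentiation gives $\partial n_2/\partial n_1 = -\gamma_c/(P_*''(n_2) + \gamma_c(K-1))$, hence $\partial N/\partial n_1 = P_*''(n_2)/(P_*''(n_2) + \gamma_c(K-1))$; after the cancellation $\partial N/\partial n_1 - (K-1)\partial n_2/\partial n_1 = 1$ (using $y-\gamma_c N = P_*'(n_2)$), the FOC reduces to
$$P_*'(n_2) - g(y) - \gamma_d n_1 = \frac{\ol F(y)}{f(y)} \cdot \frac{P_*''(n_2)}{P_*''(n_2) + \gamma_c(K-1)}.$$
Specialising to $n_1 = n_2 = n^{P_*}(y) =: n$, using the symmetric FOC $y = P_*'(n) + \gamma_c K n$, and setting $\beta := \gamma_d n - P_*'(n) + g(y)$, a routine rearrangement yields the claimed ODE for $n > 0$. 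The case $n < 0$ is identical, with $-\ol F/f$ replaced by $F/f$ from the sign flip in the Fubini step on the sell region.

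The most delicate point is justifying the pointwise optimisation. Dealer~1 cannot freely choose $n_1(\cdot)$: implementability requires $n_1$ to be increasing and arise from some convex $P_1$. At the smooth symmetric equilibrium this monotonicity constraint is inactive, so $C^2$ perturbations of $P_1$ around $P_*$ induce a family of perturbations of $n_1(\cdot)$ rich enough to isolate the pointwise first-order condition by localisation. The non-vanishing of the denominators claimed in the parenthetical then follows directly from the reduced FOC: strict convexity of $P_*$ (Lemma~\ref{lem:schedules:hom}(a)) secures $P_*''(n) + \gamma_c(K-1) > 0$, and substituting the FOC above gives, with $\alpha := P_*'(n) - g(y) - \gamma_d n = -\beta$, the identity
$$\frac{\ol F(y)}{f(y)} - \alpha = \frac{\ol F(y)}{f(y)} \cdot \frac{\gamma_c(K-1)}{P_*''(n) + \gamma_c(K-1)} > 0,$$
so the denominator $-\ol F/f - \beta$ never vanishes on $(0,\infty)$; the analogous computation on $(-\infty, 0)$ handles the denominator $F/f - \beta$.
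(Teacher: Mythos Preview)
Your approach is correct and genuinely different from the paper's. The paper changes variables from the type $y$ to the trade $n$ via the inverse $y^{\bf P}_1(n) = p_1(n) + \gamma_c n + \gamma_c(K-1)p_*^{-1}(p_1(n))$, integrates by parts in $n$, and then runs a localised calculus-of-variations argument perturbing $p_1(n)$ by compactly supported $\kappa$; the nowhere-dense set $\{p_*' = 0\}$ is handled by continuity, and the non-vanishing of the denominator comes from a separate continuity argument. You instead stay in the $y$-variable and use the mechanism-design reformulation: the envelope identity $U'(y)=N(y)$ eliminates $P_1(n_1(y))$, integration by parts in $y$ produces the virtual-surplus term $\ol F/f$, and the problem reduces to pointwise optimisation over the allocation $n_1(y)$ with $n_2$ tied to $n_1$ via the client's FOC. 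Your route is more conceptual and yields the non-vanishing of the denominator for free via the identity $\ol F/f - \alpha = (\ol F/f)\cdot \gamma_c(K-1)/(P_*''+\gamma_c(K-1))>0$; the paper's route is more self-contained and avoids the implementability discussion. The one point you leave at sketch level---that localised $C^2$ perturbations of $P_1$ around $P_*$ sweep out a sufficiently rich family of allocation perturbations $\delta n_1(\cdot)$ to force the pointwise condition, and that the boundary terms in the Fubini/envelope step are first-order irrelevant for such local perturbations---is exactly what the paper makes rigorous via its explicit construction of admissible perturbations (Lemma~\ref{lem:nowhere dense}(b)) and its integration-by-parts lemma with growth control (Lemma~\ref{lem:int by parts}); you would need analogous care to turn your outline into a complete argument.
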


We next establish existence and uniqueness of a solution to the ODE \eqref{eq:lem:dealer:oli:unique:ODE}. Compared to the monopolistic case from Section~\ref{ss:mon}, this requires an additional assumption. To wit, log-concavity of the client types as in Assumption \ref{ass:type} guarantees that $g'(y) < 1$ on $\RR$ by Proposition~\ref{prop:g:mon}. However, for well-posedness of the ODE, we need that this also remains true in the limit $y \to \pm \infty$.

\begin{assumption}
	\label{ass:g:oli}
The function $g$ from~\eqref{eq:condexp} satisfies $\limsup_{w \to \pm\infty} g'(w) < 1$.
\end{assumption}

 For normally distributed type variables s in Example~\ref{ex:normal} (or, more generally, if inventories and signals are generated from the same distributions as in Example~\ref{ex:same}),  $g(y)$ is just a linear function with slope equal to the projection coefficient. Therefore Assumption~\ref{ass:g:oli} is evidently satisfied in this case. 
 
\begin{theorem}
	\label{thm:ODE}
	Suppose Assumptions \ref{ass:type} and \ref{ass:g:oli} are satisfied. Then, there exists a unique function $P'_*: \RR \setminus \{0\} \to \RR$ that is increasing, continuously differentiable and satisfies the ODE~\eqref{eq:lem:dealer:oli:unique:ODE}.
\end{theorem}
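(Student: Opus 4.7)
The plan is to work on the half-lines $(-\infty,0)$ and $(0,\infty)$ separately, since the ODE \eqref{eq:lem:dealer:oli:unique:ODE} decouples across $n=0$; a sign flip reduces one side to the other (using that log-concavity of $f$ gives log-concavity of both $F$ and $\bar F$), so I focus on $(0,\infty)$. Setting $u:=P'$, I recast the equation as $u'(n)=\Phi(n,u(n))$ with
$$\Phi(n,u):=\frac{(K-1)\gamma_c\,A(n,u)}{D(n,u)},\quad A(n,u):=\gamma_d n-u+g(u+\gamma_c K n),\quad D(n,u):=-\frac{\ol F}{f}(u+\gamma_c K n)-A(n,u).$$
By Proposition~\ref{prop:g:mon} and the fact that $\ol F/f$ is decreasing (log-concavity of $\ol F$), the function $\Phi$ is $C^1$ and locally Lipschitz in $u$ on the open set $\mathcal R:=\{D>0\}$; in particular, any solution whose graph lies in $\mathcal R$ is automatically $C^1$.

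Following the blueprint flagged in the introduction, I then construct explicit upper and lower comparison functions $\underline u\le\ol u$ on $(0,\infty)$ with $\underline u'(n)\le\Phi(n,\underline u(n))$ and $\ol u'(n)\ge\Phi(n,\ol u(n))$, and whose graphs lie in $\mathcal R$. Natural building blocks are (a)~the monopolistic marginal price from Lemma~\ref{lem:dealer:mon}, which corresponds to the $K=1$ extreme, and (b)~the algebraic root of $A(n,u)=0$, i.e.\ the curve $u=g(u+\gamma_c K n)+\gamma_d n$, which is well-defined and strictly increasing in $n$ precisely because Assumption~\ref{ass:g:oli} gives $\limsup g'<1$. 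The comparison inequalities then reduce to algebraic manipulations using log-concavity of $f$ and $0<g'<1$. With upper and lower solutions in hand, Picard--Lindel\"of applied to $\Phi$ on compact subsets of $\mathcal R$ produces, for each $n_0>0$ and each $u_0\in[\underline u(n_0),\ol u(n_0)]$, a $C^1$ local solution; a standard funnel/comparison argument confines it to $\{\underline u\le u\le\ol u\}$, allowing global extension to $(0,\infty)$ and yielding the monotonicity of $u$ together with the endpoint behaviour $u(0+)\in\RR$ and $u(+\infty)=+\infty$.

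Uniqueness is the principal obstacle, since the ODE carries no natural boundary condition. The plan is to use the ``tailor-made Gr\"onwall estimate'' mentioned in the introduction: for two funnel solutions $u_1,u_2$, subtracting the ODEs and invoking the local Lipschitz bound on $\Phi$ along the funnel gives $|u_1-u_2|'(n)\le L(n)|u_1-u_2|(n)$ with a locally integrable $L$, hence
$$|u_1(n)-u_2(n)|\le|u_1(n_1)-u_2(n_1)|\exp\!\Bigl(\int_{n_1}^{n}L(s)\,ds\Bigr)\quad\text{for all }n,n_1\in(0,\infty).$$
The explicit comparison pair is to be engineered so that the funnel width $\ol u-\underline u$ contracts to zero at one of the endpoints while $\int L\,ds$ from that endpoint stays finite; sending $n_1$ to that endpoint then forces $u_1\equiv u_2$. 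Applying the mirror construction on $(-\infty,0)$ (via $n\mapsto-n$) yields the unique increasing $C^1$ function $P'_*$ on $\RR\setminus\{0\}$. The crux of the work is exactly this last trade-off: choosing $\underline u$ and $\ol u$ so that simultaneous control of funnel contraction at one end and Gr\"onwall growth from the other can be closed off — everything else is structural and follows from Proposition~\ref{prop:g:mon}, Proposition~\ref{prop:log:concave:basics}, and standard ODE machinery.
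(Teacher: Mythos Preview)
Your existence outline is broadly aligned with the paper's: both build upper and lower solutions and invoke a comparison principle (the paper's Proposition~\ref{prop:walter}). Two adjustments are needed. First, your building block (a), the monopolistic marginal price, is not the correct comparison curve for the per-dealer ODE; the paper instead uses the curve $D(n,u)=0$ (denominator vanishing), which equals the monopolist price only after the $n\mapsto n/K$ rescaling to the aggregate schedule. Second, on $(0,\infty)$ the $A=0$ curve is the \emph{lower} solution and the $D=0$ curve the \emph{upper} one; since $D=0$ cannot be used directly (infinite slope), the paper regularizes via convex combinations $v_\epsilon=(1-\epsilon)v+\epsilon w$, $w_\epsilon=(1-\epsilon)w+\epsilon v$ and tunes $\epsilon$ so the comparison inequalities close.

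The uniqueness argument, however, has a genuine gap. Your proposed mechanism requires the funnel width $\ol u-\underline u$ to contract to zero at one endpoint, but neither endpoint does this: at $n=0^+$ the width is $w(0)-v(0)=(\id-g)^{-1}(0)-y_+>0$, and as $n\to\infty$ the width grows linearly (the paper shows $\limsup (w-v)/|w+K\gamma_c n|<\infty$). Moreover, your Gr\"onwall inequality goes in the wrong direction: on $(0,\infty)$ one has $\partial_u\Phi>0$ inside the funnel (since $A<0$, $D<0$, $\partial_u A<0$, $\partial_u D>0$), so two ordered solutions satisfy a \emph{lower} bound $(u_1-u_2)'\ge c(n)(u_1-u_2)$, not an upper bound. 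The paper exploits exactly this: it shows $c(n)\ge(K-1)\gamma_c\delta_+\, f/\ol F(w(n)+K\gamma_c n)$, integrates to obtain $(u_1-u_2)(n)\ge C\,\ol F(w(n)+K\gamma_c n)^{-\alpha}$ for a suitable $\alpha>0$, and then uses the tail condition implied by Assumption~\ref{ass:g:oli} (via log-concavity, $\ol F$ has exponential decay) to conclude that this lower bound outpaces the linear growth of $w-v$, a contradiction. In short: uniqueness comes from \emph{divergence} of the difference beating linear funnel growth at infinity, not from convergence of the funnel at either end.
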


\begin{remark}
Note that uniqueness for~\eqref{eq:lem:dealer:oli:unique:ODE} holds despite the absence of natural boundary conditions. For distributions with compact support, such boundary conditions are derived from a local analysis of the corresponding ODEs near the boundary points by~\cite{biais.al.00}. For distributions with support on the entire real line, monotonicity and finiteness of the solution suffice to guarantee uniqueness: there is only one value of the left and right derivatives at zero for which the solution remains increasing and  finite on the entire real line. 

\begin{figure}[htbp]
\begin{center}
\includegraphics[width=0.45\textwidth]{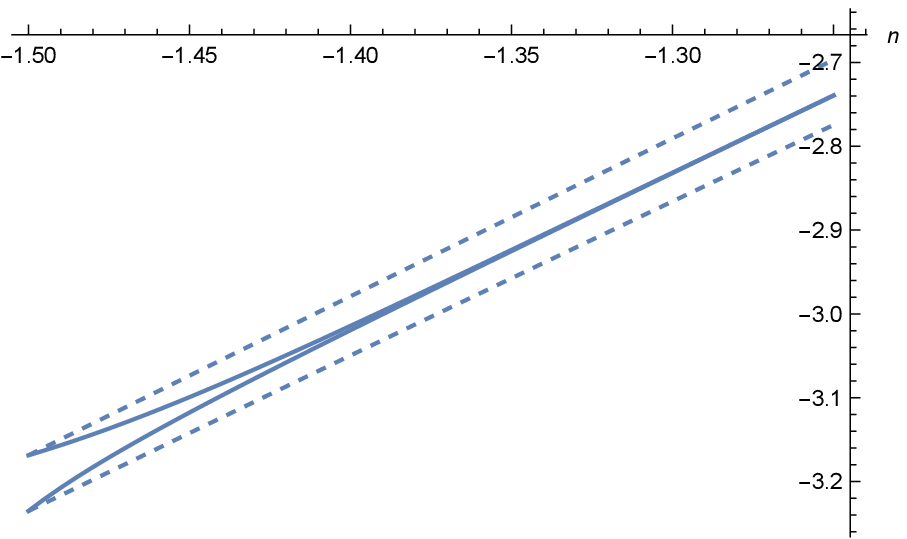}
\includegraphics[width=0.45\textwidth]{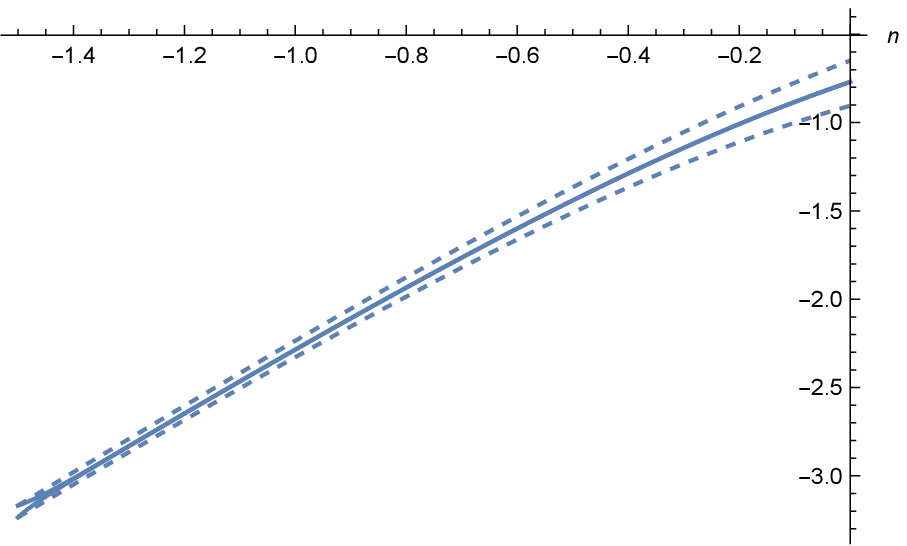}
\end{center}
\caption{Upper and lower solutions (dashed) and numerical solutions of the ODE~\eqref{eq:lem:dealer:oli:unique:ODE} starting from these boundary values for standard normal noise, client signals and inventories, $K=2$ dealers and inventory costs $\gamma_c=1$, $\gamma_d=0$.}
\label{fig:upperlower}
\end{figure}

Solving the equation numerically by a grid search for these derivatives is evidently extremely unstable. In contrast, the ODE can be solved in a stable manner (with concrete error bounds) by starting from the upper and lower solutions that we construct for the existence part of the proof of Theorem~\ref{thm:ODE}, and then solving the equation backwards, cf.~Remark~\ref{rem:recipe}. 
For standard normal primitives, this is illustrated in Figure~\ref{fig:upperlower}, which plots the numerical solutions of the ODE~\eqref{eq:lem:dealer:oli:unique:ODE} starting from the upper and lower solutions. Evidently, the convergence is very fast (as illustrated by the left panel that zooms in near the boundary points). Moreover, the right panel shows that the corresponding values of the solutions at zero (which are required for verifying Assumption~\ref{ass:f:oli}) are indistinguishable.
\end{remark}

Lemma~\ref{lem:dealer:oli:unique} and Theorem~\ref{thm:ODE} show that there is at most one symmetric Nash equilibrium. However, as pointed out by \cite{back.baruch.13}, existence of a well-behaved solution to the ODE~\eqref{eq:lem:dealer:oli:unique:ODE} is \emph{not} enough to identify a Nash equilibrium. The reason is that the ODE corresponds to the dealers' first-order conditions for pointwise optimality, which are not generally sufficient for global optimality here due to the absence of convexity in the dealers' optimization problems. 

As a way out, \cite{biais.al.13} impose additional restrictions on the primitives of the model that guarantee this convexity. Conditions of this type rule out standard examples like exponential or normal types, so we instead follow~\cite{back.baruch.13} in assuming that there is sufficient adverse selection in the market, in that the client's type has a sufficiently strong relation with the asset payoff. More specifically, our next result shows that if adverse selection \emph{and} the dealer's inventory costs (relative to the client's) are large enough, then the solution to the ODE~\eqref{eq:lem:dealer:oli:unique:ODE} indeed identifies the unique symmetric Nash equilibrium.

\begin{assumption}
	\label{ass:f:oli}
The distribution of the client type $f$ and the function $g$ from~\eqref{eq:condexp} satisfy
	\begin{align}
	& (F/f)' - g' \leq \frac{\gamma_d}{\gamma_c} \quad \text{on } (-\infty, P_*'(0-)] , \label{eq:ass:ex:f:f-}\\
	&(\bar F/f)' - g' \leq \frac{\gamma_d}{\gamma_c} \quad \text{on } [P_*'(0+), \infty), \label{eq:ass:ex:f:f+}
\end{align}
where $P'_*$ is the unique solution of the ODE~\eqref{eq:lem:dealer:oli:unique:ODE} from Theorem~\ref{thm:ODE}.
\end{assumption}

\begin{theorem}
	\label{thm:oli:ex}
Suppose Assumptions \ref{ass:type}, \ref{ass:g:oli} and \ref{ass:f:oli} are satisfied and let $P_*'$ be the solution of the ODE~\eqref{eq:lem:dealer:oli:unique:ODE} from Theorem~\ref{thm:ODE}. Then, ${\bf P_*} = (P_*, \ldots, P_*)$ with $P_*(n) = \int_0^n P'_*(x) \dd x$ is the unique symmetric Nash equilibrium for $K$ dealers. 
\end{theorem}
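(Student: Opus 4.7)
The plan is to separate the argument into uniqueness and existence. Uniqueness is immediate: Lemma~\ref{lem:dealer:oli:unique} forces the marginal prices of any symmetric Nash equilibrium to satisfy the ODE~\eqref{eq:lem:dealer:oli:unique:ODE}, while Theorem~\ref{thm:ODE} provides exactly one increasing, continuously differentiable solution; combined with $P_*(0) = 0$, this pins the candidate down. Admissibility of $P_*(n) = \int_0^n P_*'(x)\dd x$ then follows from Lemma~\ref{lem:schedules:hom}(a): since $P_*'$ is strictly increasing on each of $(-\infty,0)$ and $(0,\infty)$ and satisfies $P_*'(0-) \leq P_*'(0+)$ (a fact I would read off from the explicit upper and lower solutions constructed in Theorem~\ref{thm:ODE}), $P_*$ is strictly convex on $\RR$.

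The bulk of the work is to show that no dealer can profitably deviate. Fix an admissible $P_1$ such that $(P_1,P_*,\ldots,P_*)$ is compatible, and let $(n_1(y), n_0(y),\ldots,n_0(y))$ be the client's unique response from Theorem~\ref{thm:schedules het}. On the trading region the client's FOCs read
$$P_1'(n_1(y)) = P_*'(n_0(y)) = y - \gamma_c\bigl(n_1(y) + (K-1) n_0(y)\bigr),$$
so it is natural to parametrize by the common marginal price $m := P_*'(n_0(y))$: given $m$, the other dealers' trade $n_0 = (P_*')^{-1}(m)$ is fixed, while dealer 1's trade $n_1 = (P_1')^{-1}(m)$ is her free choice, and $y(m) = m + \gamma_c(n_1 + (K-1)n_0)$. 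Changing variables from $y$ to $m$ and applying integration by parts (using $P_1(0)=0$ together with the vanishing of $F$ and $\bar F$ at the appropriate infinities) rewrites $J_d^{P_*}(P_1)$ as an integral in $m$ whose integrand involves the weights $F(y(m))$ on the sell side and $\bar F(y(m))$ on the buy side, together with $g(y(m))$, $\gamma_d$, $\gamma_c$ and $n_1$. Pointwise first-order optimization in $n_1(m)$ at the symmetric profile $n_1 = n_0$ recovers exactly the ODE~\eqref{eq:lem:dealer:oli:unique:ODE}, confirming that $P_*$ is a critical point.

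The main obstacle is that $J_d^{P_*}$ is not concave in $P_1$ and, as the introduction emphasises, second-order methods are unavailable, so the verification must proceed by a first-order comparison only. This is precisely where Assumption~\ref{ass:f:oli} enters: the bounds $(F/f)' - g' \leq \gamma_d/\gamma_c$ on $(-\infty, P_*'(0-)]$ and $(\bar F/f)' - g' \leq \gamma_d/\gamma_c$ on $[P_*'(0+),\infty)$ are exactly what makes the partial derivative of the transformed integrand with respect to moving $n_1$ away from the symmetric value $n_0$ sign-definite throughout the relevant range of $m$. Combining this monotonicity with the ODE, which annihilates the first-order contribution along $P_*$, then yields $J_d^{P_*}(P_1) \leq J_d^{P_*}(P_*)$ by direct comparison of integrands. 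Two technical subtleties that I expect to demand care: (i) the no-trade intervals of $P_*$ and $P_1$ need not coincide, so the integral must be split according to the symmetric profile's trading region with the non-trade contributions shown to be zero or favorable; and (ii) compatibility only requires $P_1'$ to have finite limits $\ell_1, r_1$, which truncates the admissible range of $m$ and produces boundary terms in the integration by parts that must be shown to have the right sign.
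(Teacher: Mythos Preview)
Your high-level strategy matches the paper's: uniqueness from Lemma~\ref{lem:dealer:oli:unique} and Theorem~\ref{thm:ODE}, then a direct pointwise first-order verification that deviating from $P_*$ cannot help, with Assumption~\ref{ass:f:oli} supplying the monotonicity that substitutes for concavity. Where you and the paper part ways is the \emph{choice of integration variable}.

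The paper changes variables from the client type $y$ to dealer~1's own trade $n = n_1^{\bf P}(y)$. After one integration by parts this produces
\[
J_d^{P_*}(P_1)=\int_{-\infty}^0 \eta_-\bigl(n,\,(p_*)^{-1}(p_1(n)),\,p_1(n)\bigr)\dd n
+\int_0^{\infty}\eta_+\bigl(n,\,(p_*)^{-1}(p_1(n)),\,p_1(n)\bigr)\dd n,
\]
with explicit $\eta_\mp$ involving $F,\bar F,H,\bar H$. The integration domain is the fixed set $\RR\setminus\{0\}$, independent of $P_1$, and for each $n$ the pointwise competitor is the value $z=p_1(n)$, which forces the other dealers' trade $x=(p_*)^{-1}(z)$ to move along the equilibrium curve $\xi\mapsto(\xi,p_*(\xi))$ (with the flat segment $\{0\}\times[p_*(0-),p_*(0+)]$ for the spread). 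Assumption~\ref{ass:f:oli} enters by making $A(n,x,z)$ nondecreasing and $B_\mp(n,x,z)$ nonincreasing in the \emph{first} argument; combined with the ODE this gives the sign of $\frac{d}{d\xi}\eta_\mp(n,\xi,p_*(\xi))$ on each side of $\xi=n$. A four-case analysis ($x\le n$; $n<x<0$; $x=0$; $x>0$) then handles the jump at the spread.

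Your parametrization by the marginal price $m$ is the dual one: fix $m$, set $n_0=(p_*)^{-1}(m)$, and optimize over $n_1$. This can be made to work, but the path is longer. Changing variables $y\to m$ (or equivalently $n\to m=p_1(n)$) leaves a Jacobian factor $n_1'(m)$ in the integrand, so a \emph{second} integration by parts is needed before any pointwise optimization in $n_1$ makes sense; that second step generates boundary terms at $m\to\ell_1$ and at the edges of the no-trade zone, and the integration domain in $m$ now depends on $P_1$. Your point~(ii) is therefore a real complication of your parametrization rather than of the problem itself --- in the paper's $n$-variable it simply does not arise, because $n$ always ranges over $\RR\setminus\{0\}$ and compatibility is automatic once $p_*$ has infinite limits (Theorem~\ref{thm:ODE}).

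In short: the verification mechanism you sketch (ODE kills the first-order term at the symmetric point, Assumption~\ref{ass:f:oli} makes the residual sign-definite) is exactly the paper's, but your route to a pointwise integrand is one integration-by-parts longer and carries boundary terms that the paper's $n$-parametrization avoids entirely. The delicate part you correctly flag in~(i) --- the mismatch of no-trade zones --- is what drives the paper's case split at $x=0$; there Assumption~\ref{ass:f:oli} also has to be combined with the inequality $B_-\ge B_+$ to bridge the buy and sell ODEs across the spread, a step your outline does not yet anticipate.
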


Assumption~\ref{ass:f:oli} is analogous to the conditions~\eqref{eq:rem:mon:est-}-\eqref{eq:rem:mon:est+} for convexity of the optimal price schedules quoted by a monopolistic dealer. The only difference is the range of marginal prices on which these constraints need to be imposed. In the monopolistic case, these are given by the range of (the inverse of) an explicit function; here they are instead determined by the solution of the ODE~\eqref{eq:lem:dealer:oli:unique:ODE}. Using the explicit upper and lower solutions from the existence proof in Theorem~\ref{thm:ODE} as boundary values, upper and lower bounds can readily be computed numerically using standard solvers for (uniformly Lipschitz) ODEs. 

Alternatively, sufficient conditions in terms of model primitives can be derived directly from upper and lower solutions of the ODE. For standard normal client inventories and signals and risk-neutral dealers, these conditions are satisfied if the projection coefficient $\beta$ from Example~\ref{ex:normal} is bigger than $0.55$. However, the numerical solution of the corresponding ODEs suggests that \eqref{eq:rem:mon:est-}-\eqref{eq:rem:mon:est+} are in fact already satisfied for $\beta >0.465$ for two dealers ($K=2$) and for $\beta \geq 0.5$ if the number of dealers is very large.

The bound $\beta \geq 0.5$ coincides with the sufficient condition of~\cite{back.baruch.13}, which can be derived by exploiting the specific properties of the normal distribution. To wit, in this case, the competitive price schedule of~\cite{glosten.89} yields a smaller upper solution on $(-\infty, 0)$ and a larger lower solution on $(0,\infty)$ for ODE~\eqref{eq:lem:dealer:oli:unique:ODE}, that is still known in closed form for normally-distributed types. This in turn provides an explicit sufficient condition for our general condition \eqref{eq:ass:ex:f:f-}--\eqref{eq:ass:ex:f:f+}. The same argument can also be applied to general non-centered Gaussian types as in Example~\eqref{ex:normal}: 

\begin{remark}\label{rem:normaoli}
For normally distributed types as in Example~\ref{ex:normal}, suppose that $\gamma_d = 0$ and
\begin{equation}
	\label{eq:rem:dealer:oli:norm:est}
	\tfrac{\gamma_c^2 \mu_M^2}{\sigma_S^2 + \gamma_c^2  \sigma_M^2} < \frac{\pi}{2} 
	\quad  \text{and} \quad \beta \geq \frac{1}{2} + \frac{1}{2}\frac{\Phi}{\phi} \big( z_{\mathrm{oli}}(\beta )\big)\tfrac{\gamma_c |\mu_M|}{\sqrt{\sigma_S^2 + \gamma_c^2}}.
\end{equation}
Here, $\Phi$ and $\phi$ denote the cumulative distribution function and the probability density function of the standard normal distribution, respectively, and 
\begin{equation*}
	z_{\mathrm{oli}}(\beta) := -\frac{(1- \beta)}{2 \beta -1}\tfrac{\gamma_c |\mu_M|}{\sqrt{\sigma_S^2 + \gamma_c^2  \sigma_M^2}}, \quad  \mbox{for } \beta > \frac{1}{2}.
\end{equation*}
Then Assumptions \eqref{eq:ass:ex:f:f-}--\eqref{eq:ass:ex:f:f+} are satisfied. In particular, if $\mu_M = 0$, then $z_{\mathrm{oli}}(\beta) = 0$ and \eqref{eq:rem:dealer:oli:norm:est} specialises to the sufficient condition $\beta \geq 1/2$ of~\cite{back.baruch.13}.\footnote{The boundary case $\beta = 1/2$ can be treated with a limiting argument.}
\end{remark}

For Gaussian primitives as in~\cite{glosten.89,back.baruch.13}, the left panel in Figure~\ref{fig:comp} displays the impact of the dealers' inventory costs on the optimal price schedules with competition. Unlike in the monopolistic case, bid-ask spreads no longer remain invariant but instead increase with the dealers' inventory costs. 

\begin{figure}
\begin{center}
\includegraphics[width=0.45\textwidth]{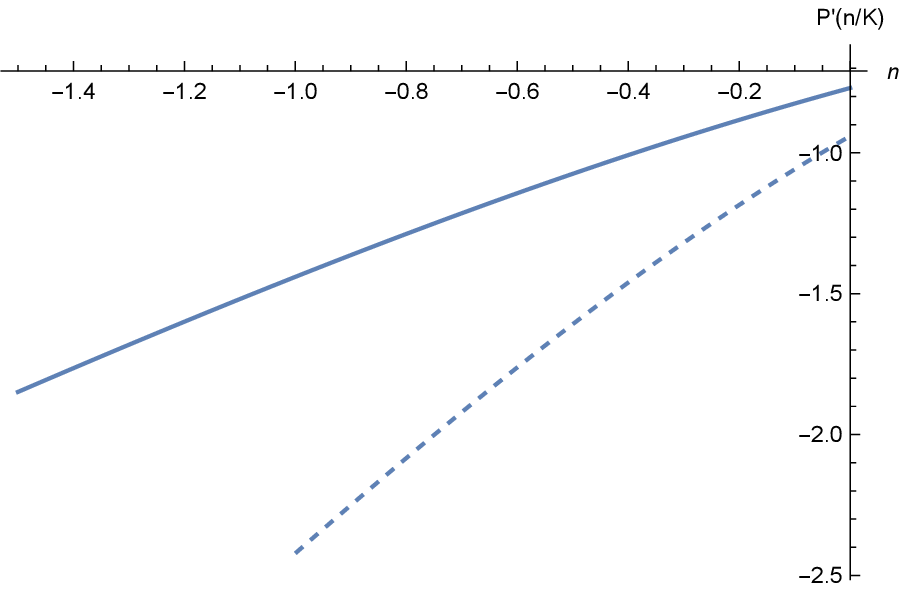}
\includegraphics[width=0.45\textwidth]{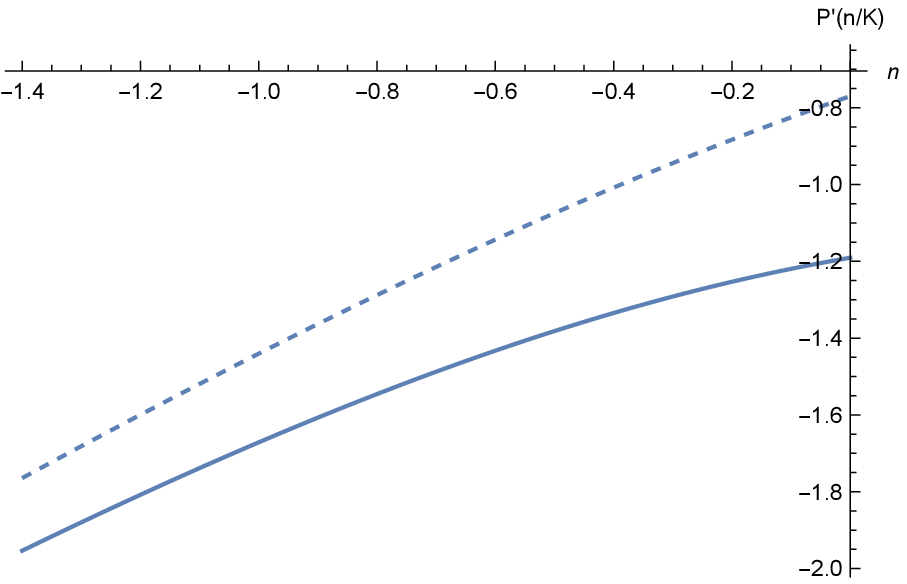}
\caption{Left panel: optimal marginal prices $P'(n/K)$ for $K=2$ dealers with inventory costs $\gamma_d=0$ (solid) and $\gamma_d=0.4$ (dashed). Right panel: marginal prices for $K=2$ competing dealers (solid) and a monopolist dealer (dashed)  with inventory costs $\gamma_d=0$. In each case, noise, client signals and inventories are standard normal ($\beta=0.5$) and the client's inventory cost is $\gamma_c=1$.}
\label{fig:comp}
\end{center}
\end{figure}

The right panel of Figure~\ref{fig:comp} compares the monopolistic marginal prices $P'_{\textrm{mon}}(n)$ from Lemma~\ref{lem:dealer:mon} to the marginal prices $P'_{\textrm{oli}}(n/K)$ of the aggregate price schedule $KP_{\textrm{oli}}(n/K)$ quoted by $K$ symmetric oligopolistic dealers, cf.~Theorem~\ref{thm:oli:ex}. We see that competition between the dealers leads to both tighter bid-ask spreads (as in \cite{ho.stoll.81}) and flatter price schedules (i.e., ``deeper'' markets). Both properties follow directly from the fact that, by the proof of Theorem \ref{thm:ODE}, the marginal prices $P'_{\textrm{mon}}(n)$ of the monopolist on $(-\infty, 0)$ are a lower solution to the ODE for the marginal prices $P'_{\textrm{oli}}(n/K)$ quoted in the oligopolistic case.
		
\appendix

\section{Proofs for Section~\ref{s:client}}

\begin{proof}[Proof of Lemma~\ref{lem:adm schedule}]
Since all dealers quote the same price schedule in the context of this lemma, the client's goal functional~\eqref{eq:client} simplifies. Indeed, the client's optimal trade $n^P(y)$, $y \in \mathbb{R}$ then is the unique maximizer of the scalar function 
\begin{equation}\label{eq:HP}
n \mapsto H^P(y; n) :=\frac{1}{K} J^P_c(y; n, \ldots, n) =  y n -P(n)  - \frac{K \gamma_c}{2} n^2.
\end{equation}

(a) Fix $y_{(0)} \in \RR$ and let $(y_{(i)})_{i \in \NN}$ be a sequence in $\RR$ with $\lim_{i \to \infty} y_{(i)} = y_{(0)}$. For $i \in \NN_0$, set $n_{(i)} := n^P(y_{(i)})$. We have to show that $\lim_{i \to \infty} n_{(i)} = n_{(0)}$. Denote by $\cA$ the set of all accumulation points in $[-\infty, \infty]$ of the sequence $(n_{(i)})_{i \in \NN}$. As $\cA \neq \emptyset$, it suffices to show that $\cA \cap \left([-\infty, \infty] \setminus \{n_{(0)}\}\right) = \emptyset$.

First, we show that  $-\infty, \infty \notin \cA$. We only establish the claim for $\infty$; the corresponding assertion for $-\infty$ follows from a similar argument. Seeking a contradiction, suppose that $\infty \in \cA$. Then there exists a subsequence, again denoted by $(n_{(i)})_{i \in \NN}$ for convenience, such that $\lim_{i \to \infty} n_{(i)} = \infty$. By maximality of each $n_{(i)}$ for $H^P(y_{(i)}; \cdot)$ and continuity of $H^P(\cdot,n_{(0)})$, 
\begin{equation*}
\liminf_{i \to \infty} H^P\left(y_{(i)}; n_{(i)}\right) \geq \liminf_{i \to \infty} H^P\left(y_{(i)}; n_{(0)}\right) = H^P\left(y_{(0)}; n_{(0)}\right).
\end{equation*}
Together with maximality of $n^P(y_{(0)}+ 1)$ for $H^P(y_{(0)} + 1; \cdot)$ and the definition of $H^P$ in~\eqref{eq:HP}, this leads to the desired contradiction:
\begin{align*}
H^P\left(y_{(0)}+1; n^P(y_{(0)} + 1)\right) &\geq \liminf_{i \to \infty} H^P\left(y_{(0)}+1; n_{(i)}\right) \\
&= \liminf_{i \to \infty} \left(n_{(i)} \left(1 + y_{(0)} - y_{(i)}\right)   + H^P\left(y_{(i)}; n_{(i)}\right) \right) \\
&\geq \liminf_{i \to \infty} \left(n_{(i)} \left(1 + y_{(0)} - y_{(i)}\right)\right) + H^P\left(y_{(0)}; n_{(0)}\right) \\
& =\infty.
\end{align*}

It remains to show that $ \cA \cap (\RR \setminus \{n_{(0)}\}) = \emptyset$. Seeking a contradiction, suppose there is $\bar n \in \cA \cap  (\RR \setminus \{n_{(0)}\}) $.  Then there exists a subsequence, again denoted by $(n_{(i)})_{i \in \NN}$ for convenience, such that $\lim_{i \to \infty} n_{(i)} = \bar{n}$. Recall that $n_{(0)}$ is the \emph{unique} maximizer of $H^P(y_{(0)}; \cdot)$, $H^P$ is continuous in both variables, and each $n_{(i)}$ is maximal for $H^P(y_{(i)};\cdot)$. Whence, we again arrive at a contradiction:
\begin{equation*}
H^P\left(y_{(0)}; n_{(0)}\right) > H^P\left(y_{(0)}; \ol n\right) = \lim_{i \to \infty} H^P\left(y_{(i)}; n_{(i)}\right) \geq \liminf_{i \to \infty} H^P\left(y_{(i)}; n_{(0)}\right) = H^P\left(y_{(0)}, n_{(0)}\right).
\end{equation*}
We conclude that the function $n^P(y)$ is continuous as asserted.

(b) To prove this part of the lemma, we only show the following formally weaker claim:

\medskip

\noindent (b') There are $a^P \in (-\infty, \infty]$ and $b^P \in [-\infty,\infty)$ such that $I^P = (-\infty, a^P) \cup (b^P, +\infty)$.

\medskip

We will then use only (b') to establish (c) and argue in (c) that  $a^P \leq b^P$. Since $H^P(y, 0) = 0$, we have $I^P = I^P_+ \cup I^P_-$, where
\begin{align*}
I^P_+ &:= \left\{y \in \RR: \text{there is } n > 0 \text{ with } H^P(y; n) > 0\right\}, \\
I^P_- &:= \left\{y \in \RR: \text{there is } n < 0 \text{ with } H^P(y; n) > 0\right\}.
 \end{align*}
These sets are both nonempty, because
\begin{equation}
\label{eq:lim m infty}
\lim_{y \to \infty} H^P(y; 1) = \infty \quad \text{and}  \quad \lim_{y \to -\infty} H^P(y, -1) = \infty.
\end{equation}
Observe that if $y \in I^P_-$, then $y' \in I^P_-$ for all $y' \leq y$ and also for all $y' > y$ in a (sufficiently small) neighbourhood of $y$. Set $a^P := \sup\{y \in \RR: y \in I_-^P\} \in (-\infty, \infty]$. The previous argument implies that $I_-^P = (-\infty, a^P)$. Similarly, set $b^P := \inf\{y \in \RR: y \in I_+^P\} \in [-\infty, \infty)$. Then $I^P_+ = (b_P, \infty)$.

(c) For $y \in I^P$, we have $n^P(y) \neq 0$ so that $H^P(y,\cdot)$ is differentiable in $n$ at $n^P(y)$. Hence, maximality of $n^P(y)$ for $H^P(y, \cdot)$ implies the following first-order condition (FOC): 
\begin{equation}
\label{eq:lem schedule 1d:FOC}
\frac{\diff H^P}{\diff n}\left(y; n^P(y)\right) = 0, \quad y \in I^P.
\end{equation}
We now use this to show that for all $\ol y \in I^P$, there is an open neighbourhood $U_{\ol y} \subset I^P$ of $\ol y$ such that for all $y \in U_{\ol y}$,
\begin{equation}
n^P(y)
\begin{cases}
< n^P(\ol y) & \text{if } y < \ol y, \\
> n^P(\ol y)  & \text{if } y > \ol y.
\end{cases}
\label{eq:loc inc}
\end{equation}
Fix $\ol y \in I^P$ and set $\ol n := n^P(\ol y)$. As $\ol n$ is the unique maximum of $H^P(\ol y; \cdot)$, there exists an open neighbourhood
$U_{\ol n}$ of $\ol n$ such that, for all $n \in U_{\ol n}$, we have
\begin{equation}
\label{eq:FOC}
\frac{\diff H^P}{\diff n}(\ol y; n) 
\begin{cases}
> 0 & \text{if } n < \ol n, \\
< 0 & \text{if } n > \ol n.
\end{cases}
\end{equation}
As $n^P$ is continuous, there exists an open neighbourhood $U_{\ol y} \subset I^P$ of $\ol y$ such that $n^P(U_{\ol y}) \subset U_{\ol n}$. Now, if $y \in U_{\ol y}$ with $y < \ol y$, then \eqref{eq:lem schedule 1d:FOC} and the definition of $H^P$ give
\begin{equation*}
\frac{\diff H^P}{\diff n}\left(\ol y; n^P(y)\right)= \frac{\diff H^P}{\diff n}\left(\ol y; n^P(y)\right) - \frac{\diff H^P}{\diff n}\left(y; n^P(y)\right) = \ol y - y > 0.
\end{equation*}
Together with \eqref{eq:FOC}, this implies $n^P(y) < \ol n$. Similarly, if $y \in U_{\ol y}$ with $y > \ol y$, then $n^P(y) > \ol n$. Therefore, \eqref{eq:loc inc} indeed holds on a suitable neighbourhood of $\bar{y}$.

We proceed to show that \eqref{eq:loc inc} together with (b') and Lemma~\ref{lem:increasing} implies that $n^P$ is increasing on $I^P$. First, if $a^P > b^{P}$, then $I^P = \RR$ and the claim follows directly from Lemma~\ref{lem:increasing}. Otherwise, if $a^P \leq b^P$, Lemma~\ref{lem:increasing} implies that $n^P$ is increasing on $I^P_+ = (-\infty, a^P)$ and on $I^P_- = (b^P, \infty)$. Since $n^P$ is zero on $\RR \setminus (I^P_+ \cup I^P_-) = [a^P, b^P]$ and continuous on $\RR$, it follows that  $n^P$ is negative on $(-\infty, a^P)$ and positive on $(b^P, \infty)$. Hence, $n^P$ is also increasing on $I^P$.

Finally, we show that $\lim_{y \to \infty} n^P(y) = \infty$ and $\lim_{y \to -\infty} n^P(y) = -\infty$. Recall that $n^P$ is continuous on $\RR$ (cf.~(a)), increasing and nonzero on $I^P$ (as shown above) and zero on $\RR \setminus I^P$ by definition. Therefore, the limits at $\pm\infty$ in turn yield that $a^P \leq b^P$,  $n^P(I^P) = \RR \setminus \{0\}$ and $\lim_{y \uparrow a^P} n^P(y) = 0 =  \lim_{y \downarrow b^P} n^P(y)$.

We only spell out the argument for $\lim_{y \to -\infty} n^P(y) = -\infty$; the corresponding argument for $\lim_{y \to \infty} n^P(y) = \infty$ is similar. Set $n(-\infty) := \lim_{y \to -\infty} n^P(y)$. As $n^P$ is increasing on $I^P$ and $y \in I^P$ for $y < a^P$, it follows that $n(-\infty)$ is well defined and valued in $[-\infty,\infty)$. Seeking a contradiction, suppose that $n(-\infty) > -\infty$. We distinguish two cases. First, assume that $n(-\infty) \geq 0$. Then $n(y) > 0$ for all $y  \in I^P$. In particular for all $y < \min(a^P, 0)$, by maximality of $n^P(y)$ for $H^P(y;\cdot)$ and the definition of $H^P$, we obtain
\begin{equation*}
H^P(y; -1) \leq H^P\left(y; n^P(y)\right) \leq H^P\left(0; n^P(y)\right) \leq H^P\left(0, n^P(0)\right).
\end{equation*}
Hence, it follows that
\begin{equation*}
\lim_{y \to -\infty}H^P(y; -1) \leq H^P\left(0, n^P(0)\right),
\end{equation*}
which contradicts~\eqref{eq:lim m infty}. Next, assume that $n(-\infty) \in (-\infty, 0)$. Then there exists $\ol y < a^P$ such that $n^P(y) < 0$ for all $y \leq \ol y$. By the FOC~\eqref{eq:lem schedule 1d:FOC}, we have
\begin{equation*}
K \gamma_c n^P(y) + P'\left(n^P(y)\right) = y, \quad  \mbox{for $y \leq \ol y$}.
\end{equation*}
As $P'$ is continuous on $(-\infty,0)$, taking limits as $y \to -\infty$ yields
\begin{equation*}
K\gamma_c n(-\infty) + P'(n(-\infty))=-\infty.
\end{equation*}
This contradicts $n(\infty) \in (-\infty, 0)$. In summary, we conclude that $n(\infty)=\infty$ as claimed.
\end{proof}

\begin{proof}[Proof of Lemma~\ref{lem:schedules:hom}] 
(a) ``$\Rightarrow$'': Suppose $P$ is admissible for $K=1$ dealer. To establish strict convexity of $n \mapsto \frac{\gamma_c}{2} n^2 + P(n)$, it suffices to show that $n \mapsto \gamma_c n + P'(n)$ is increasing on $\RR \setminus \{0\}$. So fix $n_1, n_2 \in \RR \setminus \{0\}$ with $n_1 < n_2$. Recall that by the first-order condition~\eqref{eq:lem schedule 1d:FOC},
$$0=\frac{\diff }{\diff n}H^P\left((n_{P})^{-1}(n_i); n_i\right) = (n_{P})^{-1}(n_i)-P'(n_i)-\gamma_c n_i, \quad \mbox{for $i \in \{1, 2\}$,}$$
and $(n_{P})^{-1}$ is increasing on $\RR \setminus \{0\}$  by Lemma~\ref{lem:adm schedule}(c). As a consequence,
\begin{equation*}
	\gamma_c n_1 + P'(n_1) - (\gamma n_2 + P'(n_2)) = (n_{P})^{-1}(n_1) - (n_{P})^{-1}(n_2) < 0.
\end{equation*}
Hence, $n \mapsto \gamma_c n + P'(n)$ is increasing on  $\RR \setminus \{0\}$, and  $n \mapsto \frac{\gamma_c}{2} n^2 + P(n)$ is in turn strictly convex. Finally, as $\gamma_c n +P'(n) = (n_{P})^{-1}(n)$ by \eqref{eq:lem schedule 1d:FOC} for all $n \in \NN$, Lemma~\ref{lem:adm schedule}(c) yields $\lim_{n \to \pm\infty} \left(\gamma_c n +P'(n)\right) = \pm\infty$.

Conversely, suppose $n \mapsto \frac{\gamma}{2} n^2 + P(n)$ is strictly convex on $\RR$ with  $\lim_{n \to \pm\infty} \left(\gamma_c n +P'(n)\right) = \pm\infty$. By \cite[Proposition B.22(b)]{bertsekas.99}, the function $P$ has left and right derivatives for all $n \in \RR$, denoted by $P'(n-)$ and  $P'(n+)$ (which coincide with $P'(n)$ for $n \neq 0$). Moreover, for each fixed $y \in \RR$, the function $H^{P}(y;\cdot)$ is strictly concave on $\RR$. Hence it has at most one maximum, and by \cite[Proposition B.24(f)]{bertsekas.99},
$n$ is a maximum of $H^{P}(y;\cdot)$ if and only if
\begin{equation*}
	0 \in \partial_n H^{P}(y; n) = \left[y- \gamma_c n - P'(n+), y - \gamma_c n - P'(n-)\right].
\end{equation*}
Here, $\partial_n H^{P}(y; n)$ denotes the subdifferential of  $H^{P}(y; \cdot)$ at $n$.
Concavity of the functions $H^{P}(y;\cdot)$, continuous differentiability of $P$ on $\mathbb{R} \setminus \{0\}$, and $\lim_{n \to \pm\infty} \left(\gamma_c n +P'(n)\right) = \pm\infty$ imply that $\bigcup_{n \in \RR} \partial_n H^{P}(y; n) = \RR$ for each $y \in \RR$. Hence, for each $y \in \RR$, there exists $n \in \RR$ such that $0 \in \partial_n H_{P}(y;n)$.

(b)	``$\Rightarrow$'': Let $P$ be an admissible price schedule for $K \geq 2$ dealers. We first show that $P$ is convex. To this end, it suffices to check that $P''(n) \geq 0$ for all $n \in \RR \setminus \{0\}$. So fix $n \in \RR \setminus \{0\}$ and set $y := (n^P)^{-1}(n) \in I^P$, so that ${\bf n} := (n, \ldots, n) \in \mathbb{R}^K$ is the unique maximum of the function $J^P_c(y; \cdot)$. As $n \neq 0$, $J^P_c(y; \cdot)$ is twice differentiable in $\mathbf{n}$ and the second-order necessary optimality condition in turn implies that
	\begin{equation*}
	\nabla^2_{\bf n} J^P_c\left(y; {\bf n}\right) = -\gamma_c \ones \ones^\top- \diag (P''(n), \ldots P''(n)) \;\mbox{is negative semidefinite}.
	\end{equation*}
	(Here, $\ones =(1, \ldots, 1)^\top \in \mathbb{R}^K$.) Hence, all eigenvalues of $\gamma_c \ones\ones^\top + \diag (P''(n), \ldots P''(n))$ are nonnegative. Using the matrix determinant lemma, it is not difficult to verify that the matrix $\gamma_c \ones \ones^\top + \diag (P''(n), \ldots P''(n))$ has the eigenvalue $K \gamma_c + P'(n)$ with algebraic multiplicity $1$ and the eigenvalue $P''(n)$ with algebraic multiplicity $K-1$. Whence, for $K \geq 2$, we have $P''(n) \geq 0$ so that $P$ is indeed convex.
	
	We proceed to show that the price schedule $P$ is even strictly convex. To this end, it suffices to show that $P'$ is increasing on $\RR \setminus \{0\}$. Seeking a contradiction, suppose that there are $n_1, n_2 \in \RR \setminus \{0\}$ with $n_1 < n_2$ such that $P'(n_1) \geq P'(n_2)$. Since $P'$ is nondecreasing on $\RR \setminus \{0\}$ by convexity of $P$ on $\RR$, it follows that $P'(n) = P'(n_1) = P'(n_2)$ for all $n \in [n_1, n_2] \setminus \{0\}$. Set
	\begin{equation*}
	\tilde n_2 :=
	\begin{cases}
	n_2 & \text{if } n_1 > 0, \\
	\frac{n_1}{2} &\text{if } n_1 < 0.
	\end{cases}
	\end{equation*}
	Then $\tilde n_2 > n_1$, we have $[n_1, \tilde n_2] \subset \RR \setminus \{0\}$, and $P'$ is constant on $[n_1, \tilde n_2]$. As a consequence, $P(n) = P(n_1) + P'(n_1) (n-n_1)$ for all $n \in [n_1, \tilde n_2]$. Let $\ol n \in (n_1, \tilde n_2)$, choose $\epsilon > 0$ such that $\ol n -\epsilon, \ol n +\epsilon \in [n_1, \tilde n_2]$, and set $\ol y := (n_P)^{-1}(\ol n)$. Then
	\begin{align*}
	J^P_c(\ol y; \ol n, \ldots, \ol n) &= J^P_c(\ol y; \ol n -\epsilon, \ol n+ \epsilon, \ol n, \ldots, \ol n).
	\end{align*}
	Hence, the function $J^P_c(\ol y; \cdot)$ has at least two maximizers, contradicting the admissibility of $P$.
	
	``$\Leftarrow$'':  this part of the assertion follows from Theorem~\ref{thm:schedules het}, which treats the more general case of not necessarily symmetric price schedules. (Note that the proof of this result only uses the ``$\Rightarrow$'' direction of Lemma~\ref{lem:schedules:hom}, which has already been established.)
\end{proof}

\begin{proof}[Proof of Theorem~\ref{thm:schedules het}]
First, assume that $P_1, \ldots, P_K$ are compatible. Then there is $\ol y \in \mathbb{R}$ such that the client's problem $\max_{\mathbf{n}} J^{\mathbf{P}}(\ol y; \mathbf{n})$ has a solution ${\bf \ol n} = (\ol n_1, \ldots, \ol n_K)$. Since the price schedules $P_1, \ldots, P_K$ are admissible, they are strictly convex by Theorem \ref{lem:schedules:hom}(b). As a consequence, the client's (normalized) goal function $J^{\bf P}(\ol y, \cdot)$ is strictly concave and therefore has only one maximizer. Moreover, ${\bf \ol n} \in \RR^K$ is a maximizer of $J^{\bf P}(\ol y, \cdot)$ if and only if ${\bf 0} \in \partial_{\bf n} J^{\bf P}(\ol y, {\bf \ol n})$. Hence, for each $k \in \{1, \ldots, K\}$, we have
\begin{equation*}
	{\bf 0} \in \left(\partial_{\bf n} J^{\bf P}(\ol y, {\bf \ol n})\right)_k = \left[ \ol y - \gamma_c \sum_{i=1}^K \ol n_i - P'_k(\ol n_k+),\ol y - \gamma_c \sum_{i=1}^K \ol n_i - P'_k(\ol n_k-)\right].
\end{equation*}
Since $P_k'$ is increasing on $\mathbb{R} \setminus \{0\}$ by strict convexity of $P_k$, this is equivalent to
\begin{equation*}
	\ol y - \gamma_c \sum_{i=1}^K \ol n_i \in \left[ P'_k(\ol n_k-), P'_k(\ol n_k+)\right] \subset \left(\ell_k, r_k\right).
\end{equation*}
As this holds for any $k \in \{1,\ldots,K\}$, it follows that $\bigcap_{k=1}^K (\ell_k, r_k) \neq \emptyset$. This in turn yields \eqref{eq:ol:lr}.

\medskip{}
Conversely, assume that \eqref{eq:ol:lr} is satisfied. We proceed to show that then (b) and (c) and in turn (a) are satisfied. The latter also implies a fortiori that $P_1, \ldots, P_K$ are compatible.

For $k \in \{1, \ldots, K\}$, set $p_k := P_k'$ and denote its inverse function by $p_k^{-1}$, with the convention that $p_k^{-1}(m) = 0$ if $m \in [p_k(0-), p_k(0+)].$ Then each $p_k^{-1}$ is nondecreasing on  $(\ol\ell, \ol r)$ and increasing on  $(\ol\ell, \ol r) \setminus [p_k(0-), p_k(0+)]$. Hence, the function $y^{\bf P}: (\ol\ell, \ol r)\to \RR$ defined by
\begin{equation*}
	y^{\bf P}(m) := m + \gamma_c \sum_{k =1}^K p^{-1}_k(m),
\end{equation*}
is continuous and increasing. Moreover, it satisfies $\lim_{m \to \ol\ell} y^{\bf P}(m) = -\infty$ and $\lim_{m \to \ol r} y^{\bf P}(m) = \infty$. To wit, there exists at least one $k_1 \in \{1, \ldots, K\}$ such that $\ell_{k_1} = \ol\ell$ and at least one $k_2 \in \{1, \ldots, K\}$ such that $r_{k_2} = \ol r$. For these indices, we then have $\lim_{\ell \to  \ell_{k_1}} p^{-1}_{k_1}(\ell) = -\infty$ and $\lim_{\ell \to  r_{k_2}} p^{-1}_{k_2}(\ell) = \infty$ by the definition of $\ell_{k_1}$ and $\ell_{k_2}$ in~\eqref{eq:lr}. Now, for $k =1,\ldots,K$, define the functions $n_k^{\bf P}: \RR \to \RR$ by
\begin{equation*}
	n_k^{\bf P}(y) := p^{-1}_k \left((y^{\bf P})^{-1}(y)\right).
\end{equation*}
Then, in view of the properties of the functions $p^{-1}_k$ and $y^{\bf P}$ established above, $n_k^{\bf P}$ is continuous, nondecreasing on $\RR$ and increasing on $I^{\bf P}_k$. So we have (c). This in turn implies that 
$$n_k^{\bf P}(I^{\bf P}_k) = \left(\lim_{y \to -\infty} p^{-1}_k \left((y^{\bf P})^{-1}(y)\right), \lim_{y \to \infty} \left((y^{\bf P})^{-1}(y)\right) \right) = (p^{-1}_k(\ol \ell), p^{-1}_k(\ol r))$$
and we have (b). To complete the proof, it now remains to establish (a). Let $\ol y \in \mathbb{R}$. We need to establish that $\ol{\bf n} = (\bar{n}_1,\ldots,\bar{n}_K):=  (n_1^{\bf P}(\ol y), \ldots, n_K^{\bf P}(\ol y))$ is the maximizer of $J^{\bf P}(\ol y, \cdot)$. Uniqueness follows as in the first part of the proof by strict convexity of $P_1, \ldots, P_K$. By \cite[Proposition B.24(f)]{bertsekas.99}, it suffices show that $ {\bf 0} \in \partial_{\bf n} J^{\bf P}(\ol y; {\bf \ol n})$. So fix $k \in \{1, \ldots, K\}$. We first consider the case $\ol n_k \neq 0$. Then, the definition of $n_k^{\bf P}$ and $P'_k(\ol n_k) = (y^{\bf P})^{-1}(\ol y)$ give
\begin{align*}
	\left(\partial_{\bf n} J^{\bf P}(\ol y; {\bf \ol n})\right)_k &= \left\{\ol y - \gamma_c \sum_{i=1}^K p^{-1}_k((y^{\bf P})^{-1}(\ol y)) - (y^{\bf P})^{-1}(\ol y)\right\} = \left\{\ol y -  y^{\bf P}((y^{\bf P})^{-1}(\ol y)) \right\} = \{0\},
\end{align*}
so that $0 \in \left(\partial_{\bf n} J^{\bf P}(\ol m, {\bf \ol n})\right)_k $. Next, we turn to the case $\ol n^k = 0$. Then,
\begin{equation*}
	\left(\partial_{\bf n} J^{\bf P}(\ol y; {\bf \ol n})\right)_k = \left[\ol y - \gamma_c \sum_{i=1}^K \ol n_i - p_k(0+),\ol y - \gamma_c \sum_{i=1}^K \ol n_i - p_k(0-)\right].
\end{equation*}
The definition of $y^{\bf P}$ in turn yields that zero is an element of this subdifferential also in the second case,
\begin{equation*}
	0 = \ol y - \gamma_c \sum_{i=1}^K p^{-1}_k((y^{\bf P})^{-1}(\ol y)) - (y^{\bf P})^{-1}(\ol y) \in \left(\partial_{\bf n} A_{\bf P}(\ol m, {\bf \ol n})\right)_k.
\end{equation*}
Here, the set membership follows from the definition of $n^{\bf P}_k$ and from $(y^{\bf P})^{-1}(\ol y) \in [p_k(0-), p_k(0+)]$ (which holds by definition of $p^{-1}_k$ because $\ol n_k=0$). 
\end{proof}

\begin{example}\label{ex:counter}
	The following price schedules are admissible for $K=2$ dealers, but not compatible:
	\begin{equation*}
		P_1(x) := \int_0^x \arctan(y) \dd y, \quad P_2(x) := \int_0^x (\arctan(y) + \pi) \dd x.
	\end{equation*}
	Indeed, $(r_1, \ell_1) = (-\frac{\pi}{2}, \frac{\pi}{2})$ and $(r_2, \ell_2) = (\frac{\pi}{2}, \frac{3}{2} \pi)$, so that $P_1, P_2$ are not compatible by Theorem~\ref{thm:schedules het}. Analogous counterexamples can be constructed for distributions with compact support, which shows that a compatibility condition for unilateral deviations also needs to be imposed in the setting of~\cite{biais.al.00}.
\end{example}


For the proofs of Lemmas~\ref{lem:dealer:mon} and \ref{lem:dealer:oli:unique}, we need the following estimates:

	\begin{lemma}
		\begin{enumerate}
			\item Let $P$ be an admissible price schedule for $K=1$ dealer. Then:
			\begin{equation}
				\label{eq:lem:schedule:1:estimate}
				|n^P(y)| \frac{P'(0-) + P'(0+)}{2} -\frac{\gamma_c}{2} n^P(y)^2 \leq P(n^P(y)) \leq \frac{1}{2\gamma_c} y^2.
			\end{equation}
\item Let $P_1, \ldots, P_K$ be admissible price schedules for $K \geq 2$ dealers that are compatible and set ${\bf P} := (P_1, \ldots, P_K)$. Then for all $k \in \{1, \ldots, K\}$ and $y \in \RR$,\footnote{Here, we use (as always) the convention that $(P'_k)^{-1}(0) = 0$ if $0 \in [P'_k(0-), P'_k(0+)]$.}
			\begin{equation}
				|n^{\bf P}_k(y)| \frac{P'_k(0-) + P'_k(0+)}{2}  \leq P_k(n^{\bf P}_k(y)) \leq \frac{1}{2\gamma} y^2 + \frac{\gamma}{2} \bigg(\sum_{j \neq K} (P'_j)^{-1}(0) \bigg)^2.
				\label{eq:lem:oli:estimate}
			\end{equation}
		\end{enumerate}
	\end{lemma}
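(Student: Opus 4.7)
The plan is to handle upper and lower bounds separately, exploiting optimality of the client's trade for the upper bounds and the (strict) convexity properties from Lemma~\ref{lem:schedules:hom} for the lower bounds.

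For part (a), the upper bound comes from pointwise optimality: since $n^P(y)$ maximizes $H^P(y;\cdot) = y n - P(n) - \frac{\gamma_c}{2}n^2$ and $n = 0$ is feasible with $H^P(y;0) = 0$, I get
$$P(n^P(y)) \leq y\, n^P(y) - \tfrac{\gamma_c}{2}\, n^P(y)^2.$$
Applying Young's inequality $y n \leq \frac{1}{2\gamma_c} y^2 + \frac{\gamma_c}{2} n^2$ then yields $\frac{1}{2\gamma_c} y^2$. For the lower bound, Lemma~\ref{lem:schedules:hom}(a) shows that $Q(n) := \frac{\gamma_c}{2} n^2 + P(n)$ is strictly convex on $\RR$ with $Q(0) = 0$ and subdifferential $\partial Q(0) = [P'(0-), P'(0+)]$. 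Since the midpoint $q := \frac{P'(0+)+P'(0-)}{2}$ lies in this subdifferential, the subgradient inequality gives $Q(n) \geq q n$, i.e., $P(n) \geq q n - \frac{\gamma_c}{2} n^2$. To upgrade $n$ to $|n|$, I do a sign-based case analysis on $n^P(y)$: for $n^P(y) > 0$ this is immediate because $n = |n|$, while for $n^P(y) < 0$ I use the sharper subgradient $q = P'(0-)$ at the left endpoint of $\partial Q(0)$, which gives $Q(n) \geq P'(0-) n = -P'(0-)|n|$, and then verify the inequality $-P'(0-) \geq q$ (using $P'(0-) \leq P'(0+)$).

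Part (b) follows the same template with two adjustments. For the lower bound, Lemma~\ref{lem:schedules:hom}(b) ensures that each $P_k$ itself is strictly convex on $\RR$, so the subgradient inequality at $0$ yields $P_k(n) \geq \frac{P'_k(0+)+P'_k(0-)}{2}\, n$ directly, with no $\frac{\gamma_c}{2} n^2$ offset; the conversion from $n$ to $|n|$ proceeds as in (a). For the upper bound, I compare the client's optimum ${\bf n}^{\bf P}(y)$ against the alternative $\tilde n_k = 0$ and $\tilde n_j = (P'_j)^{-1}(0)$ for $j \neq k$. The point $(P'_j)^{-1}(0)$ is the (unique) minimizer of $P_j$ on $\RR$, since by strict convexity of $P_j$ we have $0 \in \partial P_j((P'_j)^{-1}(0))$; hence $P_j(n^{\bf P}_j(y)) \geq P_j((P'_j)^{-1}(0))$, and these contributions cancel favorably in the optimality inequality. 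Expanding the inequality and grouping the quadratic in $\sum_{i} n^{\bf P}_i(y)$ against $\sum_{j \neq k}(P'_j)^{-1}(0)$, then applying Young's inequality to the cross term $y\, n^{\bf P}_k(y)$, produces the two summands $\frac{1}{2\gamma_c} y^2$ and $\frac{\gamma_c}{2}\left(\sum_{j \neq k}(P'_j)^{-1}(0)\right)^2$ on the right-hand side.

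The main technical obstacle is the passage from $n$ to $|n|$ in the lower bounds: the midpoint subgradient inequality by itself only gives the bound with $n$, and upgrading to $|n|$ requires choosing a sign-appropriate subgradient and verifying the corresponding numerical inequality on the marginal prices at zero. For the (b) upper bound, the cleanest choice of comparison point (namely $\tilde n_j = (P'_j)^{-1}(0)$ rather than $\tilde n_j = n^{\bf P}_j(y)$) is what allows the bound to depend only on $\sum_{j \neq k}(P'_j)^{-1}(0)$ and not on $\sum_{j \neq k} n^{\bf P}_j(y)$, which would have made the bound $y$-dependent in an unwanted way.
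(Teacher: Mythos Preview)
Your overall strategy matches the paper's: convexity/subgradient inequalities for the lower bounds and optimality (the client's optimum beats a reference point, then Young's inequality) for the upper bounds. Your upper bound in (a) is in fact cleaner than the paper's mean-value-theorem route, since comparing $n^P(y)$ directly against $n=0$ yields $P(n^P(y)) \le y\,n^P(y) - \tfrac{\gamma_c}{2}n^P(y)^2$ in one line.

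There is, however, a genuine error in your passage from $n$ to $|n|$ in the lower bounds. You claim that ``$-P'(0-) \geq q$ follows from $P'(0-) \leq P'(0+)$,'' where $q=\tfrac{P'(0-)+P'(0+)}{2}$. This is false: if $P'(0-)=1$ and $P'(0+)=2$ then $-P'(0-)=-1<1.5=q$. Concretely, take $\gamma_c=1$ and $P(n)=n$ for $n<0$, $P(n)=2n$ for $n>0$; this is admissible for $K=1$, one checks $n^P(0)=-1$, and the claimed lower bound would read $1 \le P(-1)=-1$. The same obstruction arises verbatim in your part (b) lower bound.

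The resolution is that the absolute value in the stated lower bound is a typo: what both your argument and the paper's actually establish is
\[
n^P(y)\,\frac{P'(0-)+P'(0+)}{2} - \frac{\gamma_c}{2}\,n^P(y)^2 \;\le\; P(n^P(y)),
\]
with $n^P(y)$ rather than $|n^P(y)|$. Indeed, for $n^P(y)<0$ the subgradient inequality at $0$ with slope $P'(0-)$ gives $Q(n)\ge P'(0-)\,n$, and since $q\ge P'(0-)$ and $n<0$ one has $q\,n \le P'(0-)\,n \le Q(n)$. This version is what the paper's proof (which treats $n^P(y)>0$ and says the other case is ``argued similarly'') actually delivers, and it is all that is needed for the application of the lemma in bounding $J_d(P)$.
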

	
	\begin{proof}
	(a)	Fix $y \in \RR$. It suffices to consider the case $n^P(y) \neq 0$. We only consider the case $n^P(y) > 0$, the case $n^P(y) < 0$ can be argued similarly. 
		Using that the function $n \mapsto \gamma_c n + P'(n)$ is increasing on $\RR \setminus \{0\}$ by Lemma \ref{lem:adm schedule}(a), it follows from the mean value theorem that
		\begin{equation}
			P'(0+) n^P(y) \leq \frac{\gamma_c}{2} n^P(y)^2 + P(n^P(y))\leq \Big(\gamma_c n^P(y) + P'(n^P(y))\Big) n^P(y). \label{eqn:estimate:lem:schedule hom 1:estimate}
		\end{equation}
		Now, the lower bound in \eqref{eq:lem:schedule:1:estimate} follows from the first inequality in \eqref{eqn:estimate:lem:schedule hom 1:estimate} together with $\frac{P'(0-) + P'(0+)}{2} \leq P'(0+)$.
		Finally, using that $ \gamma_c n + P'(n) = (n^P)^{-1}(n)$ for $n \neq 0$ and then rearranging the second inequality in  \eqref{eqn:estimate:lem:schedule hom 1:estimate} gives
		\begin{equation*}
			P(n^P(y))\leq y n^P(y) - \frac{\gamma_c}{2} n^P(y)^2.
		\end{equation*}
		Now the upper bound in~\eqref{eq:lem:schedule:1:estimate} follows from the elementary inequality $a b \leq \frac{1}{2 \gamma_c} a^2 + \frac{\gamma_c}{2} b^2$ for $a, b \in \RR$.
		
	(b) The argument is very similar to the proof of part (a). The only difference is that we now use that 
	\begin{equation*}
		\gamma n^{\bf P}_k(y)  + P_k'(n^{\bf P}_k(y) ) +\gamma \bigg(\sum_{j \neq k} (P'_j)^{-1}(P_k'(n^{\bf P}_k(y) )) \bigg) = y. \qedhere
	\end{equation*}
	\end{proof}

\section{Proofs for Section~\ref{s:nash}}

	\begin{proof}[Proof of Proposition~\ref{prop:g:mon}]
	Recall that $V = S + \epsilon$ and $Y = S - \gamma M$, where $\epsilon, S, M$ are independent. Because the error term $\epsilon$ has mean zero, it follows that $g(y) = E[V\,|\,Y = y] = E[S\,|\,Y = y]$. Since $\id(y) - g(y) = y -  E[S\,|\,Y = y] = E[-\gamma_c M \,|\,Y = y]$, it therefore suffices to show that $y \mapsto E[S\,|\,Y = y]$ as well as $y \mapsto E[-\gamma_c M \,|\,Y = y]$ are continuously differentiable with positive derivatives. This follows from  Proposition \ref{prop:cond:mean}.
\end{proof}

\subsection{Proof of Lemma ~\ref{lem:dealer:mon}}

In this section, we prove Lemma~\ref{lem:dealer:mon} about the monopolistic dealer's optimal price schedule. We do this under weaker (but substantially less intuitive) assumptions than the convenient sufficient condition imposed in Assumption~\ref{ass:type}.

\begin{assumption}\label{ass:g}
	The expected payoff of the risky asset conditional on the clients' type,
	$y \mapsto g(y)=E[V\,|\,Y = y]$
	is continuously differentiable and of linear growth.
\end{assumption}

\begin{assumption}
	\label{ass:mon}
The probability density function $f$ of $Y$ is positive on $\RR$ and satisfies $\int_{-\infty}^\infty y^2 f(y) \dd y < \infty$. Moreover, setting
	\begin{align*}
		y_- &:= \inf\{y \in \RR: F(y)/f(y)+ y - g(y) = 0\},\\
		y_+ &:= \sup\{y \in \RR: -\ol F(y)/f(y)+ y - g(y) = 0\},
	\end{align*}
	we have:
	\begin{enumerate}
		\item[(i)] $(F/f) +\id - g$ is continuously differentiable on $(-\infty,y_-)$ with positive derivative and nonnegative on $[y_-, \infty)$;
		\item[(ii)] $-(\ol F/f) +\id-g$ is continuously differentiable on  $(y_+, \infty)$ with positive derivative and nonpositive on $(-\infty, y_+]$.
	\end{enumerate}
\end{assumption}

\begin{remark}
Note that since $(F/f) +\id - g > -(\ol F/f) +\id-g$, Assumption \ref{ass:mon} implies in particular that $y_- < y_+$. It is straightforward to check using Proposition~\ref{prop:g:mon} that log-concavity as in Assumption~\ref{ass:type} indeed implies Assumptions~\ref{ass:mon} and \ref{ass:g}. When the conditional mean function $g$ as well as the probability density and cumulative distribution functions $f, F$ of the client's type are available, e.g., for Gaussian types, then $y_\mp$ can be computed in a straightforward manner as the roots of explicit scalar functions.
\end{remark}

\begin{proof}[Proof of Lemma~\ref{lem:dealer:mon}]
We prove the result under the weaker Assumptions  \ref{ass:g} and \ref{ass:mon}. In light of the square-integrability of $f$ and the estimate \eqref{eq:lem:schedule:1:estimate}, for all admissible schedules we have $J_d(P) < \infty$ and $J_d(P) > -\infty$ if and only if $\int_{-\infty}^\infty n^P(y)^2f(y) \dd y < \infty$. So fix an admissible schedule $P$ such that $\int_{-\infty}^\infty n^P(y)^2f(y) \dd y < \infty$ and let $a^P$ and $b^P$ be as in Lemma \ref{lem:adm schedule}(b). Define the function $h: \RR \to \RR$ by $h:= f g$ and set $H(x) := \int_{-\infty}^x h(y) \dd y$ and $\ol H(x) := \int_{x}^\infty h(y) \dd y$. Moreover, denote the inverse function of $n^P$ on $(-\infty, a^P) \cup (b^P, \infty)$ by $y^P$, set $p := P'$ and note that 
\begin{equation}
	\label{eq:lem:dealer:mon:yP}
y^P(n) = p(n)+ \gamma_c n, \quad n \neq 0,
\end{equation}
is increasing by Lemma \ref{lem:schedules:hom}(a). Then using Lemma \ref{lem:adm schedule}(c), the substitution rule and an integration by parts in the form of Lemma \ref{lem:int by parts}, the monopolist dealer's goal functional can be rewritten as follows:
\begin{align*}
J_d(P) &= \int_{-\infty}^{a_P} \left(P(n^P(y))-g(y) n^P(y)-\frac{\gamma_d}{2}n^P(y)^2\right)f(y)\dd y \notag \\
&\quad+ \int_{b_P}^{\infty} \left(P(n^P(y))-g(y) n^P(y)-\frac{\gamma_d}{2}n^P(y)^2\right)f(y)\dd y \notag \\
&= \int_{-\infty}^{0} \left(P(n) -\frac{\gamma_d}{2}n^2\right)f(y^P(n)) \dd(y^P)(n) - \int_{-\infty}^{0} n h(y^P(n)) \dd(y^P)(n) \notag  \\
&\quad+\int_{0}^{\infty} \left(P(n) -\frac{\gamma_d}{2}n^2\right)f(y^P(n)) \dd(y^P)(n) - \int_{0}^{\infty} n h(y^P(n)) \dd(y^P)(n),
\end{align*}
and in turn
\begin{align}
J_d(P) &= \int_{-\infty}^{0} \left(\gamma_d n - p(n) \right)F(y^P(n)) + H(y^P(n)) \dd n  \notag \\
&\quad+\int_{0}^{\infty} \left(p(n) - \gamma_d n\right)\ol F(y^P(n)) - \ol H(y^P(n)) \dd n \notag \\
&= \int_{-\infty}^{0} \left((\gamma_d + \gamma_c) n - y^P(n) \right)F(y^P(n)) + H(y^P(n)) \dd n  \label{eq:mon:neg}\\
&\quad+\int_{0}^{\infty} \left(y^P(n) -(\gamma_d + \gamma_c) n\right)\ol F(y^P(n)) - \ol H(y^P(n)) \dd n. \label{eq:mon:pos}
\end{align}

We now establish uniqueness of the optimal price schedule. To this end, suppose that $P_*$ is an optimizer of $J_d(P)$. We seek a formula for $y^{P_*}$, which in turn yields a formula for $P_*$ via \eqref{eq:lem:dealer:mon:yP}. To this end, we employ a localised calculus of variation argument on \eqref{eq:mon:neg}--\eqref{eq:mon:pos}. We only spell this out for \eqref{eq:mon:neg}; the argument for \eqref{eq:mon:pos} is analogous. Let $K \subset (-\infty, 0) \cap \{\frac{\diff}{\diff n}y^{P_*} > 0\}$ be compact and $\kappa: \RR \setminus \{0\} \to \RR$ a continuously differentiable function that is supported on $K$ (and hence vanishes on $(0, \infty)$). Using that $y^{P_*}$ is increasing, 
	Lemma \ref{lem:nowhere dense}(b) shows that  there exists $\epsilon' > 0$ such that $y^{P_*} + \epsilon \kappa$ is increasing on $\RR \setminus \{0\}$ and hence the corresponding price schedule is admissible by Lemma \ref{lem:schedules:hom}(a) for all $\epsilon \in [-\epsilon', \epsilon']$. After plugging $y^{P_*} + \epsilon \kappa$ into \eqref{eq:mon:neg}--\eqref{eq:mon:pos}, dividing by $\varepsilon$ and sending $\varepsilon \to 0$ (and using that $\kappa$ is zero on $(0, \infty)$), optimality of $P_*$ yields
\begin{equation*}
\int_{-\infty}^{0} \left(-F(y^{P_*}(n)) + ((\gamma_d+\gamma_c) n - y^{P_*}(n) + g(y^{P_*}(n))) f(y^{P_*}(n))\right) \kappa(n) \dd n =0.
	\end{equation*}
Since $\kappa$ was arbitrary, the continuity of $\kappa$, $y^{P*}$, $f$, $g$, $F$ in turn gives
\begin{equation*}
-F(y^{P_*}(n)) + ((\gamma_d+\gamma_c) n - y^{P_*}(n) + g(y^{P_*}(n))) f(y^{P_*}(n)), \quad n \in K.
\end{equation*}
Thus, it follows that 
\begin{align*}
-F(y^{P_*}(n)) + ((\gamma_d+\gamma_c) n - y^{P_*}(n) + g(y^{P_*}(n))) f(y^{P_*}(n)) &= 0, \quad n \in (-\infty, 0) \cap \left\{\frac{\diff}{\diff n}y^{P_*} > 0\right\},\\
\ol F(y^{P_*}(n)) + ((\gamma_d+\gamma_c) n - y^{P_*}p(n) + g(y^{P_*}(n))) f(y^{P_*}(n)) &= 0, \quad n \in (0, \infty) \cap \left\{\frac{\diff}{\diff n}y^{P_*} > 0\right\}.
\end{align*}

Since $y^{P_*}$ is increasing, both $(0, \infty) \setminus \left\{\frac{\diff}{\diff n}y^{P_*} > 0\right\}$ and $(-\infty,0) \setminus \left\{\frac{\diff}{\diff n}y^{P_*} > 0\right\}$ are nowhere dense sets by Lemma \ref{lem:nowhere dense}(a). By continuity of $y^{P*}$, $f$, $g$, $F$ and $\ol F$, this implies that
\begin{align}
-F(y^{P_*}(n)) + ((\gamma_d+\gamma_c) n - y^{P_*}(n) + g(y^{P_*}(n))) f(y^{P_*}(n)) &= 0, \quad n \in (-\infty, 0), \label{eq:mon:FOC:neg}\\
\ol F(y^{P_*}(n)) + ((\gamma_d+\gamma_c) n - y^{P_*}(n) + g(y^{P_*}(n))) f(y^P(n)) &= 0, \quad n \in (0, \infty)\label{eq:mon:FOC:pos}.
\end{align}
Rearranging gives \eqref{eq:lem:dealer:mon}, so if an optimal price schedule exists it has to be of the proposed form. \\

We now verify that this price schedule is indeed optimal. To this end, note that \eqref{eq:mon:FOC:neg} together with positivity of $f$ and Assumption~\ref{ass:mon} on $F/f + \mathrm{id} - g$ imply for fixed $n \in (-\infty,0)$ that
\begin{equation}
-F(y) + ((\gamma_d+\gamma_c) n - y+ g(y)) f(y) 
\begin{cases}
> 0 &\text{if } y < y^{P_*}(n), \\
< 0 &\text{if } y> y^{P_*}(n).
\end{cases}
\label{eq:mon:derivative:neg}
\end{equation}
Similarly, \eqref{eq:mon:FOC:pos} together with positivity of $f$ and Assumption \ref{ass:mon} on $-\frac{\ol F}{f} + \mathrm{id} - g$ imply for fixed $n \in (0, \infty)$ that
\begin{equation}
\ol F(y) + ((\gamma_d+\gamma_c) n - y+ g(y)) f(y) 
\begin{cases}
> 0 &\text{if } y < y^{P_*}(n), \\
< 0 &\text{if } y>y^{P_*}(n).
\end{cases}
\label{eq:mon:derivative:pos}
\end{equation}
Now, let $P_*$ be as above and $P$ be any competitor price schedule such that $\int_{-\infty}^\infty n^P(y)^2f(y)\dd y < \infty$. Then the mean value theorem together with \eqref{eq:mon:derivative:neg} and \eqref{eq:mon:derivative:pos} implies that
\begin{align*}
J_d(P) - J_d(P_*) &= \int_{-\infty}^{0} \Big(-F(y(n)) + ((\gamma_d+\gamma_c) n - y+ g(y(n))) f(y(n)) \Big)(y^P(n)-y^{P_*}(n)) \dd n \\
&\quad+\int_{-\infty}^{0} \Big(-\ol F(y(n)) + ((\gamma_d+\gamma_c) n - y+ g(y(n))) f(y(n)\Big) (y^P(n)-y^{P_*}(n)) \dd n \\
&\leq 0 + 0 =0,
\end{align*}
where for each $n$, $y(n)$ lies in the interval with the endpoints $y^P(n)$ and $y^{P_*}(n)$. Whence, $P_*$ is indeed optimal as asserted. 
\end{proof}

\begin{proof}[Proof of Remark~\ref{rem:normal2}]

In order to prove that the optimal price schedule for the monopolist is (strictly) convex in the context of Example~\ref{ex:normal} if and only if \eqref{eq:rem:mon:beta est} holds, set 
$$y_- := (F/f+ \mathrm{id} - g)^{-1}(0) \quad \mbox{and} \quad  y_+ := (-\ol F/f+ \mathrm{id} - g)^{-1}(0).$$
Since $(F/f)'$ and $(-\bar F/f)'$ are increasing on $\RR$ and $g' = \beta$, \eqref{eq:rem:mon:est-} and \eqref{eq:rem:mon:est+} are equivalent to
\begin{align}
	(F/f)'(y_-) \leq \beta	\quad \text{and} \quad (-\bar F/f)'(y_+) \leq \beta. \label{eq:rem:mon:est:normal}
\end{align}
Define 
$$z_-:= \frac{y_- - \mu_Y}{\sigma_Y} \quad \mbox{and} \quad z_+:= \frac{y_+ - \mu_Y}{\sigma_Y}.
$$ 
Then, using the scaling properties of the normal distribution and the symmetry of $\Phi$, it follows that \eqref{eq:rem:mon:est:normal} is equivalent to 
\begin{align}
	(\Phi/\phi)'(z_-) \leq \beta	\quad \text{and} \quad (\Phi/\phi)'(-z_+) \leq \beta. \label{eq:rem:mon:est:normal:2}
\end{align}
As $(\Phi/\phi)'$ is increasing on $\RR$,  \eqref{eq:rem:mon:est:normal:2} is equivalent to 
\begin{align}
	(\Phi/\phi)'(z_{\max}) \leq \beta, \quad \mbox{where $z_{\max} := \max(z_-, -z_+)$.} \label{eq:rem:mon:est:normal:3}
\end{align}
Next, the scaling properties of the normal distribution, the symmetry of $\Phi$ and the definition of $z_-$ and $z_+$ show that $z_-$ and $z_+$ are the unique solutions of
\begin{align*}
(\Phi/\phi)(z_-) + (1- \beta)z_- = \gamma_c \frac{\mu_M}{\sigma_Y} \quad  \text{and} \quad (\Phi/\phi)(-z_+) + (1- \beta)z_+ =- \gamma_c \frac{\mu_M}{\sigma_Y}.
\end{align*}
Again using that $(\Phi/\phi)'$ is increasing on $\RR$, it follows that  $z_{\max}$ is the unique solution of 
\begin{equation}
\label{eq:rem:mon:est:normal:4}
(\Phi/\phi)(z_{\max}) + (1- \beta)z_{\max} = \gamma_c \frac{|\mu_M|}{\sigma_Y}.
\end{equation}
If $\gamma_c \frac{|\mu_M|}{\sigma_Y} \geq (\Phi/\phi)(0) = \sqrt{\pi/2}$, then it follows that $z_{\max} \geq 0$, whence \eqref{eq:rem:mon:est:normal:3} cannot be satisfied as $\beta < 1$. Conversely, if $\gamma_c \frac{|\mu_M|}{\sigma_Y} < (\Phi/\phi)(0) = \sqrt{\pi/2}$, then it follows that $z_{\max}< 0$. 

Finally, let $z < 0$ such that $(\Phi/\phi)'(z) = 1 + z (\Phi/\phi)(z) =\beta$, which exists and is unique since $(\Phi/\phi)'$ is increasing. We obtain
\begin{align}
	z \geq z_{\max}	\quad &\Leftrightarrow  \quad (\Phi/\phi)(z) + (1- \beta)z \geq \gamma_c \frac{|\mu_M|}{\sigma_Y} \notag \\
	&\Leftrightarrow  \quad 	z^2  - \frac{\gamma_c}{(1 - \beta)} \frac{|\mu_M|}{\sigma_Y} z -1 \leq  0. \quad \Leftrightarrow  \quad z \geq 	z_{\mathrm{mon}}(\beta), 	\label{eq:rem:mon:est:normal:5}
\end{align}
where 
\begin{equation*}
	z_{\mathrm{mon}}(\beta) := \frac{\gamma_c}{2 (1 -\beta)} \frac{|\mu_M|}{\sigma_Y} -  \sqrt{\frac{\gamma^2_c \mu_M^2}{4 (1 -\beta)^2\sigma_Y^2} + 1} 
\end{equation*}
denotes the negative solution of $z^2  - \frac{\gamma_c}{(1 - \beta)} \frac{|\mu_M|}{\sigma_Y} z - 1= 0$. By monotonicity of $(\Phi/\phi)'$, this implies that \eqref{eq:rem:mon:est:normal:3} is equivalent to the second part of \eqref{eq:rem:mon:beta est}
\end{proof}

\subsection{Proofs for Section~\ref{ss:oli}}

We now turn to the proofs for the Nash competition between several strategic dealers.

\begin{proof}[Proof of Lemma~\ref{lem:dealer:oli:unique}]
We prove this result under the weaker Assumptions  \ref{ass:g} and \ref{ass:mon}. Let ${\bf P_*} = (P_*, \ldots, P_*)$ be a Nash-equilibrium and $\ell^*$ and $r^*$ be defined as in \eqref{eq:lr}. Let $P_1$ be an admissible price schedule for $K$ dealers that satisfies
\begin{equation}
\label{eq:pf:lem:dealer:oli:unique:limit cond}
\lim_{n \to -\infty} P'_1(n) = \ell^* \quad \mbox{and} \quad  \lim_{n \to \infty} P'_1(n) = r^*.
\end{equation}
Set ${\bf P} = (P_1, P_*, \ldots, P_*)$. Then $\bar \ell$ and $\bar r$ defined  in \eqref{eq:ol:lr} satisfy
$\bar \ell  = \ell^*$ and $\bar r = r^*$. By Theorem \ref{thm:schedules het}, this implies that $P_1, P_*, \ldots, P_*$ are compatible. In light of the square-integrability of $f$ and the estimate \eqref{eq:lem:oli:estimate}, $K^{\bf P}(P_1)$ is always less than $\infty$ and it is greater then $-\infty$ if and only if $\int_{-\infty}^\infty n^{\bf P}_1(y)^2f(y)\dd y < \infty$. So assume in addition that $P_1$ is such that $\int_{-\infty}^\infty n^{\bf P}_1(y)^2 f(y)\dd y < \infty$. 
To ease notation, set 
$$p_* := P_*' \quad \mbox{and} \quad p_1 = P'_1.
$$
Denote the inverse function of $n^{\bf P}_1$ on $I^{\bf P}_1 = \{y \in \RR: n^{\bf P}_1(y) \neq 0\}$ by $y^{\bf P}_1$ and note that (with the convention $p_*^{-1}(x) = 0$ if $x \in [p_*(0-), p_*(0+)]$),
\begin{equation*}
	y^{\bf P}_1 (n) = p_1(n) + \gamma_c n + \gamma_c (K-1) p_*^{-1}(p_1(n)), \quad n \neq 0
\end{equation*}
is increasing and valued in $\RR \setminus \{0\}$ by Theorem \ref{thm:schedules het}(b) and (c). Now setting $H(x) = \int_{-\infty}^x h(x) dx$ and $\bar H(x) = \int_x^\infty h(x) dx$, a change of variable together with an integration by parts in the form of Lemma \ref{lem:int by parts} allows to rewrite the goal functional of dealer $1$ as
\begin{align}
K^{P_*}(P_1)  &=  \int_{\RR} \left(P_1(n^{\bf P}_1(y))-g(y) n^{\bf P}_1(y)-\frac{\gamma_d}{2}n^{\bf P}_1(y)^2\right)f(y)\dd y  \notag \\
&=\int_{\RR \setminus \{0\}} \left( \left(P_1(n) -\frac{\gamma_d}{2}n^2\right)f(y^{\bf P}_1(n)) - n h(y^{\bf P}_1(n)) \right) d y^{\bf P}_1(n)\notag \\
&= \int_{-\infty}^{0} \left(\left(\gamma_d n - p_1(n) \right)F(y^{\bf P}_1(n)) + H(y^{\bf P}_1(n))\right) \dd n \notag  \\
&\quad+\int_{0}^{\infty} \left(\left(p_1(n) - \gamma_d n\right)\ol F(y^{\bf P}_1(n)) - \ol H(y^{\bf P}_1(n))\right) \dd n \notag \\
&= \int_{-\infty}^{0} \bigg(\left(\gamma_d n - p_1(n) \right)F\Big(p_1(n) + \gamma_c n + \gamma (K-1) (p_*)^{-1}(p_1(n))\Big) \notag \\
&\qquad+  H\Big(p_1(n) + \gamma_c n + \gamma (K-1) (p_*)^{-1}(p_1(n))\Big)\bigg) \dd n \label{eq:oli:neg}  \\
&\quad+\int_{0}^{\infty} \bigg(\left(p_1(n) - \gamma_d n\right)\ol F\Big(p_1(n) + \gamma_c n + \gamma (K-1) (p_*)^{-1}(p_1(n))\Big) \notag  \\
&\qquad- \ol H\Big(p_1(n) + \gamma_c n + \gamma_c (K-1) (p_*)^{-1}(p_1(n))\Big)\bigg) \dd n. \label{eq:oli:pos}
\end{align}
Since ${\bf P_*} = (P_*, \ldots, P_*)$ is a Nash equilibrium, $P_*$ is a maximizer of $K^{P_*}(\cdot)$. In particular, it is a maximizer among all admissible price schedules that satisfy \eqref{eq:pf:lem:dealer:oli:unique:limit cond}. Now using a localised calculus of variations argument separately on \eqref{eq:oli:neg} and \eqref{eq:oli:pos} as in the proof of Lemma~\ref{lem:dealer:mon} and noting that the perturbed strategies still satisfy \eqref{eq:pf:lem:dealer:oli:unique:limit cond}, we obtain that 
\begin{align*}
-F(p_*(n) + \gamma K n) &+ \Big(\gamma_d n - p_*(n) + g(p_*(n) + \gamma K n)\Big)  \\
&\quad \times f(p_*(n) + \gamma K n) \left(1 + \frac{\gamma (K-1)}{p'_*(n)} \right) = 0, \quad n \in (-\infty,0) \cap \left\{p'_* > 0\right\}, 
\end{align*}
and
\begin{align*}
\ol F(p_*(n) + \gamma K n) &+ \Big(\gamma_d n - p_*(n) + g(p_*(n) + \gamma K n) \Big)  \\
&\quad \times f(p_*(n) + \gamma K n) \left(1 + \frac{\gamma (K-1)}{p'_*(n)} \right) = 0, \quad n \in (0, \infty) \cap \left\{p'_* > 0\right\}.
\end{align*}
Rearranging terms gives
\begin{align}
p'_*(n) &= \frac{(K-1) \gamma \left(\gamma_d n - p_*(n) + g(p_*(n) + \gamma K n)\right)}{ \frac{F}{f} \left(p_*(n) + \gamma K n \right)- (\gamma_d n - p_*(n) + g(p_*(n) + \gamma K n))},
\quad n \in (-\infty,0) \cap \left\{p'_* > 0\right\}, \label{eq:pf:lem:dealer:oli:unique:ODE:R-:1} \\
p'_*(n) &= \frac{(K-1) \gamma \left(\gamma_d n - p_*(n) + g(p_*(n) + \gamma K n)\right)}{ -\frac{\ol F}{f} \left(p_*(n) + \gamma K n \right)- (\gamma_d n - p_*(n) + g(p_*(n) + \gamma K n))},
\quad n \in (0, \infty) \cap \left\{p'_* > 0\right\}.
\label{eq:pf:lem:dealer:oli:unique:ODE:R+:1}
\end{align}
Note that the rearrangement also shows that the numerator and denominator on the right hand sides of \eqref{eq:pf:lem:dealer:oli:unique:ODE:R-:1} and \eqref{eq:pf:lem:dealer:oli:unique:ODE:R+:1} cannot be zero on $(-\infty,0)  \cap \left\{p'_* > 0\right\}$ and  $(0,\infty)  \cap \left\{p'_* > 0\right\}$, respectively.
Since $p_*$ is increasing by strict convexity of $P_*$, both $ (-\infty,0) \cap \{p_*'  > 0\}$ and $(0, \infty) \cap \{(p_*)'  > 0\}$ are nowhere dense sets by Lemma \ref{lem:nowhere dense}(a). By continuity of $F/f$, $\ol F/f$, $g$ and $p'_*$ on $\RR \setminus \{0\}$, this implies that
\begin{align}
p'_*(n) &= \frac{(K-1) \gamma \left(\gamma_d n - p_*(n) + g(p_*(n) + \gamma K n)\right)}{ \frac{F}{f} \left(p_*(n) + \gamma K n \right)- (\gamma_d n - p_*(n) + g(p_*(n) + \gamma K n))},
\quad n \in (-\infty,0), \label{eq:pf:lem:dealer:oli:unique:ODE:R-:2}  \\
p'_*(n) &= \frac{(K-1) \gamma \left(\gamma_d n - p_*(n) + g(p_*(n) + \gamma K n)\right)}{ -\frac{\ol F}{f} \left(p_*(n) + \gamma K n \right)- (\gamma_d n - p_*(n) + g(p_*(n) + \gamma K n))},
\quad n \in (0, \infty).
\label{eq:pf:lem:dealer:oli:unique:ODE:R+:2}
\end{align}
This argument also shows that the denominators on the right hand sides of \eqref{eq:pf:lem:dealer:oli:unique:ODE:R-:2} and \eqref{eq:pf:lem:dealer:oli:unique:ODE:R+:2} cannot be zero on $(-\infty,0)$ or $(0,\infty)$, respectively. Indeed, if we multiply \eqref{eq:pf:lem:dealer:oli:unique:ODE:R-:1} and \eqref{eq:pf:lem:dealer:oli:unique:ODE:R+:1} by the corresponding denominators, we get equations between two continuous functions that hold outside a nowhere dense set, hence everywhere. But this implies that if the denominator in \eqref{eq:pf:lem:dealer:oli:unique:ODE:R-:2} and \eqref{eq:pf:lem:dealer:oli:unique:ODE:R+:2} can be zero only if the numerator is. But the denominator never vanishes if the corresponding numerator does because $F/f$ and $\ol F/f$ are positive on $\RR$.
\end{proof}

Next, we establish the wellposedness results for the nonlinear ODE~\eqref{eq:lem:dealer:oli:unique:ODE} collected in Theorem~\ref{thm:ODE}. Again, we do this under weaker (but much less intuitive) assumptions than the convenient sufficient conditions imposed in Assumptions~\ref{ass:type} and \ref{ass:g:oli}.

\begin{assumption}
	\label{ass:ODE:ex}
	Suppose $f$, $g$ are continuously differentiable and there exist $\delta, C_g, C_f > 0$ such that:
	\begin{align}
		\delta &\leq 1 - g' \leq C_g < \frac{\gamma_d + K \gamma c}{K \gamma_c}, \label{eq:ass:ODE:g} \\
		\delta &\leq 1 + (F/f)' - g' \leq C_f \quad \text{on } (-\infty, y_-], \label{eq:ass:ODE:f-}\\
		\delta &\leq 1 - (\bar F/f)' - g' \leq C_f \quad \text{on } [y_+, \infty). \label{eq:ass:ODE:f+}
	\end{align}
\end{assumption}

\begin{assumption}
	\label{ass:ODE:uni}
	Suppose that $f$ and $g$ are continuously differentiable and there are $-\infty < n_- \leq 0 \leq n_+ < \infty$ and $\delta_-, \delta_+ \geq \delta $ with
	\begin{align}
		1 - g' &\geq \delta_- \text{ on } (-\infty, n_-] \quad \text{and} \quad 1 - g' \geq \delta_+  \text{ on } [n_+, \infty),
		\label{eq:ass:ODE:d limit}
	\end{align}
	such that, moreover,
	\begin{align}
		\label{eq:ass:ODE:f limit}
		\lim_{z \to -\infty} |z|^{\frac{K-1}{\delta_-^2(K + \frac{\gamma_d}{\gamma_c})}} F(z) = 0 \quad \text{and} \quad \lim_{z \to +\infty} |z|^{\frac{K-1}{\delta^2_+(K + \frac{\gamma_d}{\gamma_c})}} \bar F(z) = 0.
	\end{align}
\end{assumption}
Note that Assumptions \ref{ass:type} and \ref{ass:g:oli} from the body of the paper indeed imply Assumptions \ref{ass:ODE:ex} and \ref{ass:ODE:uni}. To wit, 
 Proposition \ref{prop:g:mon} gives \eqref{eq:ass:ODE:g} and \eqref{eq:ass:ODE:f-} and \eqref{eq:ass:ODE:f+} follows from Proposition~\ref{prop:log concave:convolution}(a) and the fact that, fo rlo-concave distributions as in Assumptions \ref{ass:type}, $f'(n) > 0$ for all sufficiently small $n$ and $f'(n) < 0$ for all sufficiently large $n$  by Proposition \ref{prop:log:concave:basics}(c). Finally, setting $\delta_+:=\delta =: \delta_-$, \eqref{eq:ass:ODE:f limit} follows from Proposition \ref{prop:log:concave:basics}(b) and an integration by parts. However, the above conditions are more general and cover, e.g., two-sided Pareto distributions with sufficiently light tails if the conditional-mean function $g$ is linear as in Example~\ref{??}.

\begin{proof}[Proof of Theorem~\ref{thm:ODE}] We prove the result under the weaker Assumptions~\ref{ass:ODE:ex} and \ref{ass:ODE:uni}. Moreover, we also prove the following two
	additional claims -- part (a) is useful for the analysis of concrete examples and part (b) will be crucial for proving Theorem \ref{thm:oli:ex}.
	\begin{enumerate}
	\item For any $\epsilon_v, \epsilon_w \in (0, 1)$ such that
	\begin{align}
		\frac{\gamma_d+ K \gamma_c}{\delta} - K \gamma_c - \frac{(K-1)\gamma_c \delta(1-\epsilon_v)}{C_f \epsilon_v} &\leq 0, \\
		(1 - \epsilon_w)\frac{\gamma_d+ K \gamma_c}{C_g} + \epsilon_w \frac{\gamma_d+ K \gamma_c}{C_f} -K \gamma_c - \frac{(K -1)\gamma_c C_g \epsilon_w}{\delta (1-\epsilon_w)}  &\geq 0,
	\end{align}
we have $\epsilon_v + \epsilon_w < 1$ and
 \begin{align}
 	P'(0-) &\in [(1-\epsilon_v) y_- + \epsilon_v(\id - g)^{-1}(0), \epsilon_w y_- + (1- \epsilon_w)(\id - g)^{-1}(0)]. \\
 	 	P'(0+) &\in [\epsilon_w y_+ + (1- \epsilon_w)(\id - g)^{-1}(0)], (1-\epsilon_v) y_+ + (\epsilon_v)(\id - g)^{-1}(0)].
 \end{align}
	\item The unique solution $P^*$ to the ODE \eqref{eq:lem:dealer:oli:unique:ODE} has derivatives that are bounded and bounded away from zero, which implies that $\lim_{n \to -\infty} P'_*(n) = -\infty$ and $\lim_{n \to \infty} P'_*(n) = \infty$. 
	\end{enumerate}

The proof is based on constructing explicit upper and lower solutions of the ODE \eqref{eq:lem:dealer:oli:unique:ODE}, and in turn use these to deduce the existence of a solution. The natural candidates for these upper and lower solutions are the functions that make the numerator and denominator in the fractions of the right-hand side of \eqref{eq:lem:dealer:oli:unique:ODE} vanish.\footnote{On $(-\infty, 0)$, the function corresponding to the numerator is the upper solution and the function corresponding to the denominator is the lower solution; on  $(0, \infty)$, the function corresponding to the numerator is the lower solution and the function corresponding to the denominator is the upper solution.} Of course, the function that makes the denominator vanish cannot really be used (since it would lead to an infinite derivative) so that another approximation argument is required. Uniqueness follows by a rather delicate Gr\"onwall estimate showing that if there were two solutions between the constructed upper and lower solutions, then the difference between them would grow so fast that at least one of them would cross the upper or lower solution, which is a contradiction. 

To ease notation, define the functions $A, B_-, B_+: \RR^2 \to \RR$ by
\begin{align*}
A(n, z) &=  (\gamma_d + K \gamma_c)n - (\id - g)(z + \gamma_c K n), \\
B_-(n, z) &= -(\gamma_d + K \gamma_c)n + \left(\id + \frac{F}{f}- g\right)(z + \gamma_c K n), \\
B_+(n, z) &= -(\gamma_d + K \gamma_c)n + \left(\id - \frac{\ol F}{f}- g\right)(z + \gamma_c K n),
\end{align*}
and set 
$$p := P'.
$$
Then, the ODE~\eqref{eq:lem:dealer:oli:unique:ODE} can be rewritten as
	\begin{align}
p'(n)  = \begin{cases} (K-1) \gamma_c  \dfrac{A(n, p(n))}{ B_-(n, p(n))},
\quad n \in (-\infty,0), \label{eq:pf:thm:ODE:R:rewritten}  \\
(K-1) \gamma_c \dfrac{ A(n, p(n))}{ B_+(n, p(n))},
\quad n \in (0, \infty).
\end{cases}
\end{align}
Note that 
\begin{equation}
B_-(n, z) = (F/f)(z + \gamma_c K n) - A(n,z) \quad \text{and} \quad B_+(n, z) = -(\ol F/f)(z + \gamma_c K n) - A(n,z). \label{eq:thm:pf:ODE:AB comparison}
\end{equation}
This implies that $A/B_-$ can only be nonnegative if $A$ is nonnegative and $A/B_+$ can only be nonnegative if $A$ is negative. Hence, if $p: \RR \setminus \{0\} \to \RR$ is increasing, continuously differentiable and satisfies \eqref{eq:pf:thm:ODE:R:rewritten} (in particular, the denominators do not vanish), then 
	\begin{align}
0 &\leq A(n, p(n)) < \frac{F}{f} \left(p(n) + \gamma_c K n \right) \quad n \in (-\infty, 0), \label{eq:thm:estimate:R-} \\
	0 &\geq A(n, p(n)) > -\frac{\ol F}{f} \left(p(n) + \gamma_c K n \right), \quad n \in (0, \infty). \label{eq:thm:estimate:R+}
	\end{align}
	Next, if a nondecreasing function $p_-$ satisfies \eqref{eq:thm:estimate:R-} and a nondecreasing function $p_+$ satisfies \eqref{eq:thm:estimate:R+}, then by the fact that $A$ is decreasing in $z$ by \eqref{eq:ass:ODE:g}, it follows that 
	\begin{equation*}
	\lim_{n \uparrow 0} p_1(n) \leq \lim_{n \downarrow 0} p_2(n).
	\end{equation*}
	Thus, the result follows if we can can show that the ODE \eqref{eq:pf:thm:ODE:R:rewritten} has a unique solution $p_-$ on $(-\infty, 0]$ whose derivatives are bounded and bounded away from zero, and a unique solution $p_+$ on $[0, \infty)$ whose derivatives are bounded and bounded away from zero. Then,
	\begin{equation}
	p(n) =
	\begin{cases}
	p_-(n) & \text{if } n < 0, \\
		p_+(n) & \text{if } n > 0,
	\end{cases}
	\end{equation}
as well as $p(0-) = p_-(0)$ and $p(0+) = p_+(0)$. 

We only establish the assertion for $p_-$, the assertion for $p_+$ follows in a similar manner.
	
	We first establish existence of a solution $p_-$ to \eqref{eq:pf:thm:ODE:R:rewritten} on $(-\infty,0)$ that has derivatives that are bounded and bounded away from zero. The idea is to construct lower and upper solutions as in Proposition~\ref{prop:walter}. Given that the right-hand side of the ODE  \eqref{eq:pf:thm:ODE:R:rewritten} on $(-\infty, 0)$ is a (multiple of) the fraction with numerator $A$ and denominator $B_-$, it is natural to consider functions $v, w: (-\infty,0] \to \RR$ such that 
	$A(n, w(n)) = 0$ and $B_-(n, v(n)) =0$. So define the functions $v, w: (-\infty,0] \to \RR$ by
	\begin{align}
	\label{eq:pf:thm:ODE:v}
	v(n) &:= \left(\id +\frac{F}{f} - g\right)^{-1} ((\gamma_d + K \gamma_c )n) - K \gamma_c n, \\
	w(n) &:= \left(\id - g\right)^{-1} ((\gamma_d + K \gamma_c )n) - K \gamma_c n.
	\label{eq:pf:thm:ODE:w}
	\end{align}
	Note that $v < w$ because $\id +F/f - g > \id - g$ by the fact that $F/f > 0$. Moreover, it follows from \eqref{eq:ass:ODE:g} and~\eqref{eq:ass:ODE:f-} that 
	$v$ and $w$ have bounded derivatives:
	
	\begin{align}
	\frac{\gamma_d+ K \gamma_c}{C_f} - K \gamma_c &\leq v'(n) \leq \frac{\gamma_d+ K \gamma_c}{\delta} - K \gamma_c, \label{eq:v prime} \\
	0 < \frac{\gamma_d+ K \gamma_c}{C_g} - K \gamma_c &\leq w'(n) \leq \frac{\gamma_d+ K \gamma_c}{\delta} - K \gamma_c. \label{eq:w prime}
	\end{align}
By definition of $v$, $w$ and \eqref{eq:thm:pf:ODE:AB comparison}, it follows that $A(n,w(n)) = 0$ and $B_-(n, w(n)) = (F/f)(w(n) + \gamma_c K n) > 0$ as well as $A(n, v(v)) = -(F/f)(v(n) + \gamma_c K n) < 0$
and $B_-(n,v(n)) = 0$. Together with \eqref{eq:pf:thm:ODE:v} and \eqref{eq:pf:thm:ODE:w}, this implies that $w$ is an upper solution of the ODE \eqref{eq:pf:thm:ODE:R:rewritten} on $(-\infty,0)$ and $v$ is \emph{essentially} a lower solution of the ODE -- note that $A(n,v(n))/B_-(n,v(n))$ is not defined but can be interpreted as $\infty$.\footnote{Indeed, one can show that $\lim_{z \downarrow v(n)} A(n,z))/B_-(n,z) = \infty$.} For this reason, we have to modify $v$ to get a proper lower solution and it will be useful to also modify $w$ to get some sharper estimates. (These refined upper and lower solutions are compared to $v$ and $w$ in Figure~\ref{fig:bounds} below.)

To this end, we first establish some estimates on the derivatives of $A$ and $B_-$ with respect to $z$. Fix $n \in (-\infty, 0]$ and let $v(n) \leq z \leq  w(n)$.
Then by \eqref{eq:ass:ODE:g} and \eqref{eq:ass:ODE:f-},\footnote{Note that since $\id - g$ is increasing, $z + K \gamma_c n \leq w(n) + K \gamma_c n = \left(\id - g\right)^{-1} ((\gamma_d + K \gamma_c )n) \leq \left(\id - g\right)^{-1}(0)$.}
	\begin{align}
	\label{eq:pf:thm:ODE:Az}
	-\delta &\geq \frac{\partial}{\partial z} A(n, z) \geq -C_g, \\
	\delta &\leq \frac{\partial}{\partial z}  B_-(n, z) \leq C_f.
	\label{eq:pf:thm:ODE:Bz}
	\end{align}
Together with the fact that $A(n, w(n)) = 0$ and $B_-(n, v(n)) =0$, this gives
	\begin{align}
	\label{eq:pf:thm:ODE:A comp}
	\delta (w(n) - z)) &\leq A(n, z) \leq C_g(w(n) - z),\\
	\delta (z(n) - v)) &\leq B_-(n, z) \leq C_f (z - v(n)).
	\label{eq:pf:thm:ODE:B comp}
	\end{align}
We proceed to construct a solution $p_{-}$ that lies strictly between $v$ and $w$. To this end, for $\epsilon \in (0, 1)$ (to be chosen sufficiently small later on) define the functions $v_\epsilon, w_\epsilon: (-\infty, 0] \to \RR$ by
	\begin{align}
	v_\epsilon(n) &:=(1 - \epsilon) v(n) + \epsilon w(n), \label{eq:veps}\\
	w_\epsilon(n) & := (1 - \epsilon) w(n) + \epsilon v(n). \label{eq:weps}
	\end{align}
Then $v < v_{\epsilon_v} < w_{\epsilon_w} < w$ for all $\epsilon_v, \epsilon_w \in (0, 1)$ with $\epsilon_v+ \epsilon_w < 1$. Moreover, for each $\epsilon \in (0, 1)$, $v_\epsilon - v = \epsilon (w - v) =  w - w_\epsilon$ and $w - v_\epsilon = (1-\epsilon)(w - v) = w_\epsilon - v$. Together with  \eqref{eq:v prime}--\eqref{eq:w prime} and \eqref{eq:pf:thm:ODE:A comp}--\eqref{eq:pf:thm:ODE:B comp}, this gives 
	\begin{align}
	v'_\epsilon - \frac{(K-1) \gamma_c A(n, v_\epsilon(n))}{B_-(n, v_\epsilon(n))} &\leq \frac{\gamma_d+ K \gamma_c}{\delta} - K \gamma_c - \frac{(K-1)\gamma_c \delta (w(n) - v_\epsilon(n))}{C_f (v_\epsilon(n) - v(n))} \notag \\
	&\leq \frac{\gamma_d+ K \gamma_c}{\delta} - K \gamma_c - \frac{(K-1)\gamma_c \delta(1-\epsilon)}{C_f \epsilon}, \label{eq:pf:thm:ODE:v'eps:estimate} \\
	w'_\epsilon -  \frac{(K-1) \gamma_c A(n, w_\epsilon(n))}{B_-(n, w_\epsilon(n))} &\geq  (1-\epsilon)\frac{\gamma_d+ K \gamma_c}{C_g} + \epsilon \frac{\gamma_d+ K \gamma_c}{C_f} -K \gamma_c \notag  \\
	&\quad- \frac{(K-1)\gamma_c C_g (w(n) - w_\epsilon(n))}{\delta (w_\epsilon(n) - v(n))} \notag \\
	&\geq (1-\epsilon)\frac{\gamma_d+ K \gamma_c}{C_g} + \epsilon \frac{\gamma_d+ K \gamma_c}{C_f} -K\gamma_c - \frac{(K - 1) \gamma_c C_g \epsilon}{\delta (1-\epsilon)}.  \label{eq:pf:thm:ODE:w'eps:estimate} 
	\end{align}
	Now, if we choose $\epsilon_v \in (0, 1)$ such that the right-hand side of \eqref{eq:pf:thm:ODE:v'eps:estimate} is nonpositive and $\epsilon_w \in (0, 1)$ such that the right-hand side of \eqref{eq:pf:thm:ODE:w'eps:estimate} is nonnegative, then we automatically have $\epsilon_v+\epsilon_w < 1$ so that $v_{\epsilon_v} < w_{\epsilon_w}$ and Proposition \ref{prop:walter} in turn shows that there exists a solution $p_-$ to the ODE \eqref{eq:lem:dealer:oli:unique:ODE} on $(-\infty,0)$ with $v_{\epsilon_v} \leq p_- \leq w_{\epsilon_w}$. In particular, we also have the additional Property (a). For normally distributed primitives, Figure~\ref{fig:bounds} illustrates how the refined upper and lower solutions $w_{\epsilon_w}$, $v_{\epsilon_v}$ improve the bounds that can be gleaned from $w$, $v$.
	
\begin{figure}
\begin{center}
\includegraphics[width=0.45\textwidth]{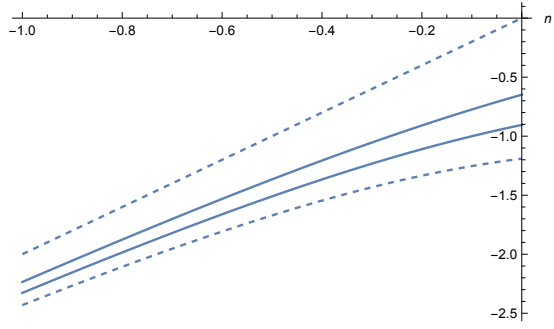}	
\caption{The functions $w$, $v$ and the upper and lower solutions $w_{\epsilon_w}$, $v_{\epsilon_v}$ for $K=2$ dealers, standard normal noise, client signals and inventories, and inventory costs $\gamma_c=1$, $\gamma_d=0$.}
\label{fig:bounds}
\end{center}
\end{figure}

Moreover, Property (b) follows from \eqref{eq:pf:thm:ODE:R:rewritten} and \eqref{eq:pf:thm:ODE:A comp}--\eqref{eq:pf:thm:ODE:B comp} via
	\begin{align*}
p_-'(n) &= \frac{(K-1) \gamma_c A(n, p(n))}{B_-(n, p(n))} \geq \frac{ (K-1) \gamma_c \delta (w(n) - p(n))}{C_f (p(n) - v(n))} \\
&\geq \frac{ (K-1) \gamma_c \delta (w(n) - w_{\epsilon_w}(n))}{C_f (w_{\epsilon_w}(n) - v(n))} \geq	\frac{ (K-1) \gamma_c \delta {\epsilon_w}}{C_f (1- {\epsilon_w})}, \\ 
p_-'(n) &= \frac{(K-1) \gamma_c A(n, p(n))}{B_-(n, p(n))} \leq \frac{(K-1) \gamma_c C_g (w(n) - p(n))}{\delta (p(n) - v(n))} \\
&\leq \frac{(K-1) \gamma_c C_g (w(n) - v_{\epsilon_v}(n))}{\delta (v_{\epsilon_v}(n) - v(n))} \leq \frac{(K-1) \gamma_c C_g (1-{\epsilon_v})}{\delta {\epsilon_v}}.
	\end{align*}
		Finally, we establish uniqueness of a solution $p_-$ to \eqref{eq:lem:dealer:oli:unique:ODE} that satisfies \eqref{eq:thm:estimate:R-}. It follows from 
		\eqref{eq:ass:ODE:d limit} that for $n \leq n_-$,
	\begin{align}
					\label{eq:deriv:w small}
							w'(n) + K \gamma_c &\leq \frac{\gamma_d+ K \gamma_c}{\delta_-}, \\
				\label{eq:deriv:Az small}
				\frac{\partial}{\partial z}A(n, z) &\leq -\delta_-.
\end{align}
Seeking a contradiction, suppose there are two solutions $z_1, z_2$ to \eqref{eq:lem:dealer:oli:unique:ODE} that satisfy \eqref{eq:thm:estimate:R-}. By local uniqueness (the right-hand side of \eqref{eq:lem:dealer:oli:unique:ODE} is local Lipschitz-continuous whenever it is well defined), it follows that $z_1$ and $z_2$ are ordered everywhere. Hence, we may assume without loss of generality that $w \geq z_1 > z_2 > v$, where the last inequality follows from \eqref{eq:thm:estimate:R-}. Set $\Delta z :=  z_1 - z_2$. Using the growth conditions of $g$ and $F$ in \eqref{eq:ass:ODE:d limit} and \eqref{eq:ass:ODE:f limit}, we aim to show that then $\Delta z(n) >  w(n) - v(n)$ for $n$ sufficiently small, which yields a contradiction. Using that $A(n, z)$ is decreasing in $z$ by \eqref{eq:pf:thm:ODE:Az} and $B_-(n, z)$ is increasing in $z$ by \eqref{eq:pf:thm:ODE:Bz}, it follows from \eqref{eq:deriv:Az small} and \eqref{eq:thm:pf:ODE:AB comparison} (recalling that $A(n, w(n)) = 0$) that for, $n \leq n_-$,
	\begin{align*}
	\Delta z'(n) &= (K-1)\gamma_c \left(\frac{A(n, z_1(n))}{B_-(n, z_1(n))} - \frac{A(n, z_2(n))}{B_-(n, z_2(n))}\right) \leq (K-1)\gamma_c  \frac{A(n, z_1(n))-A(n, z_2(n))}{B_-(n, z_1(n))} \\
	&\leq -\frac{(K-1)\gamma_c  \delta_- \Delta z(n)}{B_-(n, w(n))} = - \frac{(K-1)\gamma_c \delta_-\delta z(n)}{\frac{F}{f}(w(n)+K \gamma_c n)}.
	\end{align*}
Using \eqref{eq:deriv:w small} and Gr\"onwall's lemma, we obtain
\begin{align*}
\Delta z(n) &\geq \Delta z(n_-)\exp\left((K-1)\gamma_c \delta_- \int_n^{n-} \frac{f}{F}\Big(w(m) + K \gamma_c m\Big) \dd m\right) \\
&\geq \Delta z(n_-)\exp\left(\frac{(K-1)\gamma_c}{\gamma_d + K \gamma_c} \delta_-^2 \int_n^{n-} \frac{f}{F}\Big(w(m) + K \gamma_c m\Big) (w'(m)  + K \gamma_c)\dd m\right) \\
&= \Delta z(n_-)\exp\left(\frac{(K-1)\gamma_c}{\gamma_d + K \gamma_c} \delta_-^2 \Big(\log\big(F(w(n_-)+ K \gamma_c n_-)\big)  - \log\big(F(w(n)+ K \gamma_c n)\big)  \Big)\right) \\
&= \Delta z(n_-) \big(F(w(n_-)+ K \gamma_c n_-)\big)^{\frac{(K-1)\gamma_c}{\gamma_d + K \gamma_c} \delta_-^2 } \big(F(w(n)+ K \gamma_c n)\big)^{-\frac{(K-1)\gamma_c}{\gamma_d + K \gamma_c} \delta_-^2 }. 
\end{align*}
We arrive at the contradiction $\Delta z(n) > w(n) - v(n)$ for $n$ sufficiently small, if we can show that
\begin{equation}
\label{eq:pf:thm:ODE:limit}
\lim_{n \to \infty} (w(n) - v(n))\big(F(w(n)+ K \gamma_c n)\big)^{\frac{(K-1)\gamma_c}{\gamma_d + K \gamma_c} \delta_-^2} =0.
\end{equation}
To this end, note that by the fact that $\lim_{n \to \infty} w(n) + K \gamma_c n = -\infty$ by \eqref{eq:w prime}, de l'H\^opital, \eqref{eq:v prime} and \eqref{eq:w prime}, 
\begin{equation*}
\limsup_{n \to -\infty} \frac{w(n) - v(n)}{|w(n) + K \gamma_c n|} \leq \frac{\limsup_{n \to -\infty} (v'(n) - w'(n))}{\liminf_{n \to \infty} w'(n) + K \gamma_c} \leq \frac{C_g}{\delta} -1 < \infty.
\end{equation*}
Moreover, by \eqref{eq:ass:ODE:f limit} and the fact that $\lim_{n \to \infty} w(n) + K \gamma_c n = -\infty$, we obtain
\begin{equation*}
\lim_{n \to \infty} |w(n) + K \gamma_c n| \big(F(w(n)+ K \gamma_c n)\big)^{\frac{(K-1)\gamma_c}{\gamma_d + K \gamma_c} \delta_-^2} =0.
\end{equation*}
Combining these two limits gives \eqref{eq:pf:thm:ODE:limit} and thereby completes the proof.
\end{proof}

\begin{remark}\label{rem:recipe}
The upper and lower solutions constructed in the proof of Theorem~\ref{thm:ODE}] can be used to solve the ODE~\ref{eq:lem:dealer:oli:unique:ODE} numerically as follows:
\begin{enumerate}
\item[(a)] Choose the optimal values of the constants $\varepsilon_v$, $\varepsilon_w$ by solving the quadratic equations obtained by setting the right-hand sides of~\eqref{eq:pf:thm:ODE:v'eps:estimate}  and \eqref{eq:pf:thm:ODE:w'eps:estimate}  to zero.
\item[(b)] With these values of $\varepsilon_v$, $\varepsilon_w$ and the explicit functions $v$, $w$ from \eqref{eq:pf:thm:ODE:v}, \eqref{eq:pf:thm:ODE:w}, the functions $v_{\varepsilon_v}$ and $w_{\varepsilon_w}$ from \eqref{eq:veps}, \eqref{eq:weps} are given in closed form and in turn provide upper and lower for the exact solution of the ODE~\eqref{eq:lem:dealer:oli:unique:ODE}.
\item[(c)] Starting from these upper and lower bounds at some negative and positive values $n_{-}$ and $n_+$, solve the~\ref{eq:lem:dealer:oli:unique:ODE} on $[n_{-},0]$ and $[0,n_+]$ with a standard  ODE solver for uniformly Lipschitz ODEs. This in turn leads to upper and lower bounds for the exact solution, as depicted in Figure~\ref{fig:upperlower}. Already for moderate values of $n_{-}$, $n_{+}$, these upper and lower solutions converge very quickly. They therefore provide extremely accurate bounds for the exact solution and, in particular, its value at $0-$ and $0+$ that are crucial for the application of the Verification Theorem~\ref{thm:oli:ex}.
\end{enumerate}
\end{remark}

Finally, we prove the Verification Theorem~\ref{thm:oli:ex}, which ensures that solution to the  ODE~\eqref{eq:lem:dealer:oli:unique:ODE} indeed identifies a Nash equilibrium. 

\begin{proof}[Proof of Theorem~\ref{thm:oli:ex}] We prove the result under the weaker Assumptions \ref{ass:mon},
			\ref{ass:ODE:ex}, \ref{ass:ODE:uni} and \ref{ass:f:oli}. The idea of the proof is to show by a direct argument that given the candidate price schedule $P_*$ for dealers $k = 2, \ldots, K$, any deviation for dealer $k = 1$ from the candidate $P_*$ is suboptimal, i.e., $K^{P_*}(P_1)  \leq K^{P_*}(P^*)$. To this end, we write $K^{P_*}(P_1) = \int_{\RR} \eta (n, (P'_*)^{-1}(P'_1(n)), P'_1(n))  \dd n$ for a suitable function $\eta: \RR^3 \to \RR$ and establish the pointwise optimality
$$\eta \Big(n, (P'_*)^{-1}(P'_1(n)), P'_1(n)\Big) \leq \eta \Big(n, (P'_*)^{-1}(P'_*(n)), P'_*(n)\Big), \quad n \neq 0.$$
 Given the bid-ask spread at $n=0$ zero, this is rather delicate. 
A similar sufficient optimality condition also appears in \cite[Equation (10)]{back.baruch.13}, but is only verified in a number of concrete examples, e.g. normally-distributed client types.
		
Let $P_1$ be an admissible price schedule for $K$ dealers and set
\begin{equation}
\lim_{n \to -\infty} P'_1(n) =: \ell_1 \quad \mbox{and} \quad  \lim_{n \to \infty} P'_1(n) =: r_1.
\end{equation}
Since $\lim_{n \to -\infty} P'_*(n) = -\infty$ and $\lim_{n \to \infty} P'_*(n) = \infty$ by the proof of Theorem \ref{thm:ODE}, $\bar \ell$ and $\bar r$ from \eqref{eq:ol:lr} satisfy $\bar \ell  = \ell_1$ and $\bar r = r_1$. Hence, $P_1$ is automatically compatible with $P_*$ by Theorem \ref{thm:schedules het}. Set ${\bf P} = (P_1, P_*, \ldots, P_*)$ and 
$$p_* = P'_*, \quad p_1 = P'_1.
$$
Now setting $H(x) = \int_{-\infty}^x h(x) dx$ and $\bar H(x) = \int_x^\infty h(x) dx$, the same calculations as in \eqref{eq:oli:pos} give
\begin{align}
K^{P_*}(P_1) &= \int_{-\infty}^{0} \eta_-\Big(n, (p_*)^{-1}(p_1(n)), p_1(n)\Big) \dd n + \int_{-\infty}^{0} \eta_+\Big(n, (p_*)^{-1}(p_1(n)), p_1(n)\Big) \dd n,
\end{align}
where $\eta_-, \eta_+ : \RR^3 \to \RR$ are given by
\begin{align}
\label{eq:pf:thm:oli:ex:eta-}
\eta_-(n, x, z) &= (\gamma_d n - z)F(z + \gamma_c n + \gamma_c(K-1) x) + H(z + \gamma_c n + \gamma_c(K-1) x), \\
\eta_+(n, x, z) &= -(\gamma_d n - z)\ol F(z + \gamma_c n + \gamma_c(K-1) x) - \ol H(z + \gamma_c n + \gamma_c(K-1) x).
\label{eq:pf:thm:oli:ex:eta+}
\end{align}
To establish optimality of $P_*$, it suffices to establish pointwise optimality, that is,
\begin{align}
\eta_-\left(n, n, p_*(n)\right) &\geq  \eta_-\left(n, (p_*)^{-1}(p_1(n)),p_1(n)\right) , \quad n \in (-\infty, 0), 
\label{eq:pf:thm:oli:ex:eta- ineq}\\
\eta_+\left(n, n, p_*(n)\right) &\geq  \eta_+\left(n, (p_*)^{-1}(p_1(n)),p_1(n)\right) , \quad n \in (0, \infty). \label{eq:pf:thm:oli:ex:eta+ ineq}
\end{align}
We only establish \eqref{eq:pf:thm:oli:ex:eta- ineq}; \eqref{eq:pf:thm:oli:ex:eta+ ineq} follows by a similar argument.

We first derive some preliminary estimates on derivatives of the function $\eta_-$. To this end, define the functions  $A, B_-, B_+: \RR^3 \to \RR$ by\footnote{In view of the proof of Theorem \ref{thm:ODE}, this is a slight abuse of notation. But this is justified as $A(n, x, z)$ coincides with $A(n, z)$ from Theorem \ref{thm:ODE} for $x = n$, and the same is true for $B_-$ and $B_+$.}
\begin{align*}
A(n, x, z) &=(\gamma_d + \gamma_c) n + (K-1)x - (\id - g)(z + \gamma_c n + \gamma_c (K -1) x), \\
B_-(n, x, z) &= -(\gamma_d + \gamma_c) n- (K-1)x + \left(\id + \frac{F}{f} - g\right)(z + \gamma_c n + \gamma_c (K -1) x), \\
B_+(n, x, z) &= -(\gamma_d + \gamma_c) n- (K-1)x + \left(\id - \frac{\ol F}{f} - g\right)(z + \gamma_c n + \gamma_c (K -1) x).
\end{align*}
It follows from \eqref{eq:ass:ODE:g}, \eqref{eq:ass:ex:f:f-} and \eqref{eq:ass:ex:f:f+} that
\begin{align}
\label{eq:pf:thm:oli:ex:dA dn}
\frac{\partial}{\partial n} A(n, x, z)  &\geq (\gamma_d + \gamma_c) - \gamma_c \left(1 + \frac{\gamma_d}{K \gamma_c} \right) \geq 0, \\
\frac{\partial}{\partial n} B_-(n, x, z)  &\leq -(\gamma_d + \gamma_c) + \gamma_c \left(1 + \frac{\gamma_d}{\gamma_c} \right) = 0,  \quad \text{ if } z + \gamma_c n + \gamma_c (K -1) x \leq p^*(0-),
\label{eq:pf:thm:oli:ex:dB- dn} \\
\label{eq:pf:thm:oli:ex:dB+ dn}
\frac{\partial}{\partial n} B_+(n, x, z)  &\leq -(\gamma_d + \gamma_c) + \gamma_c \left(1 + \frac{\gamma_d}{\gamma_c} \right) = 0,  \quad \text{ if } z + \gamma_c n + \gamma_c (K -1) x \geq p^*(0+).
\end{align}
Also note that
\begin{equation}
\label{eq:pf:thm:oli:ex:B ineq}
B_+(n, x, z) = B_-(n, x, z) - \frac{1}{f}(z + \gamma_c n + \gamma_c (K -1) x) \leq B_-(n, x, z).
\end{equation}
The above implies that
\begin{equation}
\label{eq:pf:thm:oli:ex:B 0 0}
B_-(n, 0, z) \geq B_-(0, 0, y_-) = 0 \quad  \text{if }  z \geq p_*(0-).
\end{equation}
Indeed, the equality in \eqref{eq:pf:thm:oli:ex:B 0 0} follows from the fact that $y_- = (\id + F/f - g)^{-1}(0) $. For the inequality in \eqref{eq:pf:thm:oli:ex:B 0 0}, recall that $y_- \leq p_*(0-)$ by Theorem \ref{thm:ODE}. We distinguish two cases: First, if $z+ \gamma_c n \leq y_- \leq p_*(0-)$, \eqref{eq:pf:thm:oli:ex:dB- dn} and positivity of $\id + F/f - g$ on $[y_-, \infty)$ give 
\begin{equation*}
	B_-(n, 0, z) \geq B_-(0, 0, z) \geq B_-(0, 0, y_-).
\end{equation*}
Next, if $z+ \gamma_c n > y_-$, there is $y_- < z' < z$ with $z'+ \gamma_c n= y_-$. Then by positivity of $\id + F/f - g$ on $[y_-, \infty)$, $B_-(n, 0, z) \geq B_-(n, 0, z')$, and for $z'$ the inequality follows as in the first case.

The importance of $A, B_-, B_+$ becomes clear when we note that the ODE \eqref{eq:lem:dealer:oli:unique:ODE} can be written as
\begin{equation}
p'_*(x) =
\begin{cases}
\dfrac{(K-1) \gamma_c A(x, x, p_*(x))}{B_-(x, x,  p_*(x))} & \text{ if } x < 0, \\
\dfrac{(K-1) \gamma_c A(x, x, p_*(x))}{B_+(x, x,  p_*(x))} & \text{ if } x > 0,
\end{cases}
\label{eq:pf:thm:oli:ex:pstar prime}
\end{equation}
and by the definition of $\eta_-$ in \eqref{eq:pf:thm:oli:ex:eta-},
\begin{align}
\frac{\partial}{\partial x} \eta_-(n, x, z) &= (K-1) \gamma_c f\left(z + \gamma_c n + \gamma_c (K -1) x\right) A(n, x, z), 
\label{eq:pf:thm:oli:ex:deta- dx}
\\
\frac{\partial}{\partial z} \eta_-(n, x, z) &= - f\left(z + \gamma_c n + \gamma_c (K -1) x\right) B_-(n, x, z).
\label{eq:pf:thm:oli:ex:deta- dz}
\end{align}

After these preparations fix $n \in (-\infty, 0)$ and set $x := (p_*)^{-1}(p_1(n))$ and $z := p_1(n)$. We shall distinguish the two cases $x \leq n$ and $x > n$. For the latter, we have to consider the subcases $x \in (n, 0)$, $x = 0$ and $x > 0$. This is due due to the fact that price schedules are discontinuous at zero.

{\bf Case 1.} Let $x \leq n$. Then $z = p_*(x)$. By \eqref{eq:pf:thm:oli:ex:dA dn} and \eqref{eq:pf:thm:oli:ex:dB- dn}, we obtain for $\xi \leq n$
\begin{align*}
\frac{\partial}{\partial x} \eta_-(n, \xi, p^*(\xi)) &= (K-1) \gamma_c f\left(p^*(\xi) + \gamma_c n + \gamma_c (K -1) \xi \right) A(n, \xi, p^*(\xi)) \notag \\
&\geq (K-1) \gamma_c f\left(p^*(\xi) + \gamma_c n + \gamma_c (K -1) \xi\right) A(\xi, \xi, p^*(\xi)),
\\
\frac{\partial}{\partial z} \eta_-(n, x, p^*(\xi)) &= -f\left(p^*(\xi) + \gamma_c n + \gamma_c (K -1) \xi \right) B_-(n, \xi, p^*(\xi)) \notag \\
&\geq -f\left(p^*(\xi) + \gamma_c n + \gamma_c (K -1) \xi \right) B_-(\xi, \xi, p^*(\xi)).
\end{align*}
Combining this with the ODE~\eqref{eq:pf:thm:oli:ex:pstar prime} for $p_*$, we obtain
\begin{equation*}
\frac{\diff \eta_-}{\diff x}  (n,\xi, p_*(\xi)) +  p'_*(\xi) \frac{\diff \eta_-}{\diff z} (n,\xi, p_*(\xi)) \geq 0, \quad \xi \leq n.
\end{equation*}
We conclude that
\begin{align*}
\eta_-(n, x, p_*(x)) &= \eta_-(n, n, p_*(n) - \int_x^n \left( \frac{\diff}{\diff x}  \eta_-(n,\xi, p_*(\xi)) +  p'_*(\xi) \frac{\diff}{\diff z}  \eta_-(n,\xi, p_*(\xi))\right) \dd \xi \\
&\leq \eta_-(n, n, p_*(n)).
\end{align*}

{\bf Case 2(a).} Let $n < x < 0$. Then $z = p_*(x)$, and a similar argument as in Case 1 gives 
\begin{equation*}
\frac{\diff \eta_-}{\diff x}  (n,\xi, p_*(\xi)) +  p'_*(\xi) \frac{\diff \eta_-}{\diff z} (n,\xi, p_*(\xi)) \leq 0, \quad n < \xi \leq n,
\end{equation*}
and hence 
\begin{align}
\eta_-(n, x, p_*(x)) &= \eta_-(n, n, p_*(n) + \int_x^n \left( \frac{\diff}{\diff x}  \eta_-(n,\xi, p_*(\xi)) +  p'_*(\xi) \frac{\diff}{\diff z}  \eta_-(n,\xi, p_*(\xi))\right) \dd \xi  \notag \\
&\leq \eta_-(n, n, p_*(n)).
\label{eq:pf:thm:oli:case 2}
\end{align}

{\bf Case 2(b).} Let $x = 0$. Then $p_*(0-) \leq z \leq p_*(0+)$, and taking limits in \eqref{eq:pf:thm:oli:case 2} for $x \uparrow 0$ gives
\begin{equation*}
\eta_-(n, n, p_*(n)) \geq \eta_-(n, 0, p_*(0-)).
\end{equation*}
Next, it follows from \eqref{eq:pf:thm:oli:ex:deta- dz} and \eqref{eq:pf:thm:oli:ex:B 0 0} that
\begin{equation*}
\eta_-(n, 0, p_*(0-)) \geq \eta_-(n, 0, z).
\end{equation*}
Combining these two estimates in turn gives
\begin{equation}
\eta_-(n, n, p_*(n)) \geq \eta_-(n, 0, z).
\label{eq:pf:thm:oli:case 3}
\end{equation}

{\bf Case 2(c).} Let $x > 0$. Then $p_*(x) =  z$ and \eqref{eq:pf:thm:oli:case 3} for $z = p(0+)$ give
\begin{equation*}
\eta_-(n, n, p_*(n)) \geq \eta_-(n, 0, p(0+)).
\end{equation*}
Moreover, by \eqref{eq:pf:thm:oli:ex:dA dn}, \eqref{eq:pf:thm:oli:ex:B ineq} and \eqref{eq:pf:thm:oli:ex:dB+ dn}, we obtain for $\xi > 0$,
\begin{align*}
\frac{\partial}{\partial x} \eta_-(n, \xi, p^*(\xi)) &\leq (K-1) \gamma_c f\left(p^*(\xi) + \gamma_c n + \gamma_c (K -1) \xi\right) A(\xi, \xi, p^*(\xi)),
\\
\frac{\partial}{\partial z} \eta_-(n, x, p^*(\xi)) &= -f\left(p^*(\xi) + \gamma_c n + \gamma_c (K -1) \xi \right) B_-(n, \xi, p^*(\xi)) \notag \\
&\leq -f\left(p^*(\xi) + \gamma_c n + \gamma_c (K -1) \xi \right) B_+(n, \xi, p^*(\xi)) \notag \\
&\leq -f\left(p^*(\xi) + \gamma_c n + \gamma_c (K -1) \xi \right) B_+(\xi, \xi, p^*(\xi)).
\end{align*}
Combining this with \eqref{eq:pf:thm:oli:ex:pstar prime} gives
\begin{equation*}
\frac{\diff \eta_-}{\diff x}  (n,\xi, p_*(\xi)) +  p'_*(\xi) \frac{\diff \eta_-}{\diff z} (n,\xi, p_*(\xi)) \leq 0, \quad \xi > 0.
\end{equation*}
We conclude that
\begin{align*}
\eta_-(n, x, p_*(x)) &= \eta_-(n, 0, p_*(0+) + \int_0^x \left( \frac{\diff}{\diff x}  \eta_-(n,\xi, p_*(\xi)) +  p'_*(\xi) \frac{\diff}{\diff z}  \eta_-(n,\xi, p_*(\xi))\right) \dd \xi \\
&\leq \eta_-(n, 0, p_*(0+)).
\end{align*}
Combining the above estimates in turn gives
\begin{equation*}
	\eta(n, n, p_*(n)) \geq \eta_-(n, x, p_*(x)).
\end{equation*}
Putting everything together establishes pointwise optimality  in \eqref{eq:pf:thm:oli:ex:eta- ineq} and thereby completes the proof.
\end{proof}

\begin{proof}[Proof of Remark~\ref{rem:normaoli}]
Using the notation of Example \ref{ex:normal}, define $h_-, h_+ : \RR \to \RR$ by
	\begin{align*}
h_-(y) &:= \beta \sigma_Y \frac{f}{F}(y) + y  - \mu_Y - \gamma_c \mu_M, \\
h_+(y) &:= -\beta \sigma_Y \frac{f}{\ol F}(y) + y  - \mu_Y - \gamma_c \mu_M.
\end{align*}
Set $\tilde y_- :=  \inf\{y \in \RR: h_-(y) = 0\}$ and $\tilde y_+ :=  \inf\{y \in \RR: h_+(y) = 0\}$. We proceed to show that under condition \eqref{eq:rem:dealer:oli:norm:est}, 
\begin{align}
0 <	h'_- &\leq 1 \quad  \text{on }   (-\infty, \tilde y-], &\text{and} && 0 < h'_+ &\leq 1 \quad  \text{on }   [\tilde y+,\infty), \label{eq:rem:dealer:oli:norm:est:h-} \\
	 (F/f)' - \beta &\leq 0 \quad \text{on } (-\infty, \tilde y_-] & \text{and} && (\bar F/f)' - \beta &\leq 0\quad \text{on } [\tilde y_+, \infty). \label{eq:rem:dealer:oli:norm:est:equil} 
\end{align}
We only establish the first parts of \eqref{eq:rem:dealer:oli:norm:est:h-} and \eqref{eq:rem:dealer:oli:norm:est:equil}. The proof for the second parts are analogous. Set $\tilde z_- := \tfrac{\tilde y_- -\mu_Y}{\sigma Y}$. Then the scaling properties and the symmetry of the normal distribution imply that the first parts of \eqref{eq:rem:dealer:oli:norm:est:h-} and \eqref{eq:rem:dealer:oli:norm:est:equil} are equivalent to 
\begin{align}
0 \geq  \beta \left(\frac{\phi}{\Phi}\right)'(z) &> -1 \quad \text{on }   (-\infty, \tilde z-], \label{eq:rem:dealer:oli:norm:est:h-:2} \\
1 + z\frac{\Phi}{\phi}(z)  &\leq \beta \quad \text{on }   (-\infty, \tilde z-]. \label{eq:rem:dealer:oli:norm:est:equil:2} 
\end{align}
Since $\beta \in (0,1)$ and $(\frac{\phi}{\Phi})' \in (0, 1)$ on $\RR$, \eqref{eq:rem:dealer:oli:norm:est:h-:2} is automatically satisfied, and since $(\frac{\Phi}{\phi})'$ is increasing on $\RR$, \eqref{eq:rem:dealer:oli:norm:est:equil:2} is equivalent to 
\begin{align}
 1 + \tilde z_-\frac{\Phi}{\phi}(\tilde z_-)  \leq \beta. \label{eq:rem:dealer:oli:norm:est:equil:3} 
\end{align}
To establish \eqref{eq:rem:dealer:oli:norm:est:equil:3}, note that the second part of \eqref{eq:rem:dealer:oli:norm:est} together with the fact that $\frac{\gamma_c |\mu_M|}{\sigma_Y} \leq 1$ by the first part of \eqref{eq:rem:dealer:oli:norm:est} and the definition of $z_{\mathrm{oli}}(\beta)$ yield
\begin{align*}
\beta\frac{\phi}{\Phi}\left(z_{\mathrm{oli}}(\beta)\right) + z_{\mathrm{oli}}(\beta) -\frac{\gamma_c \mu_M}{\sigma_Y} &\geq \frac{\phi}{\Phi}\left(z_{\mathrm{oli}}(\beta)\right) \left(\beta + \frac{\Phi}{\phi}(z_{\mathrm{oli}}(\beta))\left(z_{\mathrm{oli}}(\beta) -\frac{\gamma_c |\mu_M|}{\sigma_Y} \right)\right) \\
&\geq \frac{\phi}{\Phi}\left(z_{\mathrm{oli}}(\beta)\right)\left(\beta + \frac{2 \beta - 1}{\frac{\gamma_c |\mu_M|}{\sigma_Y} } \left(z_{\mathrm{oli}}(\beta) -\frac{\gamma_c |\mu_M|}{\sigma_Y} \right)\right) \\
&\geq \frac{\phi}{\Phi}\left(z_{\mathrm{oli}}(\beta)\right)\left(\beta - \beta c \right) \geq 0 =\beta\frac{\phi}{\Phi}\left(\tilde z_-\right) + \tilde z_- -\frac{\gamma_c \mu_M}{\sigma_Y}.
\end{align*}
Hence, $z_{\mathrm{oli}}(\beta) \geq \tilde z_-$ by the definition of $\tilde z_-$. Next, using that $\frac{\Phi}{\phi}(\tilde z_-) \geq \frac{\beta}{\frac{\gamma_c |\mu_M|}{\sigma_Y} - \tilde z_-}$ by definition of $\tilde z_-$ and using that $\tilde z_- \leq z_{\mathrm{oli}}(\beta) \leq 0$ gives
\begin{align*}
	1 + \tilde z_-\frac{\Phi}{\phi}(\tilde z_-)  \leq 1 + \frac{\beta \tilde z_-}{\frac{\gamma_c |\mu_M|}{\sigma_Y} - \tilde z_-} \leq 1 -\beta +   \frac{\beta\frac{\gamma_c |\mu_M|}{\sigma_Y}}{\frac{\gamma_c |\mu_M|}{\sigma_Y} - \tilde z_-} \leq  1 -\beta +   \frac{\beta\frac{\gamma_c |\mu_M|}{\sigma_Y}}{\frac{\gamma_c |\mu_M|}{\sigma_Y} - z_{\mathrm{oli}}(\beta)} = \beta,
\end{align*}
and we have \eqref{eq:rem:dealer:oli:norm:est:equil:3}. Next, define the function
 $u: \RR \setminus \{0\} \to \RR$ by
	\begin{equation}
	\label{eq:rem:dealer:oli:norm}
u(n) := \begin{cases}
h_-^{-1}\Big( \gamma_c Kn\Big) - \gamma_c K n, &\text{ if } n < 0, \\
h_+^{-1}\Big( \gamma_c Kn\Big) - \gamma_c K n &\text{ if } n > 0.
	\end{cases}
\end{equation}
Then $u$ is continuously differentiable and nondecreasing by \eqref{eq:rem:dealer:oli:norm:est:h-}. Moreover, it satisfies the ODE
\begin{align}
u'(n) = \begin{cases}
\dfrac{K \gamma_c \left(- u(n) + g(u(n) + \gamma_c K n)\right)}{ \frac{F}{f} \left(u(n) + \gamma_c K n \right)- (- u(n) + g(u(n) + \gamma_c K n))}, &\text{ if  }n \in (-\infty,0), 
\\
\dfrac{\gamma_c K \left( - u(n) + g(u(n) + \gamma_c K n)\right)}{ -\frac{\ol F}{f} \left(u(n) + \gamma_c K n \right)- ( -u(n) + g(u(n) + \gamma_c K n))}, &\text{ if  } n \in (0,\infty). \end{cases}
\label{eq:rem:dealer:oli:norm:ODE-}
\end{align}
We only establish \eqref{eq:rem:dealer:oli:norm:ODE-} on $(-\infty, 0)$. To this end, fix $n < 0$ and set $x := u(n) + \gamma_c K n$. Using the definition of $u$, the identity $u(n) + \gamma_c K n = x =  h_-^{-1}( \gamma_c K n)$, the formula $h_-'(x) = -\beta (x -\mu_Y) \frac{f}{F} -\beta \sigma_Y \frac{f^2}{F^2} +1$ and the identity $(-h_- + \mathrm{id} - g)(x) =  -\beta \sigma_Y \frac{f}{F}(x) - \beta (x- \mu_Y)$, we obtain
\begin{align*}
	u'(n)  &= \frac{\gamma_c K} {h_-'(x)} - \gamma_c K n =  \frac{\gamma_c K} {-\beta (x -\mu_Y) \frac{f}{F} -\beta \sigma_Y \frac{f^2}{F^2} +1} - \gamma_c K n \\
	 &=  \frac{\gamma_c K \frac{F}{f}(x)} {-\beta (x -\mu_Y) -\beta \sigma_Y \frac{f}{F} +\frac{F}{f}(x) } - \gamma_c K n =\frac{K \gamma_c \frac{F}{f}(x)}{ - h_-(x) +  (\mathrm{id} - g)(x) + \frac{F}{f}(x) } - \gamma_c K n \\
&= \frac{K \gamma_c \left(h_-(x) - (\mathrm{id} - g)(x)\right)}{ - h_-(x) + \frac{F}{f}(x) +  (\mathrm{id} - g)(x)} = 	\frac{K \gamma_c \left(\gamma_c K n - (\mathrm{id} - g)(x)\right)}{ - \gamma_c K n  + \frac{F}{f}(x) +  (\mathrm{id} - g)(x)} \\
&= \frac{K \gamma_c \left(- u(n) + g(u(n) + \gamma_c K n)\right)}{ \frac{F}{f} \left(u(n) + \gamma_c K n \right)- (- u(n) + g(u(n) + \gamma_c K n))}, \quad  n \in (-\infty, 0).
\end{align*}

Finally, since $u$ is nondecreasing and satisfies the ODE \eqref{eq:rem:dealer:oli:norm:ODE-}, it follows that
\begin{align*}
&u'(n) - \frac{(K-1) \gamma_c \left(- u(n) + g(u(n) + \gamma_c K n)\right)}{ \frac{F}{f} \left(u(n) + \gamma_c K n \right)- (- u(n) + g(u(n) + \gamma_c K n))} = \frac{1}{K} u'(n) \geq 0, & n \in (-\infty,0),
\\
&u'(n) - \frac{\gamma_c (K-1) \left( - u(n) + g(u(n) + \gamma_c K n)\right)}{ -\frac{\ol F}{f} \left(u(n) + \gamma_c K n \right)- ( -u(n) + g(u(n) + \gamma_c K n))} = \frac{1}{K} u'(n) \geq 0, & n \in (0, \infty).
\end{align*}
Hence, on $(-\infty,0)$, $u$ is an upper solution to the ODE \eqref{eq:rem:dealer:oli:norm:ODE-}, and on $( 0, \infty)$, it is a lower solution. Thus, on $(-\infty, 0)$, we can replace the upper solution $w$ in the proof of Theorem \ref{thm:ODE} by the smaller and whence tighter upper solution $u$ and conclude that $P'_* \leq u$ on $(-\infty, 0)$. In particular, we have $P_*'(0-) \leq u(0-)$. A similar argument on $(0, \infty)$ gives $P'_* \geq u$ on $(0,\infty)$ and $P_*'(0+) \geq u(0+)$. Together with \eqref{eq:rem:dealer:oli:norm:est:equil}, this establishes \eqref{eq:ass:ex:f:f-}--\eqref{eq:ass:ex:f:f+}.
\end{proof}

\section{Auxiliary Calculus Results}

For lack of easy references, this appendix collects a number of calculus results that are used in the proofs.

\begin{lemma}
\label{lem:increasing}
Let $-\infty \leq a < b \leq +\infty$ and $f: (a, b) \to \RR$. Suppose that each $x \in (a, b)$ has an open neighbourhood $U_x \subset (a, b)$ such that, for all $y \in U_x$,
	\begin{equation}
	f(y) 
	\begin{cases} < f(x) &\text{if } y < x, \\
 > f(x) \quad &\text{if } y > x.
 \end{cases}
 \label{eq:lem:increasing}
	\end{equation}
	Then $f$ is  increasing on $(a, b)$.
\end{lemma}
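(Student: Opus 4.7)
The plan is to prove the contrapositive-free version directly: fix $x_1 < x_2$ in $(a,b)$ and show $f(x_1) < f(x_2)$. The natural tool is a supremum/connectedness argument inside $[x_1,x_2]$ anchored at $x_1$. Concretely, I would define
\[
A := \{\,t \in [x_1, x_2] : f(s) > f(x_1)\text{ for all } s \in (x_1, t]\,\}.
\]
Note $x_1 \in A$ vacuously, and applying the local property \eqref{eq:lem:increasing} at $x_1$ produces some $x_1 + \varepsilon \in A$, so $A$ is nontrivial. By construction $A$ is downward-closed in $[x_1,x_2]$, so it is an interval starting at $x_1$. Let $c := \sup A \in (x_1, x_2]$.

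The core of the argument is then to rule out $c < x_2$ and to show $c \in A$. I would split into the two possibilities for whether $A$ contains its supremum. If $A = [x_1, c]$ and $c < x_2$, the local property at $c$ gives some $c' > c$ with $f(c') > f(c) > f(x_1)$, and since $f(s) > f(x_1)$ already holds on $(x_1,c]$, this forces $c' \in A$, contradicting $c = \sup A$. If instead $A = [x_1,c)$, then by definition of $A$ there is some $s^* \in (x_1, c]$ with $f(s^*) \le f(x_1)$, and a quick check shows $s^*$ must equal $c$, i.e.\ $f(c) \le f(x_1)$. But then for $t \in U_c$ with $t < c$ sufficiently close to $c$ we have $t \in A$ so $f(t) > f(x_1)$, while the local property forces $f(t) < f(c) \le f(x_1)$ — a contradiction.

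Combining these two dichotomies, $c = x_2$ and $c \in A$, so $f(x_2) > f(x_1)$, giving the claim.

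The main obstacle I expect is purely bookkeeping: since the hypothesis does not require continuity, I cannot pass limits through $f$, and must instead exploit the two-sided local strict inequality \eqref{eq:lem:increasing} at the supremum $c$ to contradict \emph{both} possibilities (membership and non-membership of $c$ in $A$) in a uniform way. Handling the possibility that $A$ fails to contain its supremum is the subtle step; the trick is to use that any $t \in A$ just to the left of $c$ lies simultaneously in the local neighbourhood $U_c$, so the local comparison $f(t) < f(c)$ directly conflicts with $f(t) > f(x_1) \ge f(c)$.
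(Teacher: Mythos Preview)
Your argument is correct and follows the same supremum/connectedness idea as the paper's proof: the paper argues by contradiction, setting $I_{x_2}:=\{x\le x_2:f(x)>f(x_2)\}$ and analysing $\sup I_{x_2}$, whereas you work directly with $A=\{t:f>f(x_1)\text{ on }(x_1,t]\}$ and its supremum, but in both cases the two-sided local hypothesis at the supremum is what closes the argument. One small point to tighten in your Case~1: to conclude $c'\in A$ you need $f(s)>f(x_1)$ for \emph{every} $s\in(c,c']$, not just $s=c'$; simply choose $c'$ so that $(c,c']\subset U_c$, whence $f(s)>f(c)>f(x_1)$ for all such $s$.
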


\begin{proof}
Seeking a contradiction, suppose there are $x_1, x_2 \in (a, b)$ with $x_1 < x_2$ and $f(x_1) \geq f(x_2)$. Set $I_{x_2} := \{x \leq x_2: f(x) > f(x_2)\}$ and $\tilde x_1 := \sup I_{x_2}$. Let $U_{\tilde x_1}$ be an open neighbourhood of $\tilde x_1$ such that \eqref{eq:lem:increasing} is satisfied. By the definition of $\tilde x_1$, there is $y \in U_{\tilde x_1} \cap I_{x_2}$ with $y < \tilde x_1$ such that $f(x_2) < f(y) < f(\tilde x_1)$. Hence $\tilde x_1 \in I_{x_2}$. It follows from \eqref{eq:lem:increasing} that $\tilde x_1 = x_2$. Let $U_{x_2}$ be such that \eqref{eq:lem:increasing} is satisfied for $x_2$. Then by definition of $\tilde x_1$, there is $y \in U_{x_2} \cap I_{x_2}$ such that $f(y) < f(x_2)$. This yields the desired contradiction and therefore shows $f$ is indeed increasing as asserted.
\end{proof}

\begin{lemma}
\label{lem:int by parts}
Let $\ol F, G: [0, \infty) \to \RR$ be absolutely continuous functions. Suppose that $\ol F(x) = \int_x^\infty f(y) \dd y$ for some nonnegative Borel function $f$ and $G(x) = \int_0^x g(y) \dd y$ for some locally integrable Borel function $g$.  Moreover, suppose there exists a nonnegative and nondecreasing function $H: [0, \infty) \to [0, \infty)$ with $|G| \leq H$ such that $\int_0^\infty H(x) f(x) \dd x < \infty$. Then
\begin{equation*}
\int_0^\infty G(x) f(x) dx = -\int_0^\infty g(x) \ol F(x) dx.
\end{equation*}
\end{lemma}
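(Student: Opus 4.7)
The plan is to establish the identity through a truncation-and-limit argument: apply the standard integration-by-parts formula for absolutely continuous functions on compact intervals $[0,N]$, and then send $N\to\infty$, using the monotonicity of the dominating function $H$ to handle the boundary term at infinity.

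First I would note that $|G| \leq H$ together with $\int_0^\infty H(x) f(x)\dd x < \infty$ immediately gives $\int_0^\infty |G(x)| f(x)\dd x < \infty$, so the left-hand side is well-defined as an absolutely convergent integral. A symmetric observation will show that the right-hand side also converges once the rest of the argument is in place.

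Next, for each fixed $N > 0$, I would invoke the standard integration-by-parts formula for the product of the absolutely continuous functions $G$ and $\ol F$ on $[0,N]$. Using $\ol F'(x) = -f(x)$ a.e.\ and $G'(x) = g(x)$ a.e., together with $G(0) = 0$, this rearranges to
\[
\int_0^N G(x) f(x)\dd x = -G(N)\ol F(N) + \int_0^N g(x) \ol F(x)\dd x.
\]

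The main obstacle will be showing that the boundary term $G(N)\ol F(N)$ vanishes as $N\to\infty$, and this is precisely where the monotonicity of $H$ becomes essential. From $|G(N)| \leq H(N)$, $H$ nondecreasing, and $f \geq 0$, we have $H(N) f(y) \leq H(y) f(y)$ for all $y \geq N$, so
\[
|G(N)\ol F(N)| \leq H(N)\int_N^\infty f(y)\dd y \leq \int_N^\infty H(y) f(y)\dd y \xrightarrow{N\to\infty} 0,
\]
the last limit following from $Hf \in L^1([0,\infty))$. Combining this with the dominated convergence theorem on the left-hand side (dominating function $Hf$) allows me to pass to the limit in the truncated identity and obtain the asserted formula. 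Without the monotonicity assumption on $H$, the boundary term need not vanish, so this is really the only delicate point in what is otherwise a routine measure-theoretic computation; an alternative route via Fubini's theorem applied to $|g(x)| f(y) \mathds{1}_{\{x \leq y\}}$ followed by a decomposition $g = g^+ - g^-$ would yield the same conclusion but make the role of the monotonicity of $H$ less transparent.
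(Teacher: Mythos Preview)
Your proof is essentially the same as the paper's: truncate to $[0,N]$, integrate by parts, control the boundary term via $|G(N)|\,\ol F(N)\le H(N)\int_N^\infty f\le\int_N^\infty Hf\to 0$, and pass to the limit. The only variation is that the paper first splits $g=g^+-g^-$ so as to reduce to nondecreasing $G$ and then invokes monotone convergence, whereas you pass to the limit on the left directly by dominated convergence with majorant $Hf$; your route is arguably a touch cleaner since it sidesteps that decomposition.
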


\begin{proof}
We may assume without loss of generality that $G$ is nondecreasing. Indeed, otherwise write $G = G^\uparrow - G^\downarrow$, where $G^\uparrow(x) = \int_0^x g^+(y) \dd y$ and $G^\downarrow(x) = \int_0^x g^-(y) \dd y$, and use linearity of the integral.

Fix $y > 0$. Integration by parts gives
\begin{equation*}
\int_0^y G(x) f(x) dx = G(y) \ol F(y) -\int_0^y g(x) \ol F(x) dx.
\end{equation*}
Moreover, by the assumptions on $H$ it follows that 
$$G(y) \ol F(y) \leq H(y) \ol F(y) \leq \int_y^{\infty} H(x) f(x) dx$$
By the assumption on $H$, we may conclude that $\lim_{y \to \infty} = G(y) \ol F(y)$. Now the claim follows from monotone convergence.
\end{proof}

\begin{lemma}
\label{lem:nowhere dense}
Let $-\infty \leq a <  b \leq \infty$ and $f: (a, b) \to \RR$ be continuously differentiable. Then:
\begin{enumerate}
\item $f$ is increasing if and only if $\{f' < 0\} = \emptyset$ and $\{f' = 0\}$ is nowhere dense.
\item For each compact set $K \subset \{f' > 0\}$ and any continuously differentiable function $g: (a, b) \to \RR$ that is supported on $K$, there is $\epsilon' > 0$ such that $f + \epsilon g$ is increasing for all $\epsilon \in [-\epsilon', \epsilon']$.
\end{enumerate}
\end{lemma}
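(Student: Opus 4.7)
For part (a), I would first note that $\{f'=0\}$ is closed since $f'$ is continuous, so the ``nowhere dense'' condition is equivalent to this set having empty interior. For the forward direction, if $f$ is strictly increasing then one-sided difference quotients are nonnegative, yielding $f' \geq 0$; moreover, if some open interval $(c,d) \subset \{f'=0\}$, then $f$ would be constant on $(c,d)$, contradicting strict monotonicity. For the reverse direction, the mean value theorem shows $f' \geq 0$ implies $f$ nondecreasing; and if $f(x_1) = f(x_2)$ for some $x_1 < x_2$, the nondecreasing property forces $f$ to be constant on $[x_1, x_2]$, hence $f' \equiv 0$ on $(x_1, x_2)$, placing an open interval inside $\{f'=0\}$ and contradicting the nowhere-dense assumption.

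For part (b), I would apply (a) to $f + \epsilon g$. Note that this is meaningful only in the setting where $f$ itself is already increasing (the context in which (b) is invoked in the paper), so by (a) one has $f' \geq 0$ with $\{f'=0\}$ nowhere dense. By compactness of $K$ and continuity and positivity of $f'$ on $K$, set $\delta := \min_{x \in K} f'(x) > 0$; by compactness of $K$ and continuity of $g'$, set $M := \max_{x \in K} |g'(x)| < \infty$ (the case $M=0$ is trivial, take $\epsilon'$ arbitrary). Let $\epsilon' := \delta/(2M)$. A key point is that $\supp(g) \subset K$ together with openness of $(a,b) \setminus K$ forces $g$ to vanish on an open neighbourhood of every point outside $K$, so $g' \equiv 0$ on $(a,b) \setminus K$ (not merely $g$ itself). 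Consequently, for $|\epsilon| \leq \epsilon'$ we obtain $(f+\epsilon g)'(x) \geq \delta - \epsilon' M = \delta/2 > 0$ for $x \in K$ and $(f+\epsilon g)'(x) = f'(x) \geq 0$ for $x \in (a,b)\setminus K$. Thus $(f+\epsilon g)' \geq 0$ throughout $(a,b)$ and $\{(f+\epsilon g)' = 0\} \subset \{f'=0\} \setminus K \subset \{f'=0\}$, which is nowhere dense. Part (a) then yields that $f + \epsilon g$ is increasing.

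The main obstacles are modest: recognizing the equivalence of ``closed with empty interior'' and ``nowhere dense'' in (a), and recognizing that the compact-support hypothesis propagates from $g$ to $g'$ on the open complement of $K$ in (b). The remaining steps are routine compactness and mean-value-theorem arguments.
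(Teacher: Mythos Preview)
Your proof is correct and follows essentially the same approach as the paper's. Both argue part (a) via the mean value theorem / fundamental theorem of calculus and the observation that a closed set is nowhere dense iff it has empty interior; for part (b), both use compactness to bound $\min_K f'$ from below and $\max_K |g'|$ from above, then apply part (a). Your explicit remark that (b) tacitly requires $f$ itself to be increasing (which the paper's proof also uses without stating) and your justification that $g' \equiv 0$ off $K$ are helpful clarifications, but the overall argument is the same.
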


\begin{proof}
(a) Note that the set $\{f' < 0\}$ is open and $\{f' = 0\}$ is closed in $(a, b)$ as $f'$ is continuous. 
``$\Rightarrow$'': If $f$ is increasing it is in particular nondecreasing and hence $\{f' < 0\} = \emptyset$. Seeking a contradiction, suppose that $\{f' = 0\}$ is not nowhere dense. Then there is an nonempty open set $U \subset \{f' = 0\}$. Then there is $a < c < d<b$ such that $[c, d] \in U$. It follows from the fundamental theorem of calculus that $f(c) = f(d)$, and we arrive at a contradiction.

``$\Leftarrow$'': As $\{f' < 0\} = \emptyset$, it follows that $f$ is nondecreasing. Seeking a contradiction, suppose there is $a < c < d<b$ such that $f(c) = f(d)$. As $f$ is nondecreasing, this implies that $f$ is constant on $(c, d)$ and hence $(c, d) \in \{f' = 0\}$, whence  $\{f' = 0\}$ fails to be nowhere dense and we arrive at a contradiction.

(b) Fix a compact set $K \in \{f' > 0\}$ and any continuously differentiable function $g: (a, b) \to \RR$ that is supported on $K$. Set $c_1 := \inf_{x \in K} f'(x) > 0$ and $c_2 := \sup_{x \in K}|g'(x)|$. By compactness of $K$, continuity of $f'$ and $g'$ and the fact that $K \in \{f' > 0\}$, it follows that $c_1 > 0$ and $c_2 < \infty$. Set $\epsilon' := \frac{c_1}{|c_2| +1}$. Then if $\epsilon \in [-\epsilon', \epsilon']$,
\begin{equation*}
f'(x) + \epsilon g'(x) 
\begin{cases}
\geq c_1 - \frac{c_1}{|c_2| +1} c_2 > 0&\text{if  }x \in K, \\
= f'(x) &\text{if  }x \in (a, b) \setminus K.
\end{cases}
\end{equation*}
It follows that $\{f' + \epsilon g' < 0\} = \{f' < 0\} = \emptyset$ and $\{f' + \epsilon g' = 0\} = \{f' = 0\}$. Hence $f + \epsilon g$ is strictly increasing by part (a).
\end{proof}	
	
\begin{proposition}\label{prop:walter}
Let $I = [0, \infty)$ or $(-\infty, 0]$. Let $v, w: I \to \RR$ be differentiable functions with $v \leq w$ and set $\Gamma:= \{(x,y) \in I \times \RR: v(x) \leq y \leq w(x)\}$. Finally let $\alpha: \Gamma \to \RR$ be a continuous function such that the partial derivative $\alpha_y: \Gamma \to \RR$ is also continuous (up to the boundary). Then the differential equation
\begin{equation}
\label{eq:prop:walter:ODE}
y '(x)= \alpha(x,y(x))
\end{equation}
has global solution $\phi$ on $I$ that satisfies $v \leq \phi \leq w$ if either $I = [0, \infty)$ and  
\begin{equation}
\label{eq:prop:walter:R+}
w'(x) - \alpha(x,w(x)) \leq 0 \leq v'(x) - \alpha(x,v(x)),
\end{equation}
or $I = (-\infty, 0]$ and
\begin{equation}
\label{eq:prop:walter:R-}
v'(x) - \alpha(x,v(x)) \leq 0 \leq w'(x) - \alpha(x,w(x)).
\end{equation}
We call $v$ a \emph{lower} and $w$ an \emph{upper} solution to \eqref{eq:prop:walter:ODE}
\end{proposition}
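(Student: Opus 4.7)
The plan is to combine a clamping-type extension of $\alpha$ (to make the IVP globally solvable) with a comparison argument that confines the resulting solutions to the tube $\Gamma$. I focus on $I = [0, \infty)$; the case $I = (-\infty, 0]$ reduces to this via the time reversal $\tilde x = -x$, $\tilde\alpha(\tilde x, y) := -\alpha(-\tilde x, y)$, which precisely interchanges the two sign conditions. Extend $\alpha$ to $\tilde\alpha: I \times \RR \to \RR$ by clamping the $y$-argument into $[v(x), w(x)]$. Continuity of $\alpha$ on $\Gamma$ gives continuity of $\tilde\alpha$; continuity of $\alpha_y$ on $\Gamma$ together with non-expansiveness of the clamp makes $\tilde\alpha$ Lipschitz in $y$ on each $x$-strip of bounded length. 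Picard--Lindel\"of then yields unique global solutions of $y' = \tilde\alpha(x, y)$.

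For each $N \in \NN$, pick $y_N \in [v(N), w(N)]$ and let $\psi_N$ be the global solution of the extended ODE with $\psi_N(N) = y_N$. The crux is to show $v \leq \psi_N \leq w$ on $[0, N]$. If $\psi_N > w$ somewhere, take the maximal subinterval $(a, b) \subset [0, N]$ on which $\psi_N > w$ containing the violation; since $\psi_N(N) \leq w(N)$, one has $b \leq N$ and $\psi_N(b) = w(b)$ by continuity. On $(a, b)$ the clamp gives $\tilde\alpha(\cdot, \psi_N) = \alpha(\cdot, w)$, so the subsolution condition $w' \leq \alpha(\cdot, w)$ yields $\psi_N' - w' \geq 0$. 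Integrating, $(\psi_N - w)(b) - (\psi_N - w)(a) = \int_a^b (\psi_N' - w')\,\dd s \geq 0$, i.e., $-(\psi_N - w)(a) \geq 0$. Now $(\psi_N - w)(a) = 0$ if $a > 0$ (by continuity and maximality), whereas $(\psi_N - w)(a) > 0$ if $a = 0$ with $\psi_N(0) > w(0)$; the latter case gives an immediate contradiction, while the former forces the non-negative integrand to vanish, so $\psi_N \equiv w$ on $[a, b]$, against $\psi_N > w$ on $(a, b)$. The inequality $v \leq \psi_N$ follows symmetrically from the supersolution condition on $v$. Thus $\psi_N$ stays in $\Gamma$ on $[0, N]$ and, since the extended and original ODEs agree on $\Gamma$, $\psi_N$ solves the original ODE there.

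Finally, $\{\psi_N\}$ is uniformly bounded by $v, w$ and equicontinuous on each compact $[0, T] \subset [0, \infty)$, as $|\psi_N'|$ is bounded there by $\sup\{|\alpha(x, y)| : (x, y) \in \Gamma,\ x \leq T\} < \infty$. A diagonal Arzel\`a--Ascoli extraction yields a subsequence $\psi_{N_k}$ converging locally uniformly to a continuous $\phi: [0, \infty) \to \RR$ with $v \leq \phi \leq w$, and passing to the limit in $\psi_{N_k}(x) = \psi_{N_k}(0) + \int_0^x \alpha(s, \psi_{N_k}(s))\,\dd s$ identifies $\phi$ as a solution of the ODE on $[0, \infty)$. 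The principal obstacle is the confinement step: since the sign conditions are ``outward-pointing'' in the forward direction, a naive derivative-sign argument at the first exit point does not close, and it is the clamping construction together with the sub/supersolution structure that rules out tangential escape from $\Gamma$.
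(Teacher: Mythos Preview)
Your proof is correct and gives a self-contained argument, whereas the paper takes a much shorter route: it extends $\alpha$ from $\Gamma$ to all of $I\times\RR$ so that both $\alpha$ and $\alpha_y$ remain continuous, then invokes a standard upper/lower-solution existence theorem from Walter's ODE text; the reduction of $I=(-\infty,0]$ to $I=[0,\infty)$ via time reversal is the same as yours.

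The substantive difference lies in the extension and the confinement step. The paper's $C^1$-preserving extension allows a direct citation of a black-box result. Your clamping extension is only Lipschitz in $y$, not $C^1$, so you cannot appeal to that result and instead supply the confinement argument by hand: you correctly observe that the sign conditions are outward-pointing in forward time, so you integrate backward from a terminal point at $x=N$, use the clamp to compute $\psi_N'$ exactly on any excursion interval, and deduce that $\psi_N-w$ is nondecreasing there with value $0$ at the right endpoint, forcing a contradiction. (In fact this monotonicity already yields $\psi_N-w\le 0$ on the excursion interval directly, so your subsequent case split on $a$ is unnecessary; in particular the subcase $a=0$ with $\psi_N(0)=w(0)$, which you did not list, is covered automatically.) The final Arzel\`a--Ascoli compactness step is standard and correct.

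What each approach buys: the paper's proof is two lines but requires the reader to locate and trust the cited theorem; yours is elementary and exposes exactly why the ``wrong-signed'' sub/supersolution inequalities still produce a tube-confined solution, at the cost of the extra compactness machinery to pass from $[0,N]$ to $[0,\infty)$.
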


\begin{proof}
Because $\Gamma$ is closed and $\alpha$ and $\alpha_y$ are continuous on $\Gamma$, we can extend $f$ to a continuous function $\bar \alpha: I \times \RR \to\RR$ with continuous partial derivate $\bar \alpha_y$ such that $\bar \alpha$ and $\alpha$ as well as $\bar \alpha_y$ and $\alpha_y$ coincide on $\Gamma$. Now \eqref{eq:prop:walter:R+} follows from \cite[Theorem \S 9 XIII]{walter:96}. Finally,  \eqref{eq:prop:walter:R-} can be reduced to \eqref{eq:prop:walter:R+} by setting $\tilde v(x) = v(-x)$, $\tilde w(x) = w(-x)$ and $-\tilde \alpha(x, y) = \alpha(-x, y)$.
	\end{proof}


\section{Log-Concave Distributions}

In this appendix, we list some well-known and not so well-known facts about log-concave distributions; see \cite{an.98} and \cite{saumard:wellner:14} for general overviews on log-concave distributions.

First, we recall some basic properties of log-concave distributions on the real line. 
\begin{proposition}
	\label{prop:log:concave:basics}
Let $f: \RR \to (0, \infty)$ be a log-concave probability density function. Denote by $F(x) = \int_{-\infty}^x f(y) \dd y$ and $\ol F(x) = \int_{-\infty}^x f(y) \dd y$, $x \in \RR$ its cumulative distribution function and survival function, respectively. Then:
\begin{enumerate}
\item both $F$ and $\ol F$ are log-concave as well;
\item there exist $C > 0$ and $\epsilon > 0$ such that $f(x) \leq C \exp(-\epsilon |x|)$ for all $x \in \RR$.
\item $f$ admits a right derivative $f'$ everywhere and there exists  $x^*  \in \RR$ such that $f' \geq 0$  on $(-\infty, x^*)$ and $f' \leq 0$ on $ [x^*, \infty)$.
\end{enumerate}
\end{proposition}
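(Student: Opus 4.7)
The plan is to exploit the concavity of $\varphi := \log f$, which is finite on all of $\RR$ since $f > 0$, and dispatch the three assertions in sequence. For part (a), I would invoke the Prékopa--Leindler inequality (equivalently, Prékopa's theorem on log-concavity of marginals) applied to
\begin{equation*}
F(x) = \int_\RR f(x + s) \mathbf{1}_{(-\infty, 0]}(s) \dd s.
\end{equation*}
The integrand is jointly log-concave in $(x, s) \in \RR^2$, since $(x, s) \mapsto f(x + s)$ is log-concave as the composition of a log-concave function with a linear map, and the indicator of a convex set is log-concave in the extended sense. Prékopa--Leindler then yields log-concavity of the marginal $F$, and replacing the indicator by $\mathbf{1}_{[0, \infty)}$ gives the analogous statement for $\ol F$.

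For part (b), the fact that $f > 0$ is a probability density forces $\varphi(x) \to -\infty$ as $x \to \pm \infty$; otherwise $f$ would be bounded below by a positive constant on some half-line and fail to be integrable. In particular, there exist $a_+ < b_+$ and $b_- < a_-$ with $\varphi(b_\pm) < \varphi(a_\pm)$. Concavity of $\varphi$ then gives the linear upper bound
\begin{equation*}
\varphi(x) \leq \varphi(b_+) + \tfrac{\varphi(b_+) - \varphi(a_+)}{b_+ - a_+}(x - b_+), \quad x \geq b_+,
\end{equation*}
with strictly negative slope, and analogously on $(-\infty, b_-]$ with strictly positive slope. Exponentiating yields exponential decay of $f$ at both ends, and choosing $\epsilon$ as the smaller of the two slope magnitudes together with a constant $C$ large enough to dominate $f$ on the compact middle interval $[b_-, b_+]$ (where $f$ is bounded by continuity of $\varphi$) gives the asserted global bound.

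For part (c), note that $\varphi$ is concave on $\RR$ and, by part (b), satisfies $\lim_{x \to \pm \infty} \varphi(x) = -\infty$; hence $\varphi$ attains a maximum at some $x^* \in \RR$ and is nondecreasing on $(-\infty, x^*]$, nonincreasing on $[x^*, \infty)$. One-sided derivatives of a finite concave function on an open interval exist everywhere and are monotone, so $\varphi'_+$ is well-defined and nonincreasing on $\RR$, and the chain rule for one-sided derivatives applied to $f = \exp \circ \varphi$ yields a right derivative $f'_+(x) = f(x) \varphi'_+(x)$ whose sign matches that of $\varphi'_+$: nonnegative on $(-\infty, x^*)$ and nonpositive on $[x^*, \infty)$. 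The main obstacle is really only part (a), which genuinely relies on Prékopa--Leindler; once that is taken as a black box, the remaining steps are short exercises in one-dimensional concave analysis.
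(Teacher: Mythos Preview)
Your proposal is correct. The paper's own proof simply cites \cite{an.98} for all three parts (Lemma~3 for (a), Corollary~1(ii) for (b), Proposition~1 for (c)), whereas you supply the underlying arguments directly: Pr\'ekopa--Leindler for the marginals, the secant-line bound from concavity of $\varphi=\log f$ for the exponential tails, and the one-sided derivative of a concave function for unimodality. The content is the same; your version is self-contained where the paper defers to the literature.
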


\begin{proof}
Part (a) follows from \cite[Lemma 3]{an.98}. Part (b) is a consequence of \cite[Corollary~1(ii)]{an.98} and the fact that $x \mapsto f(-x)$ is also log-concave. Existence of a right-derivative $f'$ follows from the fact that $\log(f)$ admits a right derivative everywhere since it is concave. Finally, the existence of $x^*$ is implied by the fact that $f$ is (strongly) unimodal by \cite[Proposition~1]{an.98}. 
\end{proof}

Next, we show that convolutions preserve log-concavity and yield additional regularity.\footnote{Note that \emph{both} $f$ and $g$ need to be log-concave: \cite[Proposition 16]{biais.al.00} is false; see \cite{miravete:02} for a counterexample.}
\begin{proposition}
	\label{prop:log concave:convolution}
Let $f, g: \RR \to (0, \infty)$ be log-concave probability density functions and let $f'$ denote the right derivative of $f$. Moreover, denote by $\id$ the identity function.
\begin{enumerate}
\item The convolution $f * g$ is again a log-concave probability density function, and continuously differentiable with bounded derivative $(f * g)' = f' * g$;
\item The convolution $f * (\id\, g)$ is integrable and continuously differentiable with bounded derivative $(f* (\id \,g))' = f' * (\id \,g)$.
\end{enumerate}

\begin{proof}
The first part of (a) follows from \cite[Proposition 4]{an.98}. 
	
For the remainder of (a) and (b), fix $x^*  \in \RR$ as in Proposition \ref{prop:log:concave:basics}(c). The fundamental theorem of calculus yields
\begin{align}
	\int_{-\infty}^\infty |f'(x)|\dd x &= \int_{-\infty}^{x^*} f'(x) \dd x - \int_{-\infty}^{x^*} f'(x) \dd x  = 2 f(x^*) < \infty,
\label{eq:pf:prop:log concave:convolution:f}
\end{align}
Since $g$ and $\id \, g$ are bounded Proposition \ref{prop:log:concave:basics}(b), the convolutions $f' * g$ and $f' * (\id \, g)$ are well-defined, continuous (by dominated convergence) and bounded. Now the result follows from the fundamental theorem of calculus and Fubini's theorem.
\end{proof}
\end{proposition}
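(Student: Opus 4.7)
The log-concavity assertion in part (a) is the classical statement that the convolution of two log-concave densities is log-concave, and I would simply cite \cite[Proposition~4]{an.98} (a consequence of Pr\'ekopa--Leindler) for it. The integrability of $f*(\id g)$ claimed in part (b) is also immediate from the fact that $L^1$ is closed under convolution, using $f \in L^1$ together with $\id\cdot g \in L^1$ (the latter coming from the exponential tail bound exploited below). The real content is the $C^1$-regularity of both convolutions together with the identification of their derivatives; my plan is to handle (a) and (b) in parallel via a single differentiation-under-the-integral argument.

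The first step is to assemble two basic bounds. From Proposition~\ref{prop:log:concave:basics}(c) the right derivative $f'$ exists everywhere and changes sign exactly at a unique mode $x^*$, i.e.\ $f' \geq 0$ on $(-\infty, x^*]$ and $f' \leq 0$ on $[x^*, \infty)$. Combined with the exponential decay $f(x) \leq Ce^{-\epsilon|x|}$ of Proposition~\ref{prop:log:concave:basics}(b), which in particular forces $f(\pm\infty) = 0$, the fundamental theorem of calculus applied separately on the two sides of $x^*$ gives $\int_{\RR} |f'(t)|\,dt = 2f(x^*) < \infty$, so $f' \in L^1(\RR)$. The same exponential tail bound applied to $g$ yields $\|g\|_\infty < \infty$, and since $|y|e^{-\epsilon|y|}$ is bounded and integrable, also $\|\,\id\cdot g\,\|_\infty < \infty$ and $\id\cdot g \in L^1(\RR)$.

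With these in hand, the convolutions $(f'*g)(x) = \int f'(y)g(x-y)\,dy$ and $(f'*(\id g))(x)$ are well-defined, bounded (by $\|f'\|_1\|g\|_\infty$ and $\|f'\|_1\|\,\id g\,\|_\infty$ respectively), and continuous in $x$ by dominated convergence (the integrand is dominated by an $L^1$ function, and $g$, $\id\cdot g$ are continuous because log-concave densities are continuous on the interior of their support, which here is all of $\RR$). The formulas $(f*g)' = f'*g$ and $(f*(\id g))' = f'*(\id g)$ then follow from Fubini plus the fundamental theorem. Since $\log f$ is finite and concave on $\RR$, it is locally Lipschitz, hence $f$ is locally absolutely continuous and $f(x-y) - f(a-y) = \int_a^x f'(t-y)\,dt$ holds for all $a, x, y$. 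Multiplying by $g(y)$ or $y g(y)$ and integrating in $y$, the estimate $\int_a^x \!\int |f'(t-y)||g(y)|\,dy\,dt \leq (x-a)\|f'\|_1\|g\|_\infty < \infty$ (and its analogue with $\id g$) justifies the interchange of integrals, producing
\begin{equation*}
(f*g)(x) - (f*g)(a) = \int_a^x (f'*g)(t)\,dt,
\end{equation*}
and similarly for $f*(\id g)$. Continuity of the integrands then upgrades this to everywhere-differentiability with the asserted formulas.

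The only mildly delicate point is that $f'$ denotes the \emph{right} derivative, which may have jumps at the countably many kinks of $\log f$. Local absolute continuity of $f$ sidesteps this: the a.e.\ derivative of $f$ agrees with $f'$ off a Lebesgue-null set, so the fundamental theorem of calculus applies unchanged. Everything else is a routine matter of Fubini and dominated convergence.
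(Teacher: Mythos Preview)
Your proof is correct and follows essentially the same approach as the paper: cite \cite{an.98} for log-concavity, use Proposition~\ref{prop:log:concave:basics}(c) and the fundamental theorem of calculus to get $\int_{\RR}|f'| = 2f(x^*) < \infty$, use the exponential tail bound from Proposition~\ref{prop:log:concave:basics}(b) to bound $g$ and $\id\,g$, and then conclude via dominated convergence plus Fubini and the fundamental theorem. Your write-up is in fact more careful than the paper's on two points---you explicitly justify $f(\pm\infty)=0$ and you address the right-derivative versus a.e.-derivative issue via local absolute continuity of $f$---but the underlying argument is the same.
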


Finally, we derive a refined version of Efron's theorem~\citep{efron:65} on the conditional mean of a log-concave random variable given the sum of this random variable and another independent log-concave random variable.
\begin{proposition}
		\label{prop:cond:mean}
		Let $U$ and $V$ be independent real-valued random variables with positive log-concave probability density functions $f_U$ and $f_V$. Set $W = U + V$. Then the conditional mean function 
		$$w \mapsto g(w) = E[U\,|\, W = w]$$ 
		 is continuously differentiable and satisfies $g' > 0$.
\end{proposition}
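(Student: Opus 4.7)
The plan is to express $g$ explicitly as a ratio of convolutions, invoke Proposition~\ref{prop:log concave:convolution} to get $g \in C^1$, and then rewrite $g'$ as a conditional covariance that is visibly positive by log-concavity of $f_V$.

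First, I would write $g$ out via Bayes' rule: the conditional density of $U$ given $W = w$ is $\rho(u \mid w) = f_U(u) f_V(w-u)/D(w)$, where $D(w) := (f_U * f_V)(w) = f_W(w)$. Hence $g(w) = N(w)/D(w)$ with $N(w) := \int_\RR u\, f_U(u) f_V(w-u) \dd u = ((\id \cdot f_U) * f_V)(w)$. Part (a) of Proposition~\ref{prop:log concave:convolution} (applied with the roles $f \leftrightarrow f_V$, $g \leftrightarrow f_U$) shows that $D$ is strictly positive and continuously differentiable with $D'(w) = \int f_V'(w-u) f_U(u) \dd u$; part (b) of the same proposition (applied to the same pair, using commutativity of convolution) shows that $N$ is continuously differentiable with $N'(w) = \int f_V'(w-u)\, u\, f_U(u) \dd u$. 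Consequently $g$ is continuously differentiable.

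Next I would set $\phi(u, w) := f_V'(w-u)/f_V(w-u) = (\log f_V)'(w-u)$ and use the quotient rule together with the identities above to obtain
\[
g'(w) = \frac{N'(w)}{D(w)} - g(w)\,\frac{D'(w)}{D(w)} = E\bigl[U\, \phi(U, w) \bigm| W = w\bigr] - E[U \mid W = w]\, E\bigl[\phi(U, w) \bigm| W = w\bigr] = \mathrm{Cov}\bigl(U, \phi(U, w) \bigm| W = w\bigr).
\]
Since $\log f_V$ is concave on $\RR$, its right derivative is nonincreasing, and since $u \mapsto w-u$ is decreasing, $u \mapsto \phi(u, w)$ is nondecreasing. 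Introducing an independent copy $U'$ of $U$ under the conditional law given $W = w$, the standard coupling identity $2\,\mathrm{Cov}(U, \phi(U, w) \mid W = w) = E\bigl[(U - U')(\phi(U, w) - \phi(U', w)) \bigm| W = w\bigr]$ then yields $g'(w) \geq 0$, because the integrand is pointwise nonnegative.

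The main obstacle I anticipate is upgrading this to \emph{strict} positivity. The coupling identity forces equality precisely when $\phi(U, w) = \phi(U', w)$ almost surely under the conditional law of $(U, U')$. Because the conditional density $\rho(\cdot \mid w)$ is strictly positive on all of $\RR$, this would force the nondecreasing function $\phi(\cdot, w)$, and hence $(\log f_V)'$ itself, to be constant on $\RR$, which would mean $f_V(z) = e^{az+b}$ for some $a, b \in \RR$. As no such function is integrable on $\RR$, this contradicts $f_V$ being a probability density, so $g'(w) > 0$ for every $w \in \RR$. (A minor technical wrinkle is that Proposition~\ref{prop:log:concave:basics}(c) only guarantees a right derivative of $f_V$; since $\log f_V$ is concave its right derivative is well defined and monotone, and the same coupling argument goes through verbatim.)
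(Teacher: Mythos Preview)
Your proof is correct and follows essentially the same approach as the paper: both differentiate the ratio $N/D$ through $f_V$ and then establish positivity via the symmetrization identity $2\,\mathrm{Cov}(X,h(X))=E[(X-X')(h(X)-h(X'))]$ combined with the monotonicity of $(\log f_V)'$ and the observation that an exponential density cannot be integrable on $\RR$. Your packaging of the argument as a conditional covariance is a bit more transparent than the paper's explicit double-integral symmetrization, but the mathematical content is identical.
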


\begin{proof}
First, $g$ is continuously differentiable since
\begin{equation*}
g(w) = \frac{\int_\RR (w- v)f_U(w -v) f_V(v) \dd v}{\int_\RR f_U(w-v) f_V(v) \dd v}
	\end{equation*}
and both the numerator and denominator are continuously differentiable by Proposition \ref{prop:log concave:convolution}, with  derivatives $((\id\, f_U) * f_V )' = (\id \,f_U) * f_V ' $ and $(f_U * f_V )' = f_U * f_V '$ respectively, where $f'_V$ denotes the right derivative of $f_V$.

To show that $g' > 0$, fix $w \in \RR$. Then Fubini's theorem gives
\begin{align}	
\label{eq:pf:prop:cond:mean:g'}
g'(w) &= \frac{\int_\RR \int_\RR (w-v_1) f_U(w-v_1) f_U(w-v_2) \big[ f'_V(v_1) f_V(v_2) -   f'_V(v_2) f_V(v_1)\big]\dd v_1 \dd v_2}{\int_\RR \int_\RR  f_U(w-v_1) f_U(w-v_2) f_V(v_1)f_V(v_2) \dd v_1 \dd v_2}.
\end{align}
To complete the proof, it remains to show that the numerator in \eqref{eq:pf:prop:cond:mean:g'} is positive. Using symmetry and  averaging over the first and second line for the third line, we obtain
\begin{align}
&\int_\RR \int_\RR (w-v_1) f_U(w-v_1) f_U(w-v_2) \big[ f'_V(v_1) f_V(v_2) -   f'_V(v_2) f_V(v_2)\big]\dd v_1 \dd v_2 \notag \\
&=\int_\RR \int_\RR (v_2-w) f_U(w-v_1) f_U(w-v_2) \big[ f'_V(v_1) f_V(v_2) -   f'_V(v_2) f_V(v_2)\big]\dd v_1 \dd v_2 \notag  \\
&= \frac{1}{2} 	\int_\RR \int_\RR (v_2-v_1) f_U(w-v_1) f_U(w-v_2) \big[ f'_V(v_2) f_V(v_1) -   f'_V(v_1) f_V(v_2)\big]\dd v_1 \dd v_2 \notag \\
&=\frac{1}{2} 	\int_\RR \int_\RR \left[(v_2-v_1)\left(\frac{f'_V(v_1)}{f_V(v_1)} - \frac{f'_V(v_2)}{f_V(v_2)}  \right)\right] f_U(w-v_1) f_U(w-v_2)  f_V(v_1) f_V(v_2) \dd v_1 \dd v_2. \label{eq:eq:pf:prop:cond:mean:g':num}
\end{align}
By log-concavity of $f$, the function $v \mapsto f'_V(v)/f_V(v)$ (which is the right derivative of $\log (f_V)$) is nonincreasing. This implies that $f'_V(v_1)/f_V(v_1) - f'_V(v_2)/f_V(v_2)$ is nonpositive for $v_1 \leq v_2$ and nonnegative for $v_1 \geq v_2$. Moreover, since $f$ is integrable, $v \mapsto f'_V(v)/f_V(v)$ is not constant. As a consequence,
\begin{equation*}
(v_2-v_1)\left(\frac{f'_V(v_1)}{f_V(v_1)} - \frac{f'_V(v_2)}{f_V(v_2)}  \right) \geq 0,
\end{equation*}
where for each $v_1 \in \RR$ sufficiently small, the inequality is strict if $|v_2 - v_1|$ is sufficiently large (because $v \mapsto f'_V(v)/f_V(v)$ is nonincreasing and not constant). Since $f_U(w-v_1) f_U(w-v_2)  f_V(v_1) f_V(v_2)$ is positive for each $v_1, v_2 \in \RR$, it follows that \eqref{eq:eq:pf:prop:cond:mean:g':num} is positive.
\end{proof}

\bibliographystyle{abbrvnat}
\bibliography{References}
\end{document}